\newtheorem{theorem}{Theorem}
\newtheorem{lemma}[theorem]{Lemma}
\newtheorem{definition}[theorem]{Definition}
\newtheorem*{lemintermediatePotts}{Lemma~\ref{lem:intermediatePotts}}
\newtheorem*{lemintermediateIsing}{Lemma~\ref{lem:intermediateIsing}}
\newtheorem*{lemPottsconstruction}{Lemma~\ref{lem:Pottsconstruction}}
\newtheorem*{lemspinconstruction}{Lemma~\ref{lem:2spinconstruction}}
\newtheorem*{lemintergadget}{Lemma~\ref{lem:intergadget}}
\renewcommand\arg{\text{arg}}
\def\calI{\mathcal{I}}
\def\Gc{\mathcal{G}}
\def\Tc{\mathcal{T}}
\def\Yc{\mathcal{Y}}
\def\Pc{\mathcal{P}}
\def\Ac{\mathcal{A}}
\def\oc{o}
\def\mc{m}
\def\equal{0}
\def\nequal{1}
\def\calG{\mathcal{G}}
\def\bit{\mathrm{bits}}
\def\two{\{0,1\}}
\newcommand{\pl}{\textup{\texttt{+}}}
\newcommand{\mi}{\textup{\texttt{-}}}
\newcommand{\plm}{\textup{\texttt{\textpm}}}
\newcommand{\mip}{\textup{\texttt{\raisebox{.1em}{%
	\reflectbox{\rotatebox[origin=c]{180}{\textpm}}}}}}
\def\lambdab{\ensuremath{\boldsymbol{\lambda}}}
\def\betab{\ensuremath{\boldsymbol{\beta}}}
\def\twospin{\mathrm{2spin}}
\def\Potts{\mathrm{Potts}}
\def\Mb{\ensuremath{\mathbf{M}}}
\def\Mc{\ensuremath{\mathcal{M}}}
\def\Oc{\ensuremath{\mathcal{O}}}
\def\Eb{\ensuremath{\mathbf{E}}}
\def\Ec{\ensuremath{\mathcal{E}}}
\def\Sc{\ensuremath{\mathcal{S}}}
\def\Uc{\ensuremath{\mathcal{N}}}
\newcommand{\EST}{\mathsf{EST}}
\def\Gc{\ensuremath{\mathcal{G}}}
\def\Susc{\#\ensuremath{\mathsf{Susc}}}
\def\SuscCubic{\#\ensuremath{\mathsf{SuscCubic}}}
\def\PottsCubic{\#\ensuremath{\mathsf{PottsCubic}}}
\def\IsingCubic{\#\ensuremath{\mathsf{IsingCubic}}}
\def\MagIsing{\#\ensuremath{\mathsf{MagnetIsingCubic}}}
\def\Obser{\#\ensuremath{\mathsf{Observable2Spin}}}
\newcommand{\fptas}{\mathsf{FPTAS}}
\newcommand{\fpras}{\mathsf{FPRAS}}
\newcommand{\eps}{\epsilon}
\def\prob#1#2#3{\goodbreak\begin{list}{}{\labelwidth\z@ \itemindent-\leftmargin
                        \itemsep\z@  \topsep6\p@\@plus6\p@
                        \let\makelabel\descriptionlabel}
                \item[\it Name]#1
               \item[\it Instance]                #2
                \item[\it Output]#3
                \end{list}}
\title{Approximating observables is as hard as counting}
\author{
Andreas Galanis\thanks{
  Department of Computer Science, University of Oxford, Wolfson Building, Parks Road, Oxford, OX1~3QD, UK.}
\and
Daniel \v{S}tefankovi\v{c}\thanks{Department of Computer Science, University of Rochester, USA. Research supported in part by NSF grant CCF-1563757.}
\and
Eric Vigoda\thanks{Department of Computer Science, University of California, Santa Barbara, CA 93106, USA.  
			Email: vigoda@ucsb.edu.
		Research supported in part by NSF grant CCF-2007022.}
}
\date{June 23, 2022}
\begin{document}

\maketitle
\thispagestyle{empty}
\begin{abstract}
We study the computational complexity of estimating local observables
for Gibbs distributions. A simple combinatorial example is
the average size of an independent set in a graph. In a recent work, we 
established NP-hardness of approximating the average
size of an independent set utilizing hardness of the corresponding
optimization problem and the related phase transition behavior. Here, we instead consider 
settings where the underlying optimization problem is easily solvable.  
 Our main contribution is to classify the 
complexity of approximating a  wide class of observables 
via  a generic reduction from approximate counting to the problem of estimating local
observables. The key idea is to  use the observables to interpolate the
counting problem.

Using this new approach, we are able to study observables on bipartite graphs
where the underlying optimization problem is easy but the counting problem is believed to be hard.
The most-well studied class of graphs that was excluded from previous hardness results were bipartite graphs.
We establish hardness for estimating the average size of the independent set in bipartite graphs
of maximum degree 6; more generally, we show tight hardness results for 
general vertex-edge observables for antiferromagnetic 2-spin systems
on bipartite graphs. Our techniques go beyond 2-spin systems, and for 
the ferromagnetic Potts model we establish hardness of approximating the number of monochromatic edges in the same region as known hardness of approximate counting results.
\end{abstract}

\newpage

\clearpage
 \setcounter{page}{1}

\section{Introduction}

Can we efficiently estimate the average size of an independent set in an input graph $G=(V,E)$?
Moreover, can we do so without utilizing a sampling algorithm for generating a random independent set?

In this paper, for a broad class of problems captured by Gibbs distributions,   we address the relationship between the computational complexity of approximating local 
observables  (such as estimating the average size of an independent set)  and the computational complexity of
approximating the partition function (such as estimating the total number of independent sets).  It is a standard technique in the area to reduce estimating observables to approximate counting, by first implementing an approximate sampler and then using an unbiased estimator of the desired observable. The focus of this paper is the converse, where there is no previously known technique to answer the following question: does an algorithm for local observables yield an algorithm for the partition function? We prove, in a broad setting, that these two genres of problems are 
computationally equivalent.

Previous work of \cite{averages} only achieved this indirectly, where the hardness of approximating local observables was established utilizing the hardness of \textsc{MaxCut} (in fact, only for a certain observable, called magnetization, see below for definitions). Here, we show a direct reduction from the observable problem to the partition-function problem, relating therefore more crisply the two problems. This allows us to obtain hardness results in several new regimes (in particular, not covered by \cite{averages}) where the counting problem is hard but there is no underlying hard optimization problem. 

An interesting setting to highlight the usefulness of our reduction is {\em bipartite} independent sets.  In this example there is no corresponding hard optimization problem (as the maximum independent set problem is poly-time solvable in bipartite graphs), and hence to prove hardness we need to utilize hardness of approximate counting results.   
Another pertinent example for our results are attractive graphical models, these are equivalent to ferromagnetic spin systems in statistical physics.  The simplest case is the ferromagnetic Ising model and its generalization known as the Potts model. In the Ising/Potts model on a graph (see Section~\ref{sec:pottsintro} for more precise definitions), the configurations of the model are the collection of labellings $\sigma$ of the vertices with $q$ spins (colours), each weighted as $\beta^{m(\sigma)}$ where $m(\sigma)$ is the number of monochromatic edges and $\beta$ is a parameter $>1$ (so that labellings with many monochromatic edges are favored). Because of the attractiveness assumption that $\beta>1$, once again, there is no corresponding hard optimization problem for this problem (contrast this with the case $\beta<1$ where the largest weight labellings have the smallest number of monochromatic edges). Nevertheless, using our new reduction, we show that hardness of the associated approximate counting problem implies hardness of estimating the (weighted) average of the monochromatic  edges in the Potts model.  

Our two illustrative examples, the average size of an independent set and the number of monochromatic edges in the Ising/Potts model, are instances of a local observable in statistical physics; specifically they correspond to the magnetization and susceptibility, respectively.  The behavior of observables is fundamental to the study of phase transitions, e.g., see~\cite{Baxter,SS}.

We begin  giving more precise definitions for our initial example of {\em bipartite} independent sets, before considering the ferromagnetic Potts model, and finally generalizing to arbitrary local observables in general 2-spin systems.
For a graph $G=(V,E)$  let $\calI_G$ denote the set of independent sets (of all sizes) of $G$, and let~$\mu:=\mu_G$ 
denote the uniform distribution over $\calI_G$.  
Denote the average independent set size by $\Mc(G)=\Eb_{\sigma\sim \mu}\big[|\sigma|\big]$; this corresponds
to the magnetization in statistical physics (and hence the choice of notation~$\Mc$). We say that an algorithm is an $\fpras$ for the 
average independent set size if 
given a graph $G=(V,E)$ and parameters $\epsilon,\delta>0$, 
the algorithm outputs an estimate $\EST$ which is within a multiplicative factor $(1\pm\epsilon)$ of the desired quantity $\Mc(G)$, 
with probability $\geq 1-\delta$, and runs in time
$\mathrm{poly}(|V|,1/\epsilon,\log(1/\delta))$.
One can also consider an $\fpras$ 
for estimating $|\Omega|$, the number of independent sets of the input graph $G$;
we refer to this as an efficient approximate counting algorithm.

It is a classical result \cite{JERRUM1986169} that an efficient
approximate counting algorithm is polynomial-time interreducible with an efficient algorithm for approximate
sampling from~$\mu$. In turn, efficiently estimating the average independent set size of a graph $G$ 
is easily reduced to approximate sampling from the uniform 
distribution~$\mu_G$. The challenging aspect, and the focus of this paper, is the \emph{reverse} implication. Can we estimate the typical size of an independent set without utilizing an approximate sampling algorithm? We will show it is \emph{not} possible, i.e., hardness of approximate counting (and hence approximate sampling) implies hardness of estimating the average independent set size.

For graphs of maximum degree $5$, Weitz~\cite{Weitz} presented an $\fptas$ for approximating the number of independent sets, which yields an efficient approximate sampling scheme; note an $\fptas$ is the deterministic analog of an $\fpras$, i.e., it achieves $\delta=0$.
Very recently, Chen et al.~\cite{CLV} proved that the simple MCMC algorithm known as the Gibbs sampler
(or Glauber dynamics) has $O(n\log{n})$ mixing time for this same class of graphs of maximum degree $5$.
Hence, one immediately obtains an $\fpras$ for the average independent set size~$\Mc(G)$.

On the other side, for graphs of maximum degree $6$, Sly~\cite{Sly10} proved that approximating the number
of independent sets is NP-hard, by a reduction from max-cut.  Schulman et al.~\cite{SSS} showed \#P-hardness for exact computation of the average independent set size.  Moreover, recent work of Galanis et al.~\cite{averages}
shows that approximating the average independent-set size is also NP-hard for graphs of maximum degree $6$.
The proof of~\cite{averages} does not directly relate approximate counting and estimating the average independent
set size; instead~\cite{averages} also shows a (more sophisticated) reduction from max-cut and utilizes the associated gadgets used in
Sly's inapproximability result~\cite{Sly10}.

This begets the question: are these problems still intractable when restricted to {\em bipartite graphs}?
For bipartite graphs there is no longer a hard optimization problem, such as max-cut, that one can utilize as a starting point for 
a hardness reduction.  However, approximately counting independent sets is  considered to be intractable 
on bipartite graphs of maximum degree $6$; in particular, it is  is \#BIS-hard~\cite{CAI} where \#BIS refers
to the problem of approximately counting independent sets on general bipartite graphs (with potentially unbounded degree).  
There are now a multitude of approximate counting problems which share the same \#BIS-hardness status or are even \#BIS-equivalent, e.g., see~\cite{DGGJ,CAI,Potts,GGJ}.

We present a general approach for reducing approximate counting to approximating averages.  This
yields hardness for approximating the average independent-set size in {\em bipartite graphs} of maximum degree 6.
\begin{theorem}\label{thm:main-IS}
Let $\Delta\geq 6$ be an integer.  There is no $\fpras$ for the average independent-set size on {\em bipartite} graphs of maximum degree $\Delta$ unless 
$\#\mathrm{BIS}$ admits an $\fpras$.
\end{theorem}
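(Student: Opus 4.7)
The plan is to give a polynomial-time reduction from $\#\mathrm{BIS}$ to the task of estimating the average independent-set size $\Mc$ on bipartite graphs of maximum degree~$6$. Starting from the known fact \cite{CAI} that $\#\mathrm{BIS}$ is already hard for bipartite graphs of bounded maximum degree, one may take as input a bipartite graph~$G$ of bounded maximum degree whose number of independent sets $|\calI_G|$ we want to approximate. After a preliminary bipartite-preserving reduction that leaves a free edge slot at each vertex of~$G$, the goal is to estimate $|\calI_G|$ using polynomially many oracle calls to~$\Mc$ on bipartite graphs of maximum degree~$6$.

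The core identity driving the reduction is the log-derivative formula
\[
\log|\calI_G| \;=\; \int_0^1 \frac{\Mc_G(\lambda)}{\lambda}\,d\lambda,
\]
where $\Mc_G(\lambda)$ denotes the expected size of an independent set drawn from the $\lambda$-weighted hard-core distribution on~$G$, and we use that $\Mc_G(0)=0$ and $Z_G(0)=1$. Hence $|\calI_G|$ can be recovered by evaluating $\Mc_G(\lambda)$ at sufficiently many activities~$\lambda$ and applying a numerical quadrature rule. Since the oracle only returns $\Mc$ at $\lambda=1$, alternative activities are simulated by gadget attachment: for each integer~$k$, construct a bipartite gadget $H_k$ of bounded maximum degree with a single port vertex, and let $G^{(k)}$ denote the graph obtained from~$G$ by attaching a copy of $H_k$ at every vertex via its port edge. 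Writing $Z_{H_k}^{\textup{in}}$ and $Z_{H_k}^{\textup{out}}$ for the partition functions of~$H_k$ conditioned on the port being in or out of the independent set, the marginal distribution of $G^{(k)}$ on~$V(G)$ is exactly the $\lambda_k$-weighted hard-core distribution with effective activity $\lambda_k := Z_{H_k}^{\textup{in}}/Z_{H_k}^{\textup{out}}$. Natural candidates for $H_k$ are truncations of the infinite binary tree, whose effective activities follow the classical hard-core tree recursion $\lambda_{k+1} = \lambda/(1+\lambda_k)^2$ and yield a collection of distinct values $\lambda_k \in (0,1)$ compatible with the bipartite and max-degree-$6$ constraints.

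An oracle approximation to $\Mc(G^{(k)})$ then yields an approximation to~$\Mc_G(\lambda_k)$ via an explicit affine relation: $\Mc(G^{(k)})$ decomposes as $\Mc_G(\lambda_k)$ plus the expected number of gadget vertices in the independent set, and the latter is an explicit linear function of the occupation probabilities $\Pr[v\in I]$ for $v\in V(G)$, hence of $\Mc_G(\lambda_k)$. Inverting this relation gives $\Mc_G(\lambda_k)$ up to a controlled multiplicative error. The change of variables $\lambda=\emm^{-t}$ then converts the target integral into $\int_0^\infty \Mc_G(\emm^{-t})\,dt$, so that geometrically-spaced activities correspond to uniformly-spaced sample points; a standard quadrature rule applied to these samples estimates $\log|\calI_G|$, and exponentiation returns the desired approximation to $|\calI_G|$.

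The principal difficulty I anticipate is matching the bipartite and max-degree-$6$ constraints with the demand for a sufficiently rich family of realisable activities. Bipartite hard-core gadgets produce effective activities confined to~$(0,1)$, and for parameters in the uniqueness regime of the tree recursion those activities converge rapidly to a single fixed point, so the family $\{H_k\}$ must be engineered so that the~$\lambda_k$ stay well-separated and sample a useful subinterval of~$(0,1)$. Beyond this design step, the error propagation is manageable: since $\Mc_G(\lambda) \leq |V(G)|$ uniformly and $\Mc_G(\lambda) = O(\lambda)$ as $\lambda \to 0$, truncating the $t$-integral at $T = O(\log(n/\epsilon))$ and setting the per-call oracle precision to $\epsilon/\mathrm{poly}(n)$ should yield a $(1\pm\epsilon)$-multiplicative approximation of $|\calI_G|$ in polynomial time, contradicting the assumed hardness of $\#\mathrm{BIS}$.
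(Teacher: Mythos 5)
Your high-level idea is exactly right, and it matches what the paper does in its Appendix~B proof of Theorem~\ref{thm:main2}: the identity $\tfrac{\partial\log Z_{G;\hat\lambda}}{\partial\hat\lambda}=\tfrac{1}{\hat\lambda}\Mc_{G;\hat\lambda}$ combined with numerical quadrature converts a magnetization oracle at many activities into a partition-function approximation. However, the whole technical weight of the theorem lies in how one realizes those many activities within the bipartite, max-degree-$6$ constraint, and this is where the proposal has genuine gaps that you flag but do not resolve.

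\emph{Degree budget.} You need a bipartite, max-degree-$6$ input on which $\#\mathrm{BIS}$ is hard (max degree $5$ is in the uniqueness regime and admits an $\fptas$ by Weitz), yet you also need a free edge slot at every vertex to hang a gadget. These two demands are in direct tension, and the ``preliminary bipartite-preserving reduction that leaves a free edge slot at each vertex'' is doing all the unexplained work. The paper sidesteps this by \emph{not} attaching gadgets directly to vertices of the hard $\#\mathrm{BIS}$ instance. Instead it replaces each vertex of a cubic bipartite graph $H$ by a large near-$\Delta$-regular bipartite \emph{phase gadget} (Section~\ref{sec:phase2spingadgets}) with $2t$ degree-$(\Delta-1)$ ports; the ports absorb both the cross-gadget connections and the attachments, so the final graph is genuinely of max degree $\Delta$.

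\emph{Range of realizable activities.} You note, correctly, that bipartite tree gadgets in the uniqueness regime have effective activities that converge rapidly to a single fixed point. What you do not supply is a way to escape this neighborhood, and with a single gadget per vertex there is in fact no way: the dense family of realizable effective fields given by Lemma~\ref{ledtwospin} lives in a \emph{constant-width} interval $[x^*-\tau,x^*+\tau]$, far from the full interval $(0,1)$ your integral requires. The paper's escape route is to attach $\ell_\pl$ copies of one field gadget and $\ell_\mi$ copies of another to the ports of each phase gadget; because the phase gadget's marginal on its ports is (approximately) a product measure, the effective Ising activity $\hat\lambda$ of the meta-graph is a \emph{product} of the form $\big(\tfrac{\cdots}{\cdots}\big)\big(\tfrac{\cdots}{\cdots}\big)^{\ell_\pl}/\big(\tfrac{\cdots}{\cdots}\big)^{\ell_\mi}$ (Lemma~\ref{lem:2spinconstruction}), and by tuning $\ell_\pl,\ell_\mi$ this product can reach any prescribed $\hat\lambda$. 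This requires many ports per meta-vertex, which is again supplied only by the phase gadget. A further consequence of passing through phase gadgets is that one must also handle the large, hard-to-analyze contribution of the phase gadgets themselves to the observable; this is what the subtraction trick with the pair $\Tc_1,\Tc_2$ (Theorem~\ref{thm:observablegadget}) is for, and your affine inversion would need an analogous mechanism once you leave the pure-tree world. In short, the proposal captures the interpolation skeleton but omits the phase-gadget and multi-gadget machinery that makes the interpolation actually implementable at degree $6$ on bipartite graphs; these are the paper's Lemmas~\ref{lem:intermediateIsing} and~\ref{lem:2spinconstruction}, and they constitute the bulk of the argument.
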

Note that the $\#\mathrm{BIS}$-hardness result of Theorem~\ref{thm:main-IS}  gives a  weaker guarantee than those shown in~\cite{averages} where they obtain in some cases constant-factor inapproximability results (using the constant-factor NP-hardness of the optimization problem). This difference is inherent with the $\#\mathrm{BIS}$-hardness assumption, i.e., that there is no $\fpras$ for $\#\mathrm{BIS}$.
Moreover, an algorithm which approximates $\#\mathrm{BIS}$ within any $\mathrm{poly}(n)$-factor implies an $\fpras$, and obtaining constant-factor inapproximability results for magnetization on bipartite graphs would require (among other things) hardness of $\#\mathrm{BIS}$ within an exponential-factor.

Our results extend to the hard-core model on weighted independent sets, and to general 2-spin antiferromagnetic models.  
These more general results are detailed in Section~\ref{sec:general-intro}.

\subsection{Ferromagnetic Potts Model}\label{sec:pottsintro}
Ferromagnetic spin systems, which are equivalent to attractive undirected graphical models, are an interesting class of models to illustrate our 
new proof technique on.  In ferromagnetic models there is no hard optimization problem as the maximum likelihood configurations are 
trivial assignments (setting all vertices to the same spin/label).  Consequently, to obtain hardness results for computing
averages in ferromagnetic models we need to work directly from hardness of approximate counting results, which we can do using our 
new approach.  

The most well-studied examples of ferromagnetic models are the Ising and Potts models.  
Given a graph $G$ and an integer $q\geq 2$, configurations of the Ising/Potts model are
the collection $\Omega$ of assignments $\sigma:V(G)\rightarrow [q]$ where $[q]=\{1,\hdots,q\}$ are the labels of the $q$ spins.
The case $q=2$ corresponds to the Ising model
and the case $q\geq 3$ is the Potts model.  The  models are parameterised by an edge activity\footnote{ We remark that $\beta$ is usually used to denote the so-called inverse temperature of the Potts model; here to have consistent notation with general 2-spin systems presented in Section~\ref{sec:general-intro} 
we take $\beta$ to be the exponent of the inverse temperature.} $\beta>0$.
 The weight of an assignment $\sigma$ is defined as
$w_{G;q,\beta}(\sigma)=\beta^{m_G(\sigma)}$ where $m_G(\sigma)=|\{(u,v)\in E: \sigma(u)=\sigma(v)\}|$ is the number of edges which are monochromatic in $\sigma$.  
Finally, the Gibbs distribution is defined as $\mu_{G;q,\beta}(\sigma) = w_{G;q,\beta}(\sigma)/Z_{G;q,\beta}$ where the normalising factor  $Z_{G;q,\beta}:=\sum_{\tau:V(G)\rightarrow [q]} w(\tau)$ is the partition function. In this paper, we restrict attention to the case $\beta>1$ 
which is the ferromagnetic (attractive) case, and hence
the most likely configurations are the $q$ monochromatic configurations  (all vertices are assigned the same spin).

For the Ising and Potts models, the analog of the average independent set size is the average number of vertices assigned spin $1$.  
This quantity $\Mc_{q,\beta}(G)$, known as the magnetization, is trivial in these cases since, due to the Ising/Potts models symmetry among spins, it holds that $\Mc_{q,\beta}(G)=n/q$.  
The simplest and most natural   observable to consider is the average number of monochromatic edges under the Potts distribution, i.e., the quantity 
\[\Sc_{q,\beta}(G):=\Eb_{\sigma\sim \mu_{G;q,\beta}}[ m_G(\sigma)]\] 
which is known as the \emph{susceptibility}.  
Sinclair and Srivastava~\cite{SinclairSrivastava} showed that exact computation of the susceptibility in the ferromagnetic Ising model is \#P-hard.

For the Ising model a classical result of Jerrum and Sinclair~\cite{JS:ising} presents an efficient sampling scheme for all $G$, all $\beta$.  
This yields an efficient algorithm for approximating averages in the Ising model (this holds for any local observables as defined subsequently in Section~\ref{sec:general-intro}).  In contrast for the Potts model (for any $q\geq 3$) approximating the partition function becomes 
computationally intractable for large $\beta$ as we detail below.

The Potts model with $q\geq 3$ spins undergoes a computational phase transition on 
bipartite graphs of maximum degree~$\Delta$ at
the following critical point $\beta_c(q,\Delta)=\frac{q-2}{(q-1)^{1-2/\Delta}-1}$.  
In \cite{PottsBIS} it was established that for all $q,\Delta\geq 3$ and $\beta>\beta_c(q,\Delta)$ approximating the partition function of the ferromagnetic Potts model  is \#BIS-hard on bipartite graphs of maximum degree $\Delta$.  Using our general counting-to-observables reduction we show that
 approximating the average number of monochromatic edges under the Potts distribution is as hard as 
approximating the partition function for the ferromagnetic Potts model.

\begin{theorem}\label{thm:main1}
Let $q,\Delta\geq 3$ be integers and $\beta> \beta_c(q,\Delta)$. 
There is no $\fpras$ for the susceptibility in the $q$-state Potts model on bipartite graphs of maximum degree $\Delta$, 
unless $\#\mathrm{BIS}$ admits an $\fpras$.
\end{theorem}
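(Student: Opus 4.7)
The plan is to reduce approximate evaluation of $Z_{G;q,\beta}$---which is $\#\mathrm{BIS}$-hard on bipartite graphs of maximum degree $\Delta$ when $\beta>\beta_c(q,\Delta)$ by~\cite{PottsBIS}---to $\fpras$ queries for the susceptibility on the same graph $G$ at varying activities $t\geq\beta$. The key identity, obtained by differentiating $\log Z_{G;q,t}=\log\sum_{\sigma}t^{m_G(\sigma)}$, is
\[
\frac{d}{dt}\log Z_{G;q,t}=\frac{\Sc_{q,t}(G)}{t}.
\]
Integrating from $\beta$ to $M\geq\beta$ yields
\[
\log Z_{G;q,\beta}=\log Z_{G;q,M}-\int_{\beta}^{M}\frac{\Sc_{q,t}(G)}{t}\,dt,
\]
so it suffices to evaluate $Z_{G;q,M}$ directly at a suitable $M$ and to numerically integrate the right-hand side using the hypothesized $\fpras$ for susceptibility.

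I take $M=q^{n+1}/\epsilon$. Then $Z_{G;q,M}$ is dominated by the $q$ monochromatic assignments, each of weight $M^{|E|}$, while the remaining configurations contribute at most $q^{n}M^{|E|-1}$ in total, so $Z_{G;q,M}=qM^{|E|}(1\pm\epsilon/2)$, computable in closed form. Since $\log M=\mathrm{poly}(n,\log(1/\epsilon))$, the substitution $s=\log t$ turns the integral into $\int_{\log\beta}^{\log M}\Sc_{q,e^s}(G)\,ds$ over an interval of polynomial length.

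I then discretize this interval into $K=\mathrm{poly}(n,1/\epsilon)$ uniformly spaced points $s_1<\cdots<s_K$ and, at each $s_k$, query the hypothesized $\fpras$ on the bipartite maximum-degree-$\Delta$ graph $G$ at activity $t_k=e^{s_k}$. Since every $t_k\in[\beta,M]\subset(\beta_c(q,\Delta),\infty)$, the $\fpras$ is applicable throughout; no gadget construction is needed and the bipartite and maximum-degree constraints on $G$ are preserved trivially, as $G$ itself is the input in every query. A direct computation gives
\[
\frac{d}{ds}\Sc_{q,e^s}(G)=t\cdot\frac{d}{dt}\Sc_{q,t}(G)\Big|_{t=e^s}=\mathrm{Var}_{\mu_{G;q,e^s}}\!\big(m_G(\sigma)\big)\leq|E|^2,
\]
showing that $\Sc_{q,e^s}(G)$ is Lipschitz in $s$; an analogous bound on higher derivatives ensures that a trapezoidal rule with $K$ points achieves additive error at most $\epsilon/4$ on the integral.

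The main obstacle is the error analysis. Each $\fpras$ call returns a $(1\pm\epsilon')$-approximation, so since $\Sc_{q,t}(G)\leq|E|$ this contributes additive error at most $\epsilon'|E|$ per point, which accumulates to $O(\epsilon'K|E|\log(M/\beta))$ over the integral. Choosing $\epsilon'=\epsilon/\mathrm{poly}(n,1/\epsilon)$ (and $K$ accordingly) ensures $\log Z_{G;q,\beta}$ is approximated to additive error $O(\epsilon)$, equivalently $Z_{G;q,\beta}$ is approximated to within a $(1\pm\epsilon)$ multiplicative factor; by~\cite{PottsBIS}, this yields an $\fpras$ for $\#\mathrm{BIS}$, establishing the theorem.
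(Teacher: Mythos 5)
Your proposal has a fundamental gap: it treats the edge activity as a free input to the hypothesized $\fpras$, but the statement being proven fixes the activity $\beta$ once and for all. Theorem~\ref{thm:main1} asserts that for each fixed $q,\Delta,\beta$ with $\beta>\beta_c(q,\Delta)$, there is no $\fpras$ for the function $G\mapsto\Sc_{q,\beta}(G)$; the only input is the graph. Your argument repeatedly queries the oracle ``on the same graph $G$ at varying activities $t\geq\beta$,'' claiming that ``no gadget construction is needed'' because $G$ is the input in every query. But the oracle you are assumed to have answers only at the single activity $\beta$ --- queries at $t_1,\dots,t_K\in(\beta,M]$ are simply not available. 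So the integral $\int_\beta^M \Sc_{q,t}(G)/t\,dt$ cannot be numerically evaluated from the hypothesis as stated, and the reduction collapses at its very first step.

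This is precisely the obstacle that the paper's Lemma~\ref{lem:intermediatePotts} and the entire gadget machinery (phase gadgets, paths, edge-interaction gadgets with controlled susceptibility gaps) are designed to overcome: one must \emph{simulate} a susceptibility oracle at a variable activity $\hat\beta$ on cubic bipartite graphs, given only an oracle at the fixed $\beta$ on bounded-degree bipartite graphs. Roughly, a vertex of $H$ is replaced by a large phase gadget, neighboring phase gadgets are joined by bundles of paths plus a probe gadget, and the ``shifted'' activity $\hat\beta$ is recovered from the difference of two oracle readings with matched effective interactions but distinct susceptibility gaps. Once that intermediate oracle for $\SuscCubic(q)$ is in hand, the paper does exactly the sort of interpolation/integration you propose (integrating $\tfrac{1}{\hat\beta}\Sc_{q,\hat\beta}(G)$, with endpoint $\hat\beta=1$ where $Z=q^n$ rather than your large-$M$ endpoint, which is a minor and perfectly valid variant). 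In short, the integration step of your proposal is sound, but it is the easy part; the missing content is the reduction from a variable-activity susceptibility oracle to the fixed-activity one, and that is where all the technical difficulty of the paper lives.
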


\subsection{General 2-spin systems}\label{sec:general-intro}
Theorem~\ref{thm:main-IS} for independent sets is a special case of a general result for arbitrary 2-spin antiferromagnetic systems.  
Such spin systems are parameterized by three parameters, $\beta,\gamma$ and~$\lambda$; the first two are edge activities and control the strength of the spin interactions between neighboring vertices, and the third is a vertex activity (a.k.a. external field) that favors one spin over the other.

More precisely, for a graph $G=(V,E)$, $\beta,\gamma\geq 0$ which are not both equal to zero and $\lambda>0$, let $\mu_{G;\beta,\gamma,\lambda}$ denote the Gibbs distribution on $G$ with edge activities $\beta,\gamma$ and external field $\lambda$, i.e., for $\sigma:V\rightarrow \{0,1\}$ we have 
\[\mu_{G;\beta,\gamma,\lambda}(\sigma)=\frac{\lambda^{|\sigma|}\beta^{m_{0}(\sigma)}\gamma^{m_{1}(\sigma)}}{Z_{G;\beta,\gamma,\lambda}},\]
where  $|\sigma|$ is the number of vertices with spin $1$, and  $m_{0}(\sigma), m_{1}(\sigma)$ denote the number of edges in $G$ whose endpoints are assigned under $\sigma$ the pair of spins $(0,0)$ and $(1,1)$, respectively.

The parameter pair $(\beta,\gamma)$ is called \emph{antiferromagnetic} if $\beta\gamma\in[0,1)$ and at least one of $\beta,\gamma$ is non-zero, and it is called ferromagnetic, otherwise.   Note that the hard-core model on independent sets weighted by $\lambda>0$ is the case $\beta=1,\gamma=0$ (under the convention that $0^0\equiv 1$).  Our earlier example of unweighted independent sets corresponds to the hard-core model with $\lambda=1$.
The antiferromagnetic Ising model is the special case $0<\beta=\gamma<1$.

Our results apply to general ``vertex-edge observables'' defined as follows.
\begin{definition}\label{def:observable}
Let $(\beta,\gamma)$ be antiferromagnetic and $\lambda>0$. For real numbers $a,b,c$, the \emph{$(a,b,c)$ vertex-edge observable} of a graph $G$ in the 2-spin system corresponding to $(\beta,\gamma,\lambda)$ is  given by
\[\Oc_{\beta,\gamma,\lambda}(G)=\Eb_{\sigma\sim \mu_{G;\beta,\gamma,\lambda}}\big[\oc_G(\sigma)\big], \mbox{ where } \oc_G(\sigma)=a|\sigma|+b m_0(\sigma)+c m_1(\sigma).\]
The observable is \emph{trivial on general graphs}  if any of the following hold: (i) $a=b=c=0$, (ii) $\beta=0$ and $a=c=0$, (iii) $\gamma=0$ and $a=b=0$, (iv) $\beta=\gamma$, $\lambda=1$ and $b+c=0$. We say that the observable is \emph{trivial on bipartite graphs}  if either any of the above hold, or $\beta=\gamma$ and $\lambda=1$.  Otherwise, we say that the observable is \emph{non-trivial}.

Notice that by setting 
 $(a,b,c)=(1,0,0)$ we obtain the magnetization $\Mc_{\beta,\gamma,\lambda}(G)$, which in the special case of the hard-core model with $\lambda=1$ is the average size of an independent set.
Furthermore, by setting $(a,b,c)=(0,1,1)$ we obtain the susceptibility, denoted by $\Sc_{\beta,\gamma,\lambda}(G)$, which is the average number of monochromatic edges. 
\end{definition}
The terminology ``trivial'' is applied liberally here and meant to convey that there is an efficient algorithm for the relevant parameters.  In particular, while cases (i)-(iii) are degenerate, case (iv) corresponds to the Ising model without an external field.  A classical (and highly non-trivial) result of Jerrum and Sinclair~\cite{JS:ising} presented an $\fpras$ for the ferromagnetic Ising model on any graph, any $\beta>1$.  Moreover, for bipartite graphs, the subcase $\beta<1$ (antiferromagnetic) can be reduced to an 
equivalent $\beta>1$ ferromagnetic system (by flipping the spins on one side of the bipartite graph).

We next define the range of parameters $(\beta,\gamma,\lambda)$ where our inapproximability results for vertex-edge observables apply; these are precisely the parameters where the hard-core and the antiferromagnetic Ising models exhibit non-uniqueness on the infinite $\Delta$-regular tree (for general 2-spin systems this threshold corresponds to
what is known as up-to-$\Delta$ non-uniqueness, which captures the computational phase transition).
\begin{definition}\label{def:nonuniq}
Let $\Delta\geq 3$ be an integer. We let $\Uc_{\Delta}$ be the set of $(\beta,\gamma,\lambda)$ such that $(\beta,\gamma)$ is antiferromagnetic, and the (unique) fixpoint $x^*>0$ of the function $f(x)=\frac{1}{\lambda} \big(\frac{ \beta x+1}{x+\gamma}\big)^{\Delta-1}$ satisfies $|f'(x^*)|>1$. The region $\Uc_{\Delta}$ is known as the \emph{non-uniqueness region}.
\end{definition}

Note there is an efficient sampling/counting algorithm for graphs of maximum degree $\Delta$, roughly\footnote{More precisely, the (strict) uniqueness region is defined as those $\beta,\gamma,\lambda)$ where the fixpoint $x^*$ in Definition~\ref{def:nonuniq} satisfies the (strict) inequality $|f'(x^*)|<1$. For certain monotonicity reasons, the algorithm for max-degree $\Delta$ graphs demands that $(\beta,\gamma,\lambda)$ lie in the intersection of these uniqueness regions for all degrees $d\leq \Delta$, see~\cite{LLY} for details.}, for $(\beta,\gamma,\lambda)$ outside the parameter region $\Uc_{\Delta}$ ~\cite{LLY,CLV}. Inside $\Uc_{\Delta}$, it is NP-hard to approximate the partition function on graphs of maximum degree~$\Delta$~\cite{SlySun}
and it is \#BIS-hard to approximate the partition function on \emph{bipartite} graphs of maximum degree~$\Delta$~\cite{CAI}.
We prove that it is also hard to compute any non-trivial vertex-edge observable in exactly this same region where the corresponding counting problem is hard.

\begin{theorem}\label{thm:main2}
Let $\Delta\geq 3$ be an integer and $(\beta,\gamma,\lambda)\in \Uc_{\Delta}$. Then, for any vertex-edge  observable  that is non-trivial on bipartite graphs, there is no $\fpras$ on bipartite graphs of maximum degree $\Delta$ 
unless $\#\mathrm{BIS}$ admits an $\fpras$.
\end{theorem}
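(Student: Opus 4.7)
The plan is to reduce from the problem of approximating $Z := Z_{G;\beta,\gamma,\lambda}$ on bipartite graphs of maximum degree $\Delta$ at the target antiferromagnetic activities, which is $\#\mathrm{BIS}$-hard throughout $\Uc_\Delta$ by~\cite{CAI}. The starting observation is the identity
\[\Oc_{\beta,\gamma,\lambda}(G) \;=\; a\,\lambda\,\partial_\lambda \log Z \;+\; b\,\beta\,\partial_\beta \log Z \;+\; c\,\gamma\,\partial_\gamma \log Z,\]
which identifies the $(a,b,c)$ vertex-edge observable with a fixed directional derivative of $\log Z$ in the multiplicative coordinates of $(\lambda,\beta,\gamma)$. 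If one could query this derivative along a curve joining $(\beta,\gamma,\lambda)$ to a base point where $Z$ factorises (e.g.\ $\beta=\gamma=1$ with an appropriate $\lambda$), integrating along the curve would recover $\log Z$ and therefore solve the hard counting problem.

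The oracle, however, is only available at the fixed activities. To simulate parameter variation at fixed activities, the second step is to construct a family of small bipartite gadgets $\{\Gamma_t\}_t$ of maximum degree at most $\Delta-1$, each carrying a designated port vertex. Attaching one copy of $\Gamma_t$ to every vertex of the input $G$ (respecting the bipartition) yields a bipartite max-degree-$\Delta$ graph $G_t$ in which the effective vertex activity has been shifted from $\lambda$ to some $\lambda_t$; similar edge gadgets placed along edges of $G$ allow one to shift the effective $\beta,\gamma$. One then obtains an equality of the form $Z(G_t;\beta,\gamma,\lambda) = C_t \cdot Z(G;\beta_t,\gamma_t,\lambda_t)$ for an explicitly computable prefactor $C_t$, and a companion decomposition of $\Oc(G_t)$ as a background term coming from the gadgets plus the desired combination of log-derivatives of $Z(G;\cdot)$ at the effective parameters. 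Querying the observable oracle on polynomially many gadget-augmented graphs thereby yields approximations of $\log Z(G;\beta_t,\gamma_t,\lambda_t)$ at many distinct points, from which $\log Z_G$ at the target activities can be reconstructed via polynomial interpolation (equivalently, finite-difference integration along a curve to the base point), giving an $\fpras$ for the partition function of $G$.

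The main obstacle is engineering the gadget family: the effective-parameter set $\{(\beta_t,\gamma_t,\lambda_t)\}_t$ must be rich enough to define a well-conditioned interpolation problem, be realised by small bipartite gadgets respecting the degree bound $\Delta$, and be compatible with the particular coefficient triple $(a,b,c)$. This is where the non-uniqueness assumption $(\beta,\gamma,\lambda) \in \Uc_\Delta$ is used a second time: because the tree recursion $f$ of Definition~\ref{def:nonuniq} has a repelling fixpoint at these activities, iterating short path-like gadgets realises a dense set of effective activities with quantitative control, a phenomenon that has been exploited previously in antiferromagnetic 2-spin inapproximability reductions. The non-triviality hypothesis in Definition~\ref{def:observable} is precisely what rules out the cases in which the directional derivative encoded by $(a,b,c)$ is determined only by $n$ on every bipartite graph. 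Most delicately, the bipartite Ising exclusion $\beta=\gamma$, $\lambda=1$ corresponds to the global spin-flip symmetry, which forces $\mathbf{E}[|\sigma|]=n/2$ and $\mathbf{E}[m_0(\sigma)]=\mathbf{E}[m_1(\sigma)]$ on any bipartite graph, so no gadget augmentation can make the observable depend on $Z_G$.

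To finish, I would carry out a case analysis outside the trivial set, separating the hard-core-like regime $\gamma=0$ from the general antiferromagnetic regime and, within each, splitting according to which of $a,b,c$ are non-zero; in each case a tailored gadget construction, together with standard error-propagation estimates through the interpolation (taking oracle precision $\epsilon=1/\mathrm{poly}(n)$), converts a $(1\pm\epsilon)$ approximation of $\Oc$ into a $(1\pm\epsilon')$ approximation of $Z_G$ and completes the reduction.
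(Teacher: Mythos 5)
Your high-level picture is right in spirit: the paper also uses gadget constructions that shift effective parameters and an integration step to recover the partition function from observable values. But the proposal contains two serious gaps that the paper's two-stage structure is specifically designed to avoid.

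First, you treat the observable $\Oc_{\beta,\gamma,\lambda}(G_t)$ as essentially yielding the directional derivative of $\log Z(G;\cdot)$ along a curve of effective parameters. That is not what a single oracle call returns. Even granting a relation of the form $Z(G_t;\beta,\gamma,\lambda)=C_t\cdot Z(G;\beta_t,\gamma_t,\lambda_t)$ (which is in fact only an \emph{approximate} relation when the phase gadgets $\Gc^{t,\epsilon}_{n,\Delta}$ are used, since their port distribution is only an $\epsilon$-approximate product measure), the oracle returns the observable on $G_t$, i.e.\ $a\lambda\partial_\lambda \log Z(G_t)+b\beta\partial_\beta \log Z(G_t)+c\gamma\partial_\gamma\log Z(G_t)$. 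Unwinding this through the gadgets introduces additive ``background'' contributions from inside the gadgets themselves, and these backgrounds are not computable: as the paper emphasizes, one cannot estimate the susceptibility/observable contribution of the phase gadgets (the second-moment analysis underlying Lemma~\ref{lem:Cai2spin} gives only crude bounds). Your plan of decomposing $\Oc(G_t)$ as ``a background term plus the desired combination'' and then subtracting the background has no mechanism for actually computing that background. The paper's fix is the \emph{subtraction trick}: build a \emph{pair} of gadgets $(\Tc_1,\Tc_2)$ (Theorem~\ref{thm:observablegadget}) with nearly identical effective fields but a separated observable gap, query the oracle once for each, and subtract. The intractable background terms (the $\Eb_{\sigma\sim\mu_i}[\oc_F(\sigma)]$ in Lemma~\ref{lem:2spinconstruction}) then cancel to within $\mathrm{poly}(\epsilon)$ by Lemma~\ref{lem:2spinperturb}. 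This is the central idea of the reduction, and your proposal does not contain it.

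Second, your integration step only makes sense if the observable is the exact derivative of $\log Z$ along the one-parameter family of effective parameters that the gadgets can actually realize, and if that family reaches a base point where $\log Z$ is known. In general $(a,b,c)$ this fails: the curve direction in log-coordinates is fixed to $(a,b,c)$, while the gadgets shift parameters along a completely different, highly constrained curve (one degree of freedom, with $(\beta,\gamma)$ collapsing to a common value $\alpha$). The paper resolves this by never integrating the vertex-edge observable directly. Instead it first reduces the hard partition function $\IsingCubic(\alpha,\lambda^*)$ to the Ising \emph{magnetization} $\MagIsing(\alpha)$ via the exact identity $\partial_{\hat\lambda}\log Z_{G;\hat\lambda}=\tfrac{1}{\hat\lambda}\Mc_{G;\hat\lambda}$ and a Riemann sum (Appendix, Section~\ref{sec:proof-2spin}); only then does Lemma~\ref{lem:intermediateIsing} use the phase/field gadgets and the subtraction trick to read off that Ising magnetization from the observable oracle. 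Your proposal collapses these two stages into one and, in doing so, loses the exact derivative relation that makes the integration legitimate. You do correctly identify the role of non-uniqueness (density of effective parameters via the repelling fixpoint) and the role of the bipartite-triviality exclusion $\beta=\gamma,\lambda=1$; but as written, the proposal would stall on the uncomputable gadget background and on the mismatch between the observable's directional derivative and the parameter curve the gadgets realize.
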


We stress that the above result holds for bipartite graphs.  The previous work of~Galanis et al.~\cite{averages} showed hardness for general antiferromagnetic 2-spin systems in the same non-uniqueness region but on general graphs, only for the magnetization, and only achieved the stronger 
constant-factor hardness for a dense set of $\lambda$.

\subsection{Outline of the rest of the paper}

We begin by establishing Theorem~\ref{thm:main1} for hardness of approximating the susceptibility for the ferromagnetic Potts model, see 
Section~\ref{sec:proofthm}.  
We then present the refinements to establish Theorems~\ref{thm:main2} for general 2-spin antiferromagnetic systems in Section~\ref{sec:2spin};
Theorem~\ref{thm:main-IS} follows as a corollary of Theorem~\ref{thm:main2}.

We conclude this introductory section by mentioning a related line of works \cite{Will1,Will2,Will3} which consider the computational problems of sampling/counting configurations with a fixed magnetization value; their techniques are different than ours, and the principles underlying their hardness results relate to extremal properties/graphs of the magnetization.

\section{Hardness of susceptibility for the ferromagnetic Potts model}\label{sec:proofthm}
Let $q,\Delta\geq 3$ be integers and $\beta>\beta_c(q,\Delta)$. To prove Theorem~\ref{thm:main2}, we will assume the existence of an $\fpras$ for the  susceptibility of Potts with parameters $q,\beta$ on maximum degree $\Delta$ graphs and show how to obtain an $\fpras$ for the partition function of the Potts model with parameters $q,\beta^*$ on bipartite graphs of maximum degree 3 for some $\beta^*>\beta_c(q,3)$; the latter problem is \#BIS-hard by \cite{PottsBIS}. 

To aid the presentation, it will be convenient to consider the following computational problems and use the notion of AP-reduction between counting problems \cite{DGGJ}; roughly, for two problems $A,B$,  the notation $A\leq_{\mathrm{AP}}B$ means that the existence of an FPRAS for $B$ implies the existence of an FPRAS for $A$. In the first computational problem that will be relevant, the parameters are $q,\beta,\Delta$ as detailed below.
\prob{$\Susc(q,\beta,\Delta)$.} 
{ A bipartite graph $G$ with max degree $\Delta$.} 
 { The susceptibility on $G$ with parameters $q,\beta$, i.e., the value $\Sc_{q,\beta}(G)$.}

In the second, the parameter is going to be just $q$; note that the problem allows the edge activity to be part of the input.
\prob{$\SuscCubic(q)$.} 
{ A cubic bipartite graph $H$, and a rational edge activity $\hat\beta\geq 1$.} 
 { The susceptibility  on $H$ with parameters $q,\hat\beta$, i.e., the value $\Sc_{q,\hat\beta}(H)$.}

The key ingredient underpinning our proof approach is captured by the following lemma, whose proof is given in Section~\ref{sec:rgtgtgrtte}. Roughly, the lemma asserts that, despite the fact that the parameter $\beta$ is fixed, with appropriate gadget constructions we can ``shift'' it in a fine-tuned way to any desired $\hat\beta$. In turn, this allows us to do an appropriate integration of the observable (viewed as a function of the parameter $\hat\beta$) to recover the partition function of a \#BIS-hard problem; we will refer loosely to this integration technique as \emph{interpolation}.
\newcommand{\statelemintermediatePotts}{Let $q,\Delta\geq 3$  be integers and $\beta>\beta_c(q,\Delta)$ be an arbitrary real. Then,  
\[\SuscCubic(q)\leq_{\mathrm{AP}} \Susc(q,\beta,\Delta).\]}
\begin{lemma}\label{lem:intermediatePotts}
\statelemintermediatePotts
\end{lemma}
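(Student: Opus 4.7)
The plan is to reduce $\SuscCubic(q)$ to $\Susc(q,\beta,\Delta)$ by a gadget-based ``activity shift'': given a cubic bipartite instance $(H,\hat\beta)$, I would replace each edge of $H$ by a carefully chosen bipartite edge gadget $\Gamma$, obtaining a bipartite graph $G$ of max degree $\Delta$. The gadget is designed so that, from the point of view of its two terminals, it behaves like a single Potts edge with effective activity $\hat\beta^*_\Gamma$ very close to the target $\hat\beta$. I then query the assumed $\fpras$ for $\Sc_{q,\beta}(G)$ once, and solve a single linear equation to recover an estimate of $\Sc_{q,\hat\beta}(H)$. The shift from fixed $\beta$ to any desired $\hat\beta$ is precisely the ``fine-tuning'' referred to by the remark preceding the lemma.

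First I would carry out the gadget construction. Since $\beta>\beta_c(q,\Delta)$, the non-uniqueness of the Potts tree recursion on the $\Delta$-regular tree provides two stable fixpoints and enough numerical flexibility to construct, for any rational target $\hat\beta^*\geq 1$ and any desired accuracy $\eta>0$, a bipartite two-terminal gadget $\Gamma$ of max degree $\Delta$ (with terminal degree at most $\lfloor\Delta/3\rfloor$, so that the degree bound survives after gluing into the cubic graph $H$) such that its partition function with terminals colored $(i,j)$ satisfies
\[Z_{\Gamma}(i,j) = c_\Gamma\cdot \bigl(\hat\beta^*_\Gamma\bigr)^{\mathbb{1}[i=j]},\qquad \hat\beta^*_\Gamma\in[\hat\beta^*-\eta,\hat\beta^*+\eta],\]
for some $c_\Gamma>0$. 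The gadget also yields explicitly computable constants $g_A,g_B$ equal to the conditional expectation of the number of monochromatic edges inside $\Gamma$ given that its two terminals agree (respectively, disagree) in color, and we arrange (by symmetry/positivity) that $g_A>g_B$.

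Next I would carry out the algebraic step. Let $G$ be the graph obtained by replacing each $e\in E(H)$ by an independent copy $\Gamma_e$ of $\Gamma$, identifying terminals with the endpoints of $e$. Conditional on the colors of $V(H)$, the internal configurations of distinct gadgets are independent, and the marginal distribution on $V(H)$ induced by the fixed-$\beta$ Gibbs measure on $G$ is exactly the Potts distribution on $H$ at activity $\hat\beta^*_\Gamma$. Writing $m_G(\sigma_G)=\sum_e m_{\Gamma_e}(\sigma_G|_{\Gamma_e})$, a direct computation using the above independence and symmetry yields the clean identity
\[\Sc_{q,\beta}(G)=|E(H)|\,g_B+(g_A-g_B)\,\Sc_{q,\hat\beta^*_\Gamma}(H).\]
A call to the assumed $\fpras$ on $G$ therefore produces an estimate from which $\Sc_{q,\hat\beta^*_\Gamma}(H)$ is recovered by subtraction and division by the (exactly known) constants.

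Finally I would handle the error analysis. Two sources of error enter: the oracle's multiplicative error on $\Sc_{q,\beta}(G)$, and the fact that $\hat\beta^*_\Gamma$ is only inverse-polynomially close to $\hat\beta$. For the first, the key quantitative observation is that in the ferromagnetic regime $\hat\beta\geq 1$ the susceptibility is bounded below by $|E(H)|/q$, so the ``useful'' term $(g_A-g_B)\Sc_{q,\hat\beta^*_\Gamma}(H)$ is a non-negligible fraction of $\Sc_{q,\beta}(G)$ provided $g_A-g_B$ is not too small relative to $g_B$; this is guaranteed by choosing the gadget to have $\hat\beta^*_\Gamma$ bounded away from $1$. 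For the second, $\Sc_{q,\cdot}(H)$ is Lipschitz in the edge activity with constant polynomial in $|V(H)|$, so taking $\eta$ polynomially small in the input size and the target precision suffices. I expect the main obstacle to be the gadget construction itself: producing a bipartite, max-degree-$\Delta$ two-terminal graph whose effective activity simultaneously (a) hits any prescribed $\hat\beta\geq 1$ to inverse-polynomial precision, (b) keeps $g_A-g_B$ bounded below, and (c) stays polynomial-sized, all while honoring the tight degree budget (which is especially delicate when $\Delta=3$ and the cubic graph $H$ allows only terminal degree $1$). This is where the full strength of the non-uniqueness assumption $\beta>\beta_c(q,\Delta)$ is needed, via a careful amplification construction along the tree recursion.
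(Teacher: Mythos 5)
Your algebraic step is sound: if you could build an edge-replacement gadget $\Gamma$ with the stated properties, then conditioning on terminal colors does give
$\Sc_{q,\beta}(G) = |E(H)|\,g_B + (g_A-g_B)\,\Sc_{q,\hat\beta^*_\Gamma}(H)$,
and a single oracle call would suffice. The difficulty is that the gadget you postulate cannot exist in the form you need, and you underestimate this obstacle.

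There is a structural obstruction, not just a technical one. Because $H$ is cubic and you replace edges, the terminal degree of $\Gamma$ must be at most $\lfloor \Delta/3\rfloor$, which for $\Delta=3$ forces terminal degree $1$. But for the ferromagnetic Potts model, series composition can only drive the effective activity of a two-terminal gadget \emph{toward} $1$, and parallel composition (which multiplies activities) is exactly what the terminal-degree budget forbids. A two-terminal gadget with terminal degree $1$ and maximum internal degree $\Delta$ has effective activity bounded above by a constant depending only on $\beta,q,\Delta$ --- it cannot reach arbitrarily large $\hat\beta$, no matter how cleverly you use non-uniqueness. Non-uniqueness of the tree recursion controls the \emph{phase} behavior of large almost-regular bipartite graphs; it does not give you a two-terminal ``super-edge'' of bounded terminal degree with tunable, unbounded effective activity. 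The only way the paper achieves arbitrary $\hat\beta$ while respecting the degree bound is to replace \emph{vertices} of $H$ with large phase gadgets having $\Theta(t)$ ports of degree $\Delta-1$, and then to route $\ell$ parallel paths plus one probe gadget between the two port sets for each edge of $H$. The freedom to use $\ell$ parallel connections is precisely what makes $\hat\beta$ tunable, and it is only available because the phase gadgets supply a surplus of low-degree attachment points.

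Once you are forced into using phase gadgets, your claim that $g_A,g_B$ are ``explicitly computable'' breaks down, and this is the second, independent gap. The phase gadgets are random almost-$\Delta$-regular bipartite graphs whose internal susceptibility contribution is, as the paper emphasizes, not known to be estimable (the second-moment analysis that proves the phase behavior gives no useful control over the conditional expected number of monochromatic edges inside the gadget). The paper's way around this is the subtraction trick: it uses a \emph{pair} of edge-interaction gadgets $\Ec_1,\Ec_2$ with nearly identical effective interactions $B_{\Ec_1}\approx B_{\Ec_2}$ but susceptibility gaps differing by a constant $S$, runs the construction once with each, and subtracts. The large unknown term $\Eb_{\sigma\sim\mu}[m_{H^\ell_{G,\Pc}}(\sigma)]$ (the susceptibility inside the phase gadgets and paths) cancels in the difference, leaving a quantity proportional to $\Sc_{q,\hat\beta}(H)$. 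Your proposal uses a single oracle call and never cancels this term, so even if the degree issue were somehow resolved, you would be left with an additive term you cannot estimate. Constructing the pair $(\Ec_1,\Ec_2)$ --- same interaction, different susceptibility gap --- is itself a nontrivial piece of the argument (Lemma~\ref{lem:suscgadget}, proved via a functional-equation / Cauchy-equation argument), and there is no analogue of it in your plan.
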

Before proceeding with outlining the proof of the key Lemma~\ref{lem:intermediatePotts}, we first present the interpolation-scheme idea that allows us to conclude Theorem~\ref{thm:main1} from Lemma~\ref{lem:intermediatePotts}.
\begin{proof}[Proof of Theorem~\ref{thm:main1} (assuming Lemma~\ref{lem:intermediatePotts})]
Let $\beta^*>\beta_c(q,3)$ be an arbitrary rational number and consider the problem $\PottsCubic(q,\beta^*)$, i.e., the problem of approximating the partition function $Z_{G;q,\beta^*}$ on cubic bipartite graphs $G$.  From \cite[Theorem 3]{PottsBIS}, we have that $\PottsCubic(q,\beta^*)$ is \#BIS-hard. From Lemma~\ref{lem:intermediatePotts}, we have that for $\beta>\beta_p(q,\Delta)$ it holds that $\SuscCubic(q)\leq_{\mathrm{AP}} \Susc(q,\beta,\Delta)$, so to prove the theorem it suffices to show that $\PottsCubic(q,\beta^*)\leq_{\mathrm{AP}}  \SuscCubic(q)$.

Let $G$ be an instance of $\PottsCubic(q,\beta^*)$ with $n$ vertices and $m$ edges, and $\epsilon>0$ be the desired relative error  that we want to approximate $Z_{G;q,\beta^*}$. Since $\frac{\partial \log Z_{G;q,\hat\beta}}{\partial \hat\beta}=\tfrac{1}{\hat\beta}\Sc_{q,\hat\beta}(G)$, we have
\begin{equation}\label{eq:interpolation}
\log Z_{G;q,\beta^*}=\int^{\beta^*}_{1}\, \tfrac{1}{\hat\beta}\Sc_{q,\hat\beta}(G) \mathrm{d}\hat\beta.
\end{equation}
Let $M=\lceil (10q\hat\beta m/\epsilon)^4\rceil$ and for $i=0,1,\hdots, M$, consider the sequence of edge parameters $\hat\beta_i=1+i\tfrac{\beta^*-1}{M}$. It is a standard fact that the function $\log Z_{G;\hat\beta}$ is convex with respect to $\hat\beta$ (the second derivative is equal to the variance of the number of monochromatic edges) and therefore the function $\tfrac{1}{\hat\beta}\Sc_{q,\hat\beta}(G)$ is increasing. Therefore, from the standard technique of approximating integrals with rectangles, we obtain from \eqref{eq:interpolation} that
\[\frac{1}{M}\sum^{M-1}_{i=0}\tfrac{\Sc_{G;q,\hat\beta_i}}{\hat\beta_i}\leq \log Z_{G;\beta^*}\leq \frac{1}{M}\sum^{M}_{i=1}\tfrac{\Sc_{q,\hat\beta_i}(G)}{\hat\beta_i}.\]
Using the bound $m/q\leq \Sc_{q,\hat\beta}(G)\leq m$ that holds for all $\hat\beta\geq 1$, we obtain that
\[ \log Z_{G;q,\beta^*}=\big(1\pm\frac{\epsilon}{10}\big)\sum^M_{i=1}\tfrac{\Sc_{q,\hat\beta_i}(G)}{\hat\beta_i}.\]
Using the presumed oracle for $\SuscCubic(q)$ we can compute $\hat{S}_i$ such that $\hat{S}_i=(1\pm \tfrac{\epsilon}{10Mm})\Sc_{q,\hat\beta_i}(G)$ for $i\in [M]$, and therefore the quantity $\hat Z=\exp\big(\sum_{i\in [M]}  \tfrac{\Sc_{q,\hat\beta_i}(G)}{\hat\beta_i}\big)$
is a $(1\pm \epsilon)$-factor approximation to $Z_{G;q,\beta^*}$. This completes the AP-reduction, and therefore the proof as well.
\end{proof} 
In the rest of Section~\ref{sec:proofthm}, we focus on proving Lemma~\ref{lem:intermediatePotts}.

\subsection{Proof overview of Lemma~\ref{lem:intermediatePotts}}\label{sec:overview}

In this section, we give the proof overview of Lemma~\ref{lem:intermediatePotts} which as we saw in the previous section is the key ingredient to carry out the interpolation-scheme idea. We highlight here some of the key ideas (with a non-technical overview), which are also used to prove the analogous Lemma for obtaining our inapproximability results for 2-spin systems.

To prove Lemma~\ref{lem:intermediatePotts}, we will use three different types of gadgets.

The first type of gadgets, that have also been used in previous inapproximability results, are the so-called ``phase gadgets'', which are almost $\Delta$-regular bipartite graphs with a relatively small number of degree $\Delta-1$ vertices (the so-called ``ports''). This type of gadget exploits the phase transitions of the model and has $q$-ary behaviour, in the sense that a typical sample from their Gibbs distribution is in one of the $q$ ordered  phases, favoring one spin over the others. Aside from this $q$-ary behaviour, another feature of these gadgets is that they are convenient to maintain the degree  of the vertices in our constructions small, using the ports to make connections between gadgets.

The second type of gadgets are paths; these allow us to interpolate the edge activity $\beta$. The key point is that long paths induce some small edge-interaction $\beta$ between their endpoints (bigger than but close to $1$) and by using a big number of them (in parallel-style connections) we can achieve a target edge activity $\hat\beta$ with arbitrary good precision; here, the ports of the phase gadgets allow us to do these parallel connections without exceeding the degree bound $\Delta$. This is a crucial ingredient in implementing the new reduction idea.

The final type of gadgets consists of  the so-called edge-interaction gadgets. Each such gadget has two vertices, say $\rho,\rho'$, which we also refer to as ports. We are interested in two quantities of these  gadgets (cf. Definition~\ref{def:edgeinte33}):
\begin{itemize}
\item the effective edge activity, i.e., the relative ratio of the aggregate weight of configurations where $\sigma(u)=\sigma(v)$ versus $\sigma(u)\neq \sigma(v)$. Note that this ratio is always bigger than 1, due to the ferromagnetic interaction.
\item the susceptibility gap, i.e., the difference between the expected susceptibility conditioned on $\sigma(u)=\sigma(v)$ and the susceptibility conditioned on $\sigma(u)\neq \sigma(v)$.
\end{itemize}
We prove the existence of pairs of susceptibility gadgets which have roughly equal induced edge parameters but different susceptibility gaps. The equality between the induced edge parameters allows us to use them as probes (without changing the underlying distribution) for ``susceptibility'' between two vertices $s,t$, i.e., the probability that $s,t$ have the same colour, in a graph $G$. That is, we can invoke a presumed oracle for susceptibility when we use the first gadget (by identifying $s,t$ with the terminals) and get a ``reading'' for susceptibility, and do the same for the second and get a second ``reading''; the difference between the two readings gives us information about the probability that $s,t$ have the same colour in the original graph $G$. 

The reason that these susceptibility gadgets are useful is that analysing the susceptibility of the other two types of gadgets is deeply unpleasant and, in fact, it is not even known how to obtain susceptibility estimates for the phase gadgets (since their analysis in earlier works builds upon second moment methods that give rather crude bounds in our setting). Hence, by the subtraction trick above, we have the required modularity to avoid such refined considerations.

That said, establishing the existence of pairs of susceptibility gadgets with the required properties has various challenges and the proof is based on an elaborate construction which finishes by a contradiction argument via Cauchy's functional equation. Fortunately, this ground work has been largely done in \cite{averages}, though in our setting we need to consider edge gadgets instead of vertex gadgets, which complicates the underlying functions involved in the proofs. We believe that these constructions can be used to strengthen the results of \cite{averages} and obtain inapproximability for multi-spin systems such as colourings or the  antiferromagnetic Potts model.

These ideas suitably adapted apply to obtain our inapproximability results for antiferromagnetic 2-spin systems. The difference for 2-spin systems is that the interpolation is quite trickier, since in the setting there it is harder to make vertex or edge activities that are close to 1 and do the interpolation (in contrast to the paths used above which is the fairly natural choice). Instead, to do the interpolation, we use a pair of trees whose induced vertex activities (at the root) are sufficiently close and which are attached (in appropriate numbers) to the ports of the phase gadgets to imitate the effect of an external field close to 1. We are then able to interpolate in terms of $\lambda$ by a suitable implementation of the subtraction trick idea; we again need to depart from \cite{averages} (where 2-spin models were also considered) since the construction there does not yield a suitable interpolation parameter. The final new ingredient is to account for the general vertex-edge observables, since a key fact used in \cite{averages} is that the magnetization is an appropriate derivative of the log-partition function, which is no longer the case for general vertex-edge observables. 

We now  state more formally the above ingredients and show how to combine these and conclude the proof of Lemma~\ref{lem:intermediatePotts}.

\subsection{The gadgets}
\subsubsection{Bipartite phase gadgets for the Potts model}\label{sec:phasegadgets}
For integers $t,n,\Delta$, we let  $\Gc^t_{n,\Delta}$ be the distribution on bipartite graphs where there are $n$ vertices with degree $\Delta$ on each side, and $t$ vertices of degree $\Delta-1$ on each side. For a graph $G\in\Gc^t_{n,\Delta}$,  we denote the set of vertices with degree $\Delta$ by $U$  and by $W$ those with degree $\Delta-1$, so that $|U|=2n$ and $|W|=2t$. We will refer to set $W$ as the \textit{ports} of $G$.  For $\sigma: U\rightarrow [q]$, we define the \emph{phase} $\Yc(\sigma)$ of the configuration $\sigma$ as the most frequent color (breaking ties arbitrarily), i.e., which has the most occupied vertices under $\sigma$, i.e., 
\[\Yc(\sigma)=\arg\max_{i\in [q]}|\sigma^{-1}(i)|\]  
Let $p>1/q$ be given from $p=\tfrac{x}{x+q-1}$ where $x>1$ is the largest solution of $x=\big(\tfrac{\beta x+q-1}{x+\beta+q-2}\big)^{\Delta-1}$, cf. \cite[Footnote 5]{PottsBIS}. For a colour $i\in [q]$, we  define the product measure  $Q^{i}_{W}(\cdot)$ on configurations $\tau:W\rightarrow [q]$, given by
\[Q^{i}_W(\tau)=p^{|\tau^{-1}(i)|}\big(\tfrac{1-p}{q-1}\big)^{|W|-|\tau^{-1}(i)|}.\]
We will need the following two properties from the phase gadget $G$ for some sufficiently small $\epsilon>0$. Let $\mu:=\mu_{G;q,\beta}$.
\begin{enumerate}
	\item \label{it:Potts1}The $q$ phases appear with roughly equal probability, i.e., $|\mu(\Yc(\sigma)=i) - \tfrac{1}{q}\big| \leq \epsilon$ for $i\in[q]$.
  \item \label{it:Potts2}For $i\in [q]$ and any $\tau:W\rightarrow[q]$, $\big|
       \frac{\mu\big(\sigma_{W}=\tau\,\mid\, \Yc(\sigma)=i\big)}
       {Q^{i}_W(\tau)}
       -1
       \big|
       \leq  \epsilon$.
\end{enumerate}
Let $\Gc^{t,\epsilon}_{n,\Delta}$ denote the set of graphs $G\in \Gc^t_{n,\Delta}$ satisfying Items~\ref{it:Potts1} and~\ref{it:Potts2}. The following lemma is shown in \cite{PottsBIS}.
\begin{lemma}[{\cite[Lemma 28]{PottsBIS}}]
\label{lem:slypotts}
Let $q,\Delta\geq 3$ be integers and $\beta>\beta_c(q,\Delta)$. Then, there is a randomized algorithm that, on input integer $t\geq 1$ and $\epsilon>0$, outputs in time $\mathrm{poly}(t,\tfrac{1}{\epsilon})$ an integer $n$ and a graph $G$ that belongs to $\Gc^{t,\epsilon}_{n,\Delta}$,  with probability $\geq 3/4$.
\end{lemma}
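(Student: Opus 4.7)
The plan is to follow the standard random-graph moment-method approach used to construct phase-transition gadgets for spin systems on locally tree-like graphs. I would sample $G$ from the bipartite configuration model with the prescribed degree sequence ($n$ degree-$\Delta$ vertices and $t$ degree-$(\Delta-1)$ ports on each side), then condition on the simple-graph event at a $1-o(1)$ cost. The parameter $n$ should be chosen of order $\mathrm{poly}(t,1/\epsilon)$, large enough that all error probabilities drop below a prescribed constant; the ports, being $O(t) = o(n)$ many, appear as negligibly many ``defects'' in an otherwise $\Delta$-regular bipartite random graph.

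The first-moment step evaluates $\log \mathbb{E}_G[Z_{G;q,\beta}]$ by summing, over Potts configurations $\sigma$ parametrized by their color-frequency profiles on the two sides, the expected edge-interaction weight under the random matching. Stirling's formula plus Laplace's method reduce this to maximizing an explicit functional $\Phi$, which is essentially the Bethe free energy for Potts on the infinite bipartite $\Delta$-regular tree. For $\beta > \beta_c(q,\Delta)$, a variational analysis identifies the maximizers as the $q$ ``ordered'' phases $\Pc_1,\ldots,\Pc_q$, in which one color dominates on both sides with dominance parameter $x$ from the fixpoint $x=\big(\tfrac{\beta x+q-1}{x+\beta+q-2}\big)^{\Delta-1}$. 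Ruling out all other stationary points (paramagnetic, mixed ``Dobrushin''-type profiles) in the supercritical regime is the main obstacle, and is the technical heart of the construction in the cited work.

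The second-moment step controls $\mathbb{E}_G[Z_{G;q,\beta}^2]$ by summing over pairs of configurations indexed by an overlap matrix between their color classes. The key point, which again uses $\beta > \beta_c(q,\Delta)$, is that the dominant pairs are those lying in (possibly color-permuted copies of) the same phase, which gives $\mathbb{E}[Z^2] = O((\mathbb{E}[Z])^2)$. Paley--Zygmund then yields $Z_{G;q,\beta} = \Theta(\mathbb{E}[Z])$ with constant probability, and color symmetry across $\Pc_1,\ldots,\Pc_q$ partitions this mass into $q$ nearly equal pieces, establishing Item~\ref{it:Potts1} up to error $\epsilon$.

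For Item~\ref{it:Potts2}, I would exploit that ports have degree $\Delta-1$ so the depth-$\Omega(\log n)$ neighborhood of each port in the configuration model is, with probability $1-o(1)$, a tree. Conditional on being in phase $i$, the boundary of this neighborhood is distributed close to the phase-$i$ equilibrium of the bulk, and the $\Delta-1$ subtrees rooted at a port induce the tree-recursion fixpoint whose root marginal is exactly $p = x/(x+q-1)$ on color $i$ and $(1-p)/(q-1)$ on each other color. Since the $2t$ ports are vastly outnumbered by the bulk, their neighborhoods are vertex-disjoint with high probability, so the joint conditional port distribution factorizes and approaches the product measure $Q^i_W$ up to error $\epsilon$ by a direct coupling/telescoping estimate. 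A final constant-factor repetition of the randomized construction, together with verifiable concentration, boosts the success probability to $3/4$.
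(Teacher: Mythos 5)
The paper does not prove this lemma; it imports it verbatim from [PottsBIS, Lemma 28], so there is no ``paper's own proof'' to compare against beyond the citation. That said, your reconstruction of the underlying random-graph moment argument is a reasonable sketch of the high-level strategy used in that line of work (Sly's gadgets, the Galanis--\v{S}tefankovi\v{c}--Vigoda bipartite extensions, and the Potts refinement), and the parametrization by color-frequency profiles, the Bethe functional, and the ordered-phase maximizers $x = \big(\tfrac{\beta x+q-1}{x+\beta+q-2}\big)^{\Delta-1}$ are all the right ingredients.

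There is, however, a genuine gap in your final step. Paley--Zygmund applied to the second moment of $Z_{G;q,\beta}$ gives only that $Z = \Theta(\mathbb{E}[Z])$ with \emph{some} positive constant probability; it does not by itself yield $3/4$, and it says nothing directly about the finer conditional structure demanded by Items~\ref{it:Potts1} and~\ref{it:Potts2}. Your proposed fix --- ``constant-factor repetition together with verifiable concentration'' --- does not work: the properties $|\mu(\Yc(\sigma)=i)-1/q|\leq\epsilon$ and the conditional port-marginal bound are precisely the sort of Gibbs-distribution quantities whose efficient computation is the \#BIS-hard problem we are trying to reduce \emph{from}, so the algorithm cannot certify that a sampled $G$ has landed in $\Gc^{t,\epsilon}_{n,\Delta}$. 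The actual mechanism in the cited work is the small-subgraph-conditioning method (Robinson--Wormald): one shows that, after conditioning on the (efficiently checkable, and concentrated) counts of short cycles, the phase structure holds with probability $1-o(1)$ rather than merely bounded away from zero. That upgrade is what makes a success probability of $\geq 3/4$ achievable by a polynomial-time randomized algorithm without verifying the non-certifiable spectral/Gibbs properties. Your tree-neighborhood argument for Item~\ref{it:Potts2} is in the right spirit but similarly needs to ride on the whp-conditional structure, not just a bare second-moment bound.
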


\subsubsection{Edge-interaction/susceptibility gadgets}\label{sec:edgegadgets}
\begin{definition}\label{def:edge}
An {\em edge-interaction gadget} is a connected series-parallel graph $\Ec$ with two distinct vertices $\rho,\rho'$ that have degree one. We will refer to $\rho,\rho'$ as the {\em ports} of $\Ec$. 
\end{definition}

\begin{definition}\label{def:edgeinte33}
Let $\Ec$ be an edge-interaction gadget with ports $\rho,\rho'$, and $\mu=\mu_{\Ec;\beta}$. We denote by $B_{\Ec}=B_{\Ec}(\beta)$ the {\em effective interaction} of the gadget, i.e., $B_{\Ec}=\frac{\mu(\sigma_\rho=\sigma_{\rho'}=1)}{\mu(\sigma_\rho=1, \sigma_{\rho'}=2)}$ and by $S_{\Ec}=S_{\Ec}(\beta)$ the {\em susceptibility gap} of the gadget, i.e., $S_{\Ec}=\Eb_{\sigma\sim \mu}[\,\mc_{\Ec}(\sigma)\mid \sigma_\rho=\sigma_{\rho'}]-\Eb_{\sigma\sim\mu}[\,\mc_{\Ec}(\sigma)\mid \sigma_\rho\neq\sigma_{\rho'}]$.
\end{definition}

The following ``interaction'' gadget will allow us to change the edge interaction parameter to any desired value.
\newcommand{\statelemintergadget}{Let $q\geq 2$ be an integer and $\beta>1$ be a real. There is an algorithm, which, on input a rational $r\in(0,1/2)$, 
outputs in time  $poly(\bit(r))$ a path  $\Pc$ of length $O(|\log r|)$, such that $0<B_{\Pc} - 1< r$.}
\begin{lemma}\label{lem:intergadget}
\statelemintergadget
\end{lemma}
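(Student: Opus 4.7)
The plan is to take $\Pc$ to be a simple path with $\ell+1$ vertices $v_0,v_1,\dots,v_\ell$, declaring $\rho=v_0$ and $\rho'=v_\ell$ as the ports; any such path is series-parallel and has exactly two vertices of degree one, so it is a valid edge-interaction gadget. The length $\ell$ is to be chosen as a function of $r$, $q$, $\beta$ at the end, and outputting the path of that length is trivially polynomial in $\bit(r)$, so the only real content is computing $B_\Pc$ exactly and picking $\ell$ correctly.

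To compute $B_\Pc$ I would use the transfer matrix $T\in\mathbb{R}^{q\times q}$ of the Potts model, where $T_{ii}=\beta$ and $T_{ij}=1$ for $i\neq j$, so that summing the weight over all spin assignments of the internal vertices with fixed boundary spins $i,j$ yields $(T^\ell)_{ij}$. Writing $T=(\beta-1)I+J$ with $J$ the all-ones matrix, the eigenvalues of $T$ are $\lambda_1=\beta+q-1$ (along the all-ones direction) and $\lambda_2=\beta-1$ (of multiplicity $q-1$, on the orthogonal complement). This gives the spectral decomposition $T^\ell=\lambda_1^\ell P_1+\lambda_2^\ell P_2$ with $P_1=J/q$ and $P_2=I-J/q$, from which
\[ (T^\ell)_{11}=\tfrac{1}{q}\lambda_1^\ell+\tfrac{q-1}{q}\lambda_2^\ell,\qquad (T^\ell)_{12}=\tfrac{1}{q}\lambda_1^\ell-\tfrac{1}{q}\lambda_2^\ell. \]
Setting $r_\ell:=(\lambda_2/\lambda_1)^\ell=\bigl(\tfrac{\beta-1}{\beta+q-1}\bigr)^\ell\in(0,1)$, the ratio of these two entries yields
\[ B_\Pc=\frac{(T^\ell)_{11}}{(T^\ell)_{12}}=\frac{1+(q-1)r_\ell}{1-r_\ell},\qquad B_\Pc-1=\frac{q\,r_\ell}{1-r_\ell}. \]
Since $\beta>1$ forces $r_\ell>0$, this immediately gives $B_\Pc-1>0$, which is the easy half of the conclusion.

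For the other half, $B_\Pc-1<r$ is equivalent to $r_\ell<\tfrac{r}{q+r}$, i.e.
\[ \ell > \frac{\log\bigl((q+r)/r\bigr)}{\log\bigl((\beta+q-1)/(\beta-1)\bigr)}. \]
The denominator is a positive constant depending only on $q$ and $\beta$, and for $r\in(0,1/2)$ the numerator is $O(\log(1/r))=O(|\log r|)$, so taking $\ell$ to be the ceiling of the right-hand side gives a path length of $O(|\log r|)$ as required. The algorithm simply computes this $\ell$ from $r$ using rational arithmetic on $\log\bigl((q+r)/r\bigr)$ (for instance by finding the smallest integer $\ell$ with $(\beta-1)^\ell(q+r)<r(\beta+q-1)^\ell$, which is a comparison of two rationals of bit-length $\mathrm{poly}(\bit(r))$) and outputs the path on $\ell+1$ vertices, so the overall running time is $\mathrm{poly}(\bit(r))$.

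There is no genuine obstacle here; the only things to double-check are that the derivation of $B_\Pc$ yields strict positivity of $B_\Pc-1$ for every finite $\ell$ (which uses $\beta>1$ so $\lambda_2>0$ and $r_\ell>0$), and that computing $\ell$ and comparing the relevant integers can be done exactly in time $\mathrm{poly}(\bit(r))$ — both of which follow from the closed-form expression above.
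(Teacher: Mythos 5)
Your proof is correct, and it takes a genuinely different route from the paper's. The paper derives the decay of $B_\Pc-1$ iteratively: it specializes the series-parallel recursion of Lemma~\ref{lem:Pottsrecursion} with $k=1$ (so that a path of $2\ell+1$ edges is built from a path of $2\ell-1$ edges by appending one edge on each side), and argues that each step contracts $B_\ell-1$ by a constant factor in $(0,1)$. You instead compute $B_\Pc$ in closed form by diagonalizing the $q\times q$ transfer matrix $T=(\beta-1)I+J$, obtaining the exact expression $B_\Pc-1=q\,r_\ell/(1-r_\ell)$ with $r_\ell=\bigl(\tfrac{\beta-1}{\beta+q-1}\bigr)^\ell$. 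Both give geometric decay with essentially the same rate (indeed the $(\beta-1)/(\beta+q-1)$ you get is the exact per-edge decay, whereas the paper's stated per-step factor $\tfrac{(\hat\beta-1)(\hat\gamma-1)}{\hat\beta\hat\gamma-1}$ is really only the limiting contraction ratio near the fixpoint $B=1$, so your version is slightly cleaner and avoids that approximation). The transfer-matrix calculation is more elementary and self-contained, while the paper's phrasing reuses Lemma~\ref{lem:Pottsrecursion}, which it needs anyway for the harder Lemma~\ref{lem:suscgadget}. Your observation that $B_\Pc-1$ is strictly decreasing in $\ell$ also cleanly handles the algorithmic detail of choosing $\ell$ when $\beta$ is irrational: one may hard-wire any sufficiently large rational upper bound for $1/\log\bigl((\beta+q-1)/(\beta-1)\bigr)$ and take $\ell$ accordingly.
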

The proof of Lemma~\ref{lem:intergadget} is given in Section~\ref{sec:susc1Potts}. The following lemma gives pairs of edge-interaction gadgets which have almost the same edge interaction but different susceptibility gaps; this difference in the susceptibility gaps while maintaining the edge interaction will be the key to read off the susceptibility by subtraction.
\begin{lemma}\label{lem:suscgadget}
Let $q\geq 3$ be an integer and $\beta>1$ be a real. For any arbitrarily small constant $\delta>0$, there exist constants $S,\Xi>0$ and $B\in (1,1+\delta)$ such  that the following holds. There is an algorithm, which, on input a rational $r\in(0,1/2)$, outputs in time  $poly(\bit(r))$ a pair of edge-interaction gadgets  $\mathcal{E}_1,\mathcal{E}_2$, each of maximum degree $3$ and size $O(|\log r|)$, such that
\[|B_{\Ec_1} - B|,\, |B_{\Ec_2} - B|\leq r,\mbox{ but } |S_{\Ec_1}-S_{\Ec_2}|\geq S.\]
Moreover, the susceptibility gaps  $|S_{\Ec_1}|, |S_{\Ec_2}|$ are upper-bounded in absolute value by the constant $\Xi$, i.e., $|S_{\Ec_1}|, |S_{\Ec_2}|\leq \Xi$.
\end{lemma}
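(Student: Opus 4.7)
I would split the proof into two phases. First, an existence phase that identifies a target $B^*\in(1,1+\delta)$ and two constant-size base edge-interaction gadgets $\Ec_1^0,\Ec_2^0$ of max-degree $3$ whose effective interactions are close to $B^*$ and whose susceptibility gaps differ by at least a uniform constant $2S>0$. Second, an algorithmic fine-tuning phase that, given $r$, parallel-composes each $\Ec_i^0$ with an auxiliary ``tuning'' gadget built from Lemma~\ref{lem:intergadget} (and its series-parallel combinations) to bring the effective interactions within $r$ of $B^*$, while preserving the gap difference. Setting $B:=B^*$ then yields the claimed constants.

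\emph{Existence via Cauchy's functional equation.} Suppose for contradiction that whenever two max-degree-$3$ edge-interaction gadgets $\Ec,\Ec'$ satisfy $B_\Ec=B_{\Ec'}$, they also satisfy $S_\Ec=S_{\Ec'}$. Then on the countable dense set of achievable effective-interaction values in $(1,1+\delta)$, the gap is a function $g(B)$ of $B$. Under parallel composition (identifying the terminals of two gadgets with otherwise disjoint interiors), the partition function factorises conditional on the port-spins, so $B_{\Ec\|\Ec'}=B_\Ec B_{\Ec'}$; and the two halves become conditionally independent given the port-spins, so $S_{\Ec\|\Ec'}=S_\Ec+S_{\Ec'}$. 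Hence $g$ satisfies Cauchy's multiplicative equation $g(xy)=g(x)+g(y)$ on a dense subset of $(1,1+\delta)$, density coming from paths of varying length via Lemma~\ref{lem:intergadget} and their series-parallel combinations. Combined with the uniform boundedness of $g$ on bounded-size gadgets, this forces $g(B)=c\log B$ for a single constant $c$.

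\emph{Contradiction via explicit small gadgets.} A single edge gives $(B,S)=(\beta,1)$, pinning $c=1/\log\beta$. A path of length two, by the series-composition rule, yields $B_2=(\beta^2+q-1)/(2\beta+q-2)$ and $S_2=2\beta^2/(\beta^2+q-1)-2\beta/(2\beta+q-2)$, and a direct check (e.g., by plugging in concrete values such as $q=3,\beta=2$, or by expanding near $\beta=1$) shows $S_2\neq\log B_2/\log\beta$ for $q\geq 3,\beta>1$, contradicting the assumed form of $g$. A compactness/continuity refinement over the countably many small achievable gadgets in a narrow window around $B^*$ then yields uniform constants $S,\Xi>0$, a target $B^*\in(1,1+\delta)$, and base gadgets $\Ec_1^0,\Ec_2^0$ with $B_{\Ec_i^0}$ in an arbitrarily small neighbourhood of $B^*$, $|S_{\Ec_1^0}-S_{\Ec_2^0}|\geq 2S$, and $\max_i|S_{\Ec_i^0}|\leq \Xi/2$.

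\emph{Fine-tuning and main obstacle.} For the second phase, I would use Lemma~\ref{lem:intergadget} to build tuning gadgets $Q_i$ with effective interactions approximating $B^*/B_{\Ec_i^0}$ to precision $r/B_{\Ec_i^0}$, and set $\Ec_i:=\Ec_i^0\|Q_i$. The size is $O(|\log r|)$ and max-degree $3$ is preserved, since parallel composition at the terminals combines degrees at most $2+2-1=3$. Because parallel composition adds susceptibility gaps and $|S_{Q_i}|$ is uniformly bounded by a constant (the endpoint conditioning affects only a constant number of nearby edges with exponentially decaying influence along the path), we obtain $|S_{\Ec_1}-S_{\Ec_2}|\geq 2S-o(1)\geq S$ and $|S_{\Ec_i}|\leq\Xi$ for $r$ small enough. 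The \textbf{main obstacle} is the quantitative form of the Cauchy contradiction: one needs a uniform lower bound on the gap difference over a neighbourhood of $B^*$ (not just pointwise violation of $g(B)=c\log B$), and this requires adapting the vertex-gadget machinery of \cite{averages} to edge gadgets, where the underlying series-parallel composition identities for the $q\geq 3$ Potts model become more intricate and demand careful bookkeeping.
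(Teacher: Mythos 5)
Your high-level strategy (derive a Cauchy-type functional equation and violate it with explicit small gadgets, then fine-tune to the target window with paths from Lemma~\ref{lem:intergadget}) is the same spirit as the paper, but two of your key technical claims do not survive scrutiny, and they are load-bearing.

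First, the parallel-composition identities $B_{\Ec\|\Ec'}=B_\Ec B_{\Ec'}$ and $S_{\Ec\|\Ec'}=S_\Ec+S_{\Ec'}$ only hold for the ``bare'' parallel composition, which does not produce a valid edge-interaction gadget: Definition~\ref{def:edge} requires the ports $\rho,\rho'$ to have degree one, whereas identifying the terminals gives them degree two. The paper's recursive composition (Lemma~\ref{lem:Pottsrecursion}) fixes this by adding a pendant edge on each side, but that changes the algebra: one gets
\[
B=\frac{1+\hat\gamma\hat\lambda\prod_i B_i}{\hat\beta+\hat\lambda\prod_i B_i},\qquad
S=\theta(B)-\omega(B)\sum_i S_i,
\]
a M\"obius transform and an affine recursion with a nontrivial multiplier $\omega(B)$, not a multiplicative Cauchy equation. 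Consequently the functional equation you actually have to refute is the paper's \eqref{eq:functionFF2}, not $g(xy)=g(x)+g(y)$. Your explicit gadgets also do not pin down any constant inside the target window: a single edge has $B=\beta$, which is bounded away from $(1,1+\delta)$, and the path of length two likewise has $B$ outside the window once $\beta$ is not close to $1$, so the pointwise comparison with $c\log B$ you describe is not available where it would be needed.

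Second, the negation you posit --- ``whenever $B_\Ec=B_{\Ec'}$ then $S_\Ec=S_{\Ec'}$'' --- is too weak, and in fact may well be vacuously true: the achievable effective interactions of distinct bounded-size gadgets generically never coincide exactly. What must be negated is the quantitative statement that gadgets with $B$-values $r$-close have $S$-values close as well, and it is exactly this quantification that forces the Build-gadget machinery (Lemmas~\ref{ledPotts}, \ref{yyyqw3r4e}, \ref{lem:obsbounds} and the Case~I / Case~II dichotomy of Lemmas~\ref{case1}--\ref{case2}). Case~I is not ``two gadgets with equal $B$ and unequal $S$''; it is that the clusters of achievable $(B,S)$-pairs near a target $B$ are separated, which is weaker and achievable. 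Case~II is where a continuous $F$ can be extracted at all, and only then does a Cauchy-type argument (Lemma~\ref{lem:3deecer}) make sense. You flagged the quantitative uniformity as the ``main obstacle,'' which is accurate, but it is not a bookkeeping issue: without the dichotomy there is no well-defined function $g$ to feed into Cauchy, and without the correct (M\"obius) composition law the equation one derives is the wrong one.
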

The proof of Lemma~\ref{lem:suscgadget} generalises the techniques from \cite{averages}, and is given in Section~\ref{sec:lastproofs}.

\subsection{The reduction -- proof of Lemma~\ref{lem:intermediatePotts}}\label{sec:rgtgtgrtte}
Let $q,\Delta\geq 3$ be integers and $\beta>\beta_c(q,\Delta)$. Let $H$ be a cubic bipartite graph which is input to the problem $\Susc(q)$ of Section~\ref{sec:proofthm}. For integers $n,t\geq 1$ and rational $\epsilon>0$, let $G\in \calG^{t,\epsilon}_{n,\Delta}$ be a bipartite phase gadget satisfying Items~\ref{it:Potts1} and~\ref{it:Potts2} of Section~\ref{sec:phasegadgets}. Let $\Ec$ be an edge-interaction gadget with effective interaction $B_\Ec$ and susceptibility gap $S=S_{\Ec}$. Let $\Pc$ be a path with effective edge interaction $B_\Pc$. 

For an integer $\ell$ satisfying $\ell<t/3$, we define the graph $H^\ell_{G,\Ec, \Pc}$ as follows. For each vertex $v$ of $H$ replace it with a distinct copy of $G$, denoted by $G_v$; we also use $U_v,W_v$ to denote the sets corresponding to $U,W$ in $G_v$. Moreover for each $\{u,v\}$ of $H$, add a matching of size $\ell+1$ between $W_u$ and $W_v$,  and replace $\ell$ edges of the matching by (distinct) copies of the path $\Pc$ and the last edge of the matching by the gadget $\Ec$.  Since $H$ is bipartite, this constuction can clearly be done so that the final graph $H^\ell_{G,\Ec, \Pc}$ obtained this way is bipartite.
Let $H^{\ell}_{G,\Pc}$ be the graph with the copies of the susceptibility gadget removed.

The lemma below relates the susceptibility $\Sc_{q,\beta}(H^\ell_{G,\Ec,\Pc})$ with the susceptibility of $\Sc_{q,\hat\beta}(H)$, for some appropriate $\hat\beta$ that is a function of the parameters $q,\Delta,\beta$ and $\ell, B_\Ec,B_\Pc$; we expain how the lemma corresponds to the overview of Section~\ref{sec:overview} right after its statement. The following piece of notation will be useful: for a graph $J$ and a subgraph $J'$ of $J$, given  a configuration $\sigma:V(J)\rightarrow [q]$, it will be convenient to denote by $m_{J'}(\sigma)=\sum_{e=\{u,v\}\in E(J')}\mathbf{1}\{\sigma(u)=\sigma(v)\}$ the number of monochromatic edges of $J'$ under $\sigma$.  
\newcommand{\statelemPottsconstruction}{Let $q,\Delta\geq 3$ be integers and $\beta>\beta_c(q,\Delta)$. There are constants $1>R_0>R_1>0$
so that the following holds for any path $\Pc$ with edge interaction $B_\Pc$, any edge-interaction gadget $\Ec$ with effective interaction $B_{\Ec}$ and susceptibility gap $S_{\Ec}$, and any integers $\ell,t$ with $t\geq 3(\ell+1)$.

For a cubic bipartite graph $H$, for any $\epsilon\leq \tfrac{1}{(5q|V(H)|)^{2}}$, any integer $n$ and phase gadget $G\in \Gc^{t,\epsilon}_{n,\Delta}$, for $\mu:=\mu_{H^{\ell}_{G,\Ec,\Pc}}$ and $\epsilon'=10q |V(H)| \epsilon$, it holds that
\[\Sc_{q,\beta}(H^\ell_{G,\Ec,\Pc})=\Ac_\Ec|E(H)|+\Eb_{\sigma\sim \mu}[m_{H^\ell_{G,\Pc}}(\sigma)]+(1\pm \epsilon') S_\Ec \big[(A_{0}-A_{1}) \Sc_{q,\hat\beta}(H)+A_{\nequal}|E(H)|\big],\]
where  $A_{\Ec}= \Eb_{\sigma\sim \mu_{\Ec}}[m_{\Ec}(\sigma)\mid \sigma_\rho=\sigma_{\rho'}]$ and 
\[\hat\beta:=\big(\tfrac{1+(B_{\Pc}-1)R_{\equal}}{1+(B_{\Pc}-1)R_{\nequal}}\big)^{\ell}\big(\tfrac{1+(B_{\Ec}-1)R_{\equal}}{1+(B_{\Ec}-1)R_{\nequal}}\big),\qquad A_{j}:=\tfrac{B_{\Ec}}{B_{\Ec}+\tfrac{1-R_j}{R_j}}\mbox{ for }j\in \{0,1\}.\]}
\begin{lemma}\label{lem:Pottsconstruction}
\statelemPottsconstruction
\end{lemma}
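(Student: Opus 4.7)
The plan is to decompose the susceptibility of $H^\ell_{G,\Ec,\Pc}$ into contributions from the ``bulk'' (phase gadgets and paths) and from the individual susceptibility gadgets $\Ec_e$ attached to each edge $e\in E(H)$, and then to identify the effective phase-level behaviour as a Potts model on $H$ with activity $\hat\beta$. First I split
\[\Sc_{q,\beta}(H^\ell_{G,\Ec,\Pc}) = \Eb_{\sigma\sim\mu}[m_{H^\ell_{G,\Pc}}(\sigma)] + \sum_{e\in E(H)}\Eb_{\sigma\sim\mu}[m_{\Ec_e}(\sigma)],\]
and apply the Markov property at the two ports $s_e,t_e$ of each $\Ec_e$ (the only vertices the gadget shares with the rest of the graph): conditional on the colours of $s_e,t_e$, the internal configuration of $\Ec_e$ is distributed as $\mu_\Ec$ under the same boundary. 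By the colour-symmetry of Potts this expresses $\Eb_\mu[m_{\Ec_e}]$ as a linear combination of $\Ac_\Ec$ and $\Ac_\Ec - S_\Ec$ weighted by $p_e := \Pr_\mu[\sigma_{s_e} = \sigma_{t_e}]$ and $1-p_e$, so the remaining task is to compute $\sum_e p_e$.

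To estimate $p_e$, I condition on the phase vector $\mathbf Y = (Y_v)_{v\in V(H)}$, where $Y_v := \Yc(\sigma|_{U_v})$. Item~2 of Section~\ref{sec:phasegadgets} says that, given $\mathbf Y$, the ports of each $G_v$ are approximately distributed as the product measure $Q^{Y_v}_{W_v}$. Since the $\ell+1$ matching edges at each $e=(u,v)$ use disjoint port pairs, the $\Ec_e$ interaction on $(s_e,t_e)$ decouples from the paths and a direct local computation gives $\Pr_\mu[\sigma_{s_e}=\sigma_{t_e}\mid\mathbf Y] \approx A_{j(u,v)}$, where $j(u,v)=\mathbf 1[Y_u\neq Y_v]$ and $R_0,R_1$ are the bare port-match probabilities $\sum_c Q^{Y_u}(c)Q^{Y_v}(c)$ in the same- and different-phase cases. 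These depend only on $p=x/(x+q-1)$ (hence only on $q,\Delta,\beta$) and satisfy $1>R_0>R_1>0$ because $p>1/q$. Summing over $e$ yields $\sum_e p_e \approx A_1|E(H)| + (A_0-A_1)\sum_e\Pr_\mu[Y_u=Y_v]$.

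The crux is identifying $\sum_e\Pr_\mu[Y_u=Y_v]$ with $\Sc_{q,\hat\beta}(H)$. Summing out all internal variables of the $G_v$'s, paths, and $\Ec_e$'s, and applying Item~2 to the ports, the marginal of $\mathbf Y$ is proportional---up to a multiplicative $(1\pm\epsilon)^{O(|V(H)|)}$ error from the $|V(H)|$ phase gadgets---to
\[\prod_{e=(u,v)\in E(H)}\bigl(1+(B_\Pc-1)R_{j(u,v)}\bigr)^{\ell}\bigl(1+(B_\Ec-1)R_{j(u,v)}\bigr).\]
Each edge factor depends only on $j(u,v)$, and the ratio of its same-phase value to its different-phase value is precisely $\hat\beta$ by construction. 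Combined with the near-uniform single-gadget phase marginals from Item~1, this identifies the $\mathbf Y$-distribution as a $q$-Potts model on $H$ at activity $\hat\beta$, up to aggregate error $O(|V(H)|\epsilon)$. Hence $\sum_e\Pr_\mu[Y_u=Y_v] = (1\pm\epsilon')\Sc_{q,\hat\beta}(H)$, and substituting back and regrouping the $S_\Ec|E(H)|$ terms with the $\Ac_\Ec|E(H)|$ contribution yields the stated identity.

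The main obstacle will be propagating the per-gadget errors of Items~1 and~2 cleanly. The approximation in Item~2 is pointwise and multiplicative, so the joint distribution acquires a $(1\pm\epsilon)^{|V(H)|}$ factor that must survive both the port integrations (where numerator and denominator errors in the expression for $p_e$ must cancel to the required precision) and the identification with $\Sc_{q,\hat\beta}(H)$. The hypothesis $\epsilon\leq 1/(5q|V(H)|)^2$ provides enough slack to absorb everything into $\epsilon'=10q|V(H)|\epsilon$, but tracking the switching between the additive-error statement of Item~1 and the multiplicative-error statement of Item~2, and verifying that the approximation survives taking the ratio defining $p_e$, is where the technical bookkeeping will be most delicate.
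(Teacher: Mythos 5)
Your proposal follows the paper's proof essentially step for step: the same split into the bulk term $\Eb_\mu[m_{H^\ell_{G,\Pc}}]$ plus the per-edge contributions of the $\Ec_e$'s, the same conditional factorization at the two ports of each $\Ec_e$ to reduce to the agreement probability $p_e$, the same conditioning on the phase vector with the product-measure approximation yielding $R_0,R_1$ and then $A_0,A_1$, and the same identification of the phase marginal as a $q$-state Potts model on $H$ at activity $\hat\beta$. The error bookkeeping you flag as the delicate part is handled in the paper exactly as you describe, by propagating the $(1\pm\epsilon)^{O(|V(H)|)}$ multiplicative slack from Items~1 and~2 into $\epsilon'$.
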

To give a bit of intuition behind the expression of $\Sc_{q,\beta}(H^\ell_{G,\Ec,\Pc})$, recall that the vertices of $H$ are replaced with copies of the bipartite phase gadgets $G$ and that for each pair of neighboring vertices in $H$ we connect the corresponding copies of $G$ using the appropriate number of the gadgets $\Ec,\Pc$. The point here is that the bipartite gadgets are so large that each one of them is with very high probability in one of the $q$ phases (cf. Item~\ref{it:Potts1} in Section~\ref{sec:phasegadgets}) and therefore in the Gibbs distribution of $H^\ell_{G,\Ec,\Pc}$ (with parameters $q,\beta$) they behave like meta-vertices which are in one of $q$ states, analogously to a Potts model on $H$ with $q$ spins and a new edge activity $\hat\beta$, which is ultimately determined by the $\Ec,\Pc$-connections and the (induced) probability distributions on the ports of the bipartite phase gadgets (conditioned on the phase, cf. Item~\ref{it:Potts2} in Section~\ref{sec:phasegadgets}). This explains (at an intuitive level) the presence of the quantity $\Sc_{q,\hat\beta}(H)$; the remaining terms are offsets to account for the addition of the various gadgets. Of those, the most complicated is the term $\Eb_{\sigma\sim \mu}[m_{H^\ell_{G,\Pc}}(\sigma)]$ which involves the contribution to the susceptibility from edges in the graph $\Eb_{\sigma\sim \mu}[m_{H^\ell_{G,\Pc}}(\sigma)]$ which is hard to get a neat expression since the average is taken over the complicated distribution $\mu$. This is where the idea of having a pair of susceptibility gadgets $(\Ec_1,\Ec_2)$ with the same effective interaction but different susceptibility gaps will come into play (in the proof of Lemma~\ref{lem:intermediatePotts} below): by subtracting the susceptibilities for the graphs $H^\ell_{G,\Ec_1,\Pc}$ and $H^\ell_{G,\Ec_2,\Pc}$ between these, the terms corresponding to $\Eb_{\sigma\sim \mu}[m_{H^\ell_{G,\Pc}}(\sigma)]$ will cancel (since $\Ec_1,\Ec_2$ have roughly the same effective interaction $B_{\Ec_1},B_{\Ec_2}$) allowing us to approximate the target quantity $\Sc_{q,\hat\beta}(H)$ (since $\Ec_1,\Ec_2$ have substantially different susceptibility gaps $S_{\Ec_1},S_{\Ec_2}$). That said, the proof of Lemma~\ref{lem:Pottsconstruction} is on the technical side and is deferred to Section~\ref{sec:deferred} of the full version.

To finish the proof of Lemma~\ref{lem:intermediatePotts}, we need the following crude bound on the change of susceptibility when we slightly change the values of the edge activities on a subset of the edges. To state the lemma, for a graph $G$ with edge-activity vector $\betab=\{\beta_e\}_{e\in E(H)}$, define the weight of an assignment $\sigma:V(G)\rightarrow[q]$ as $w(\sigma)=\prod_{e=\{u,v\}\in E(G)}(\beta_e)^{\mathbf{1}\{\sigma(u)=\sigma(v)\}}$, and let $\mu_{G;q,\betab}(\sigma)=w(\sigma)/Z_{G;q,\betab}$ denote the corresponding Gibbs distribution, where $Z_{G;q,\betab}$ is the normalizing constant.
\begin{lemma}\label{lem:t45t45perturb}
Let $H$ be a graph and $F$ be a subgraph of $H$ on the same set of vertices. Suppose that $\betab=\{\beta_e\}_{e\in E(H)},\betab'=\{\beta_e'\}_{e\in E(H)}$ are edge activity vectors  such that $\beta_e=\beta_e'=\beta$ for $e\in E(F)$, and $\beta_e=\beta_0$, $\beta_e'=\beta_1$ for $e\notin E(F)$. Then, for $\mu:=\mu_{H;q,\betab}$ and $\mu':=\mu_{H;q,\betab}'$, it holds that 
\[\Big|\Eb_{\sigma\sim \mu}[m_{F}(\sigma)]-\Eb_{\sigma\sim \mu'}[m_{F}(\sigma)]\Big|\leq |E(H)|^2 |\beta_0-\beta_1|.\]
\end{lemma}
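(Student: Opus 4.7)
I plan to prove the lemma by a telescoping hybrid argument combined with a direct derivative estimate on $\Eb[m_F]$ as a function of a single edge activity. Enumerate the edges of $E(H)\setminus E(F)$ as $e_1,\dots,e_k$ with $k \leq |E(H)|$, and define intermediate activity vectors $\betab^{(0)} = \betab,\ \betab^{(1)},\dots,\betab^{(k)} = \betab'$ where $\betab^{(j)}$ agrees with $\betab'$ on $e_1,\dots,e_j$ and with $\betab$ elsewhere. By the triangle inequality it suffices to bound each single-edge hybrid difference $\Delta_j := \big|\Eb_{\mu^{(j)}}[m_F] - \Eb_{\mu^{(j-1)}}[m_F]\big|$ by $|E(F)|\cdot|\beta_0-\beta_1|$, and then sum over $k\leq |E(H)|$ steps to obtain the claimed $|E(H)|^2\,|\beta_0-\beta_1|$ bound (in fact, with a numerical constant to spare).

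For the per-edge analysis, fix $e=e_j$ and view $g(\beta_e) := \Eb_\mu[m_F]$ as a function of the activity on $e$ with all other activities frozen. Splitting configurations by whether $e$ is monochromatic yields the linear decompositions $Z(\beta_e) = \beta_e\,U + V$ and $\sum_\sigma m_F(\sigma)\,w(\sigma) = \beta_e\,U' + V'$, where $U,V,U',V'$ are independent of $\beta_e$. The quotient rule gives
\[
g'(\beta_e) \;=\; \frac{U'V - UV'}{Z(\beta_e)^2} \;=\; \frac{UV}{Z(\beta_e)^2}\Big(\Eb_\mu[m_F\mid e\text{ mono}] - \Eb_\mu[m_F\mid e\text{ non-mono}]\Big),
\]
after noting $U'/U = \Eb_\mu[m_F\mid e\text{ mono}]$ and $V'/V = \Eb_\mu[m_F\mid e\text{ non-mono}]$. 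The conditional-expectation gap is at most $|E(F)|$ in absolute value (both expectations lie in $[0,|E(F)|]$), and AM-GM gives $Z(\beta_e)^2 = (\beta_e U + V)^2 \geq 4\beta_e UV$, so $|g'(\beta_e)| \leq |E(F)|/(4\beta_e)$.

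In the ferromagnetic regime in which Lemma~\ref{lem:t45t45perturb} is invoked (cf.\ Lemma~\ref{lem:Pottsconstruction}), all edge activities along the hybrid satisfy $\beta_e \geq 1$, so $|g'(\beta_e)| \leq |E(F)|/4$ uniformly. Integrating from $\beta_0$ to $\beta_1$ yields $\Delta_j \leq (|E(F)|/4)\,|\beta_0-\beta_1|$, and summing over $j=1,\dots,k\leq |E(H)|$ finishes the proof. The main (and essentially only) subtlety is the $1/\beta_e$ factor in the derivative bound: without the ferromagnetic assumption $\beta_e \geq 1$ the integral would pick up a $\log(\beta_1/\beta_0)$ factor instead, which could be much larger than $|\beta_1-\beta_0|$. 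In the applications in this paper this assumption is automatic, so no additional hypothesis is needed; the remaining steps (verifying the linear decompositions of $Z$ and the numerator, and the conditional-expectation identity) are routine bookkeeping.
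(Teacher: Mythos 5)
Your proof is correct and takes a genuinely different route from the paper's. The paper proceeds by a direct weight comparison: it first invokes stochastic monotonicity of the ferromagnetic Potts model (citing Grimmett) to get the one-sided inequality $\Eb_{\mu}[m_F]\geq\Eb_{\mu'}[m_F]$ when $\beta_0\geq\beta_1$, then bounds $w(\sigma)-w'(\sigma)\leq|E(H)|(\beta_0-\beta_1)w(\sigma)$ pointwise using the inequality $a^k-b^k\leq k(a-b)a^k$ and the fact that $Z_{G;q,\betab'}\leq Z_{G;q,\betab}$. You instead run a single-edge hybrid, compute $g'(\beta_e)$ exactly via the quotient rule on the linear decomposition $Z=\beta_e U+V$, recognise the numerator as a conditional-expectation gap bounded by $|E(F)|$, and kill the denominator with AM--GM. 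Your approach buys self-containment (no appeal to an external FKG-type monotonicity theorem), a symmetric bound that avoids the WLOG reduction, and a constant factor of $4$ to spare. It is also worth noting that you make explicit an assumption that the paper's proof uses silently: its inequality $|a^k-b^k|\leq k|a-b|a^k$ for $a>b>0$ is \emph{false} when $a<1$ (e.g.\ $a=\tfrac12,b=\tfrac14,k=2$), so the paper's argument also tacitly requires $\beta_0\geq 1$. Your flagging of the $\beta_e\geq1$ hypothesis (and the observation that dropping it would degrade the bound to $\log(\beta_1/\beta_0)$) is therefore a genuine clarification rather than a gap.
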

\begin{proof}
Suppose without loss of generality that $\beta_0\geq \beta_1$; by the monotonicity of the ferromagnetic Potts model we have that $\Eb_{\sigma\sim \mu}[m_{F}(\sigma)]\geq \Eb_{\sigma\sim \mu'}[m_{F}(\sigma)]$ (see, e.g., \cite[Theorems 1.16 \& 3.21]{grimmett2004random}).  For a configuration $\sigma: V(H)\rightarrow[q]$, let $w(\sigma),w'(\sigma)$ denote its weight under the edge activity vectors $\betab$ and $\betab'$, respectively. Consider an edge $e\in E(F)$. Then, using that for reals $a>b>0$ it holds that  $|a^k-b^{k}|\leq k|a-b|a^{k}$, we obtain that for every $\sigma$ it holds that 
\[0<w(\sigma)-w'(\sigma)=\beta^{m_{F}(\sigma)}\big(\beta_0^{|E(H)|-m_F(\sigma)} -(\beta_1)^{|E(H)|-m_F(\sigma)}\big)\leq |E(H)|(\beta_0-\beta_1) w(\sigma)\]
By summing over $\sigma$, it follows also that $Z_{G;q,\betab'}\leq Z_{G;q,\betab}$, and combining these we obtain that
\[\Eb_{\sigma\sim \mu}[m_{F}(\sigma)]-\Eb_{\sigma\sim \mu'}[m_{F}(\sigma)]\leq \frac{\sum_{\sigma}m_{F}(\sigma)\big(w(\sigma)-w'(\sigma)\big)}{Z_{G;q,\betab}}\leq |E(H)|^2(\beta_0-\beta_1).\qedhere\]
\end{proof}
We now give the proof of Lemma~\ref{lem:intermediatePotts} which we restate here.
\begin{lemintermediatePotts}
\statelemintermediatePotts
\end{lemintermediatePotts}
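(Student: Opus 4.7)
Given an instance $(H,\hat\beta)$ of $\SuscCubic(q)$ and desired relative error $\epsilon>0$, the plan is to recover $\Sc_{q,\hat\beta}(H)$ using two oracle calls via the subtraction trick from Section~\ref{sec:overview}. Fix a tolerance $r$ polynomially small in $\epsilon/|V(H)|$. First, invoke Lemma~\ref{lem:suscgadget} with this $r$ to obtain edge-interaction gadgets $\Ec_1,\Ec_2$ with $|B_{\Ec_i}-B|\leq r$, $|S_{\Ec_1}-S_{\Ec_2}|\geq S$, and $|S_{\Ec_i}|\leq \Xi$; invoke Lemma~\ref{lem:intergadget} repeatedly to obtain a path $\Pc$ whose effective interaction $B_{\Pc}$ exceeds $1$ by an arbitrarily small prescribed amount. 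Then pick an integer $\ell$ and path $\Pc$ so that, with $R_0>R_1$ the constants from Lemma~\ref{lem:Pottsconstruction}, the induced activities
\[\hat\beta_i:=\Bigl(\tfrac{1+(B_{\Pc}-1)R_0}{1+(B_{\Pc}-1)R_1}\Bigr)^{\ell}\Bigl(\tfrac{1+(B_{\Ec_i}-1)R_0}{1+(B_{\Ec_i}-1)R_1}\Bigr)\]
both lie in $(1\pm r)\hat\beta$; this is feasible since the outer factor is a continuous increasing function of $B_{\Pc}$ that vanishes at $B_{\Pc}=1$, the inner factors differ by $O(r)$, and the set of achievable $(B_{\Pc},\ell)$-pairs becomes dense near the target as $\ell$ grows and $B_{\Pc}-1$ shrinks.

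Next, invoke Lemma~\ref{lem:slypotts} with tolerance $\epsilon_G=r/|V(H)|^3$ to build a bipartite phase gadget $G\in\Gc^{t,\epsilon_G}_{n,\Delta}$ with $t\geq 3(\ell+1)$; form the bipartite max-degree-$\Delta$ graphs $H_i:=H^{\ell}_{G,\Ec_i,\Pc}$; and call the presumed $\Susc(q,\beta,\Delta)$ oracle with accuracy $r$ to get estimates $\hat S_i$ of $\Sc_{q,\beta}(H_i)$. By Lemma~\ref{lem:Pottsconstruction},
\[\Sc_{q,\beta}(H_i)=A_{\Ec_i}|E(H)|+\Eb_{\sigma\sim\mu_i}[m_{H^{\ell}_{G,\Pc}}(\sigma)]+(1\pm\epsilon')S_{\Ec_i}\bigl[(A_{0,i}-A_{1,i})\Sc_{q,\hat\beta_i}(H)+A_{1,i}|E(H)|\bigr],\]
with $A_{j,i}:=B_{\Ec_i}/\bigl(B_{\Ec_i}+(1-R_j)/R_j\bigr)$ and $\epsilon'=O(|V(H)|\epsilon_G)$. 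Crucially, $A_{\Ec_i}$ and $A_{0,i},A_{1,i}$ depend only on the $O(|\log r|)$-sized gadget $\Ec_i$ and the constants $R_0,R_1$, so they are computable exactly by brute-force enumeration over $\Ec_i$.

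Subtracting the two expressions is the crux. The awkward middle term cancels up to small error via Lemma~\ref{lem:t45t45perturb}: the marginal of $\mu_i$ on the vertices of $H^{\ell}_{G,\Pc}$ coincides with the Gibbs distribution on the common host $H^{\ell}_{G,\Pc}$ augmented, between each pair of ports that were attached to an $\Ec_i$-copy, by a single edge of activity $B_{\Ec_i}$; these two distributions differ only in the activity of $|E(H)|$ edges by $|B_{\Ec_1}-B_{\Ec_2}|\leq 2r$, giving a discrepancy of $O(r\cdot|V(H_1)|^2)$. The same lemma applied to $H$ with activities $\hat\beta_1,\hat\beta_2$ yields the Lipschitz bound $|\Sc_{q,\hat\beta_1}(H)-\Sc_{q,\hat\beta_2}(H)|=O(r\cdot|E(H)|^2)$, and $|A_{j,1}-A_{j,2}|=O(r)$ by direct calculation. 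Writing $T:=(A_{0,1}-A_{1,1})\Sc_{q,\hat\beta}(H)+A_{1,1}|E(H)|$, the subtraction yields
\[\hat S_1-\hat S_2=(A_{\Ec_1}-A_{\Ec_2})|E(H)|+(S_{\Ec_1}-S_{\Ec_2})\,T\pm O\bigl(r\cdot\mathrm{poly}(|V(H)|)\bigr),\]
after which $|S_{\Ec_1}-S_{\Ec_2}|\geq S>0$ lets us solve for $T$, and the positive lower bound $A_{0,1}-A_{1,1}\geq c>0$ (valid because $R_0\neq R_1$ and $B_{\Ec_1}>1$ is bounded) lets us solve for $\Sc_{q,\hat\beta}(H)$. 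The main obstacle I expect is the accumulated error accounting: the target $\Sc_{q,\hat\beta}(H)\geq|E(H)|/q$ must be recovered with relative error $\epsilon$ from two additive approximations, forcing $r^{-1}$ to be polynomial in $|V(H)|/\epsilon$; combined with the $O(|\log r|)$-size guarantees of Lemmas~\ref{lem:intergadget},~\ref{lem:suscgadget} and the $\mathrm{poly}(t,1/\epsilon_G)$-size guarantee of Lemma~\ref{lem:slypotts}, this keeps the whole reduction polynomial-time and completes the AP-reduction.
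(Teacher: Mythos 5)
Your plan matches the paper's reduction step for step: use Lemma~\ref{lem:suscgadget} for the pair $\Ec_1,\Ec_2$, Lemma~\ref{lem:intergadget} for the paths, Lemma~\ref{lem:slypotts} for the phase gadgets, build $H^{\ell}_{G,\Ec_i,\Pc}$, apply Lemma~\ref{lem:Pottsconstruction}, cancel the awkward middle term via Lemma~\ref{lem:t45t45perturb}, and solve for $\Sc_{q,\hat\beta}(H)$ by subtraction. The one genuine divergence (computing $A_{\Ec_i}$, $S_{\Ec_i}$ by brute-force over the $O(|\log r|)$-sized gadgets rather than the paper's extra oracle calls) is a legitimate and slightly cleaner alternative, since $q^{O(|\log r|)}$ is polynomial when $r^{-1}$ is polynomially bounded.

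There is, however, a gap in your density claim. You assert that ``the set of achievable $(B_{\Pc},\ell)$-pairs becomes dense near the target as $\ell$ grows and $B_{\Pc}-1$ shrinks,'' but this is only true for $\hat\beta$ at least as large as the inner factor $\tfrac{1+(B_{\Ec_i}-1)R_0}{1+(B_{\Ec_i}-1)R_1}$, which is a fixed constant strictly larger than $1$: the parameter $\delta$ controlling how close $B$ (and hence the $B_{\Ec_i}$) can be to $1$ is a constant depending only on $q,\Delta$, it cannot be tuned to the input $\hat\beta$, and the outer factor is always $\geq 1$. So for inputs $\hat\beta$ in the gap $\bigl(1,\tfrac{1+(B-1)R_0}{1+(B-1)R_1}\bigr)$ your construction cannot reach the target at all, no matter how $\ell$ and $\Pc$ are chosen. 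The paper handles this by restricting to $\hat\beta\geq\beta_0:=\bigl(\tfrac{q-1}{3}\bigr)^{1/\Delta}$ and arguing that for $\hat\beta<\beta_0$ Glauber dynamics mixes rapidly on cubic graphs, so $\Sc_{q,\hat\beta}(H)$ can be estimated directly without any oracle; it then picks $\delta$ so that the inner factor is below $\beta_0$, guaranteeing the target is reachable in the remaining range. You need to add this base-case argument (or an equivalent one) to make the reduction go through for all $\hat\beta\geq 1$. (Also, a small slip: the outer factor tends to $1$, not $0$, as $B_{\Pc}\to 1$; you presumably mean that its logarithm vanishes.)
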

\begin{proof}
Let $H$ be a cubic bipartite graph and $\hat\beta>1$ be the inputs to $\SuscCubic(q)$, and let $\eta\in (0,1)$ be the desired relative error that we want to approximate $\Sc_{q,\hat\beta}(H)$. We may assume that $\hat\beta\geq \beta_0= \big(\tfrac{q-1}{3}\big)^{1/\Delta}$;  for  $\hat\beta<\beta_0$,  a fairly standard coupling argument shows that Glauber dynamics converges rapidly to the Gibbs distribution $\mu_{H;q,\hat\beta}$, see for example \cite[Theorem 1.1]{mixingPotts}, and therefore it can be used to approximate $\Sc_{q,\hat\beta}(H)$ in time $poly\big(V(H),\tfrac{1}{\eta},\bit(\hat{\beta})\big)$ using rejection sampling. For some of the bounds below, it will also be convenient to assume that $|V(H)|,|E(H)|$ are bigger than a sufficiently large constant (otherwise, we can just brute-force).

Let $1>R_{\equal}>R_{\nequal}>0$ be the constants in Lemma~\ref{lem:Pottsconstruction}, and let $\delta\in (0,1)$ be a rational constant such that for all $B\in(1,1+\delta)$, it holds that $\tfrac{1+(B-1)R_{\equal}}{1+(B-1)R_{\nequal}}\leq \beta_0<\hat\beta$. Note that the choice of $\delta$ is a constant depending on $q,\Delta$ but  independent of $H$ and $\hat\beta$.   By Lemma~\ref{lem:suscgadget}, there are constants $B\in (1,1+\delta)$, $S>0$ and an algorithm, which, on input a rational $r\in(0,1/2)$, outputs in time  $poly(\bit(r))$ a pair of susceptibility gadgets  $\mathcal{E}_1,\mathcal{E}_2$, each of maximum degree $3$ and size $O(|\log r|)$, such that
\begin{equation}\label{eq:BguarP}
|B_{\Ec_1} - B|,\, |B_{\Ec_2} - B|\leq r,\mbox{ but } |S_{\Ec_1}-S_{\Ec_2}|\geq S.
\end{equation}

Let $\epsilon=\tfrac{\eta}{|E(H)|^5}$ and $t=\big\lceil \big(\tfrac{|E(H)|\log \hat\beta}{\epsilon\delta(B-1)}\big)^4\big\rceil$. By Lemma~\ref{lem:slypotts}, there is an algorithm that outputs in time $poly(t,\tfrac{1}{\epsilon})$ an integer $n$ and a graph $G\in {\cal G}^{t,\epsilon}_{n,\Delta}$ (satisfying Items~\ref{it:Potts1} and~\ref{it:Potts2}).
Use the algorithm of Lemma~\ref{lem:suscgadget} to obtain  gadgets $\Ec_1,\Ec_2$ satisfying \eqref{eq:BguarP} for $r=\tfrac{\epsilon^4}{10\delta(R_0-R_1)\beta_0}$. Moreover, use Lemma~\ref{lem:intergadget}, to obtain in time $poly(\bit(r))$ an edge interaction gadget with $1<B_{\Pc}<1+\epsilon$. Let $\ell$ be the smallest positive integer such that
\[\big(\tfrac{1+(B_\Pc-1)R_{\equal}}{1+(B_\Pc-1)R_{\nequal}}\big)^{\ell}\big(\tfrac{1+(B-1)R_{\equal}}{1+(B-1)R_{\nequal}}\big)>\hat\beta\]
and note that such an integer exists by the choice of $\delta$ since the l.h.s. for $\ell=0$ is smaller than $\hat\beta$, and each of the fractions is bigger than 1 from $R_0>R_1$ and $B>1$. In fact, we have that $\ell=O(\tfrac{1}{\epsilon}\log \hat\beta)$, where the implicit constants depend only on $q,\Delta$. It follows in particular that $\ell<t/3$.

For $i\in \{1,2\}$, consider now the graphs $\widehat{H}_i=H^{\ell}_{G,\Pc,\Ec_i}$ and let $\mu_i=\mu_{\widehat{H}_i;q,\beta}$. From Lemma~\ref{lem:Pottsconstruction}, we have that
\[\Sc_{q,\beta}(\widehat{H}_i)=\Ac_{\Ec_i}|E(H)|+\Eb_{\sigma\sim \mu_i}[m_{H^\ell_{G,\Pc}}(\sigma)]+(1\pm \eta^2) S_{\Ec_i} \Big[\big(A^{(i)}_{0}-A^{(i)}_{1}\big) \Sc_{q,\hat\beta_i}(H)+A^{(i)}_{\nequal}|E(H)|\Big],\]
where  $A_{\Ec_i}= \Eb_{\sigma\sim \mu_{\Ec_i}}[m_{\Ec_i}(\sigma)\mid \sigma_\rho=\sigma_{\rho'}]$ and 
\[\hat\beta_i:=\big(\tfrac{1+(B_{\Pc}-1)R_{\equal}}{1+(B_{\Pc}-1)R_{\nequal}}\big)^{\ell}\big(\tfrac{1+(B_{\Ec_i}-1)R_{\equal}}{1+(B_{\Ec_i}-1)R_{\nequal}}\big),\qquad A^{(i)}_{j}:=\tfrac{B_{\Ec_i}}{B_{\Ec_i}+\tfrac{1-R_j}{R_j}}\mbox{ for }j\in \{0,1\}.\] 
From \eqref{eq:BguarP}, we have that $\hat\beta_i=(1\pm \epsilon^3)\hat\beta$, and therefore from Lemma~\ref{lem:t45t45perturb}, we have that
\begin{gather*}
\big|\Eb_{\sigma\sim \mu_1}[m_{H^\ell_{G,\Pc}}(\sigma)]-\Eb_{\sigma\sim \mu_2}[m_{H^\ell_{G,\Pc}}(\sigma)]\big|\leq |E(H^\ell_{G,\Ec,\Pc})|\epsilon^3\leq \epsilon^2,\\
|\Sc_{q,\hat\beta_i}(H)-\Sc_{q,\hat\beta}(H)|\leq |E(H)|^2\epsilon^3\leq \epsilon^2
\end{gather*}
Using \eqref{eq:BguarP}, we also have that 
\[A^{(i)}_0=(1\pm \epsilon)A_0,A^{(i)}_1=(1\pm \epsilon^2)A_i, \mbox{ where } A_{j}:=\tfrac{B}{B+\tfrac{1-R_j}{R_j}}\mbox{ for }j\in \{0,1\}.\]
We can invoke the oracle for $\Sc_{q,\beta}(\widehat{H}_i)$ to compute $\hat{\Sc}_i$ such that $\hat{\Sc}_i=(1\pm \epsilon^2)\Sc_{q,\beta}(\widehat{H}_i)$. Note also that $\Ec_i$ has size $poly(\bit(r))$ and therefore we can invoke the oracle for $\Susc(q,\Delta,\beta)$ to compute $\hat{A}_{\Ec_i}, \hat{S}_{\Ec_1}$ such that $\hat{A}_{\Ec_i}=(1\pm \epsilon^2)A_{\Ec_i}$ and $\hat{S}_{\Ec_i}=(1\pm \epsilon^2)S_{\Ec_i}$.\footnote{For $\hat{A}_{\Ec_i}$, just invoke the oracle on the graph obtained from $\Ec_i$ by identifying $\rho_i$ and $\rho_i'$; this graph has maximum degree at most $3$ since $\rho_i,\rho_i'$ both have degree 1. Observe that $S_{\Ec_i}=2A_{\Ec_i}-\Sc_{q,\beta}(\Ec_i)$ and therefore we can obtain the desired $\hat{S}_{\Ec_i}$ by using a further oracle call on $\Ec_i$ to approximate $\Sc_{q,\beta}(\Ec_i)$.}
It follows that
\[\hat{S}=\frac{1}{A_0-A_1}\Big(\frac{(\hat{\Sc}_1-\hat{\Sc}_2)-|E(H)|(\hat{A}_{\Ec_1}-\hat{A}_{\Ec_2})}{\hat{S}_{\Ec_1}-\hat{S}_{\Ec_2}}-A_1 |E(H)|\Big)\]
is within a factor of $(1\pm \eta)$ of the susceptibility $\Sc_{q,\hat\beta}(H)$, as needed. This finishes the reduction and therefore the proof of Lemma~\ref{lem:intermediatePotts}. 
\end{proof}

\section{Hardness of vertex-edge observables for 2-spin systems}\label{sec:2spin}
Throughout this section, we will fix integer $\Delta\geq 3$, and antiferromagnetic $(\beta,\gamma,\lambda)\in \Uc_\Delta$ in the non-uniqueness region. We will also fix an $(a,b,c)$ vertex-edge observable that is non-trivial on bipartite graphs. For a graph $G=(V,E)$ and an assignment $\sigma:V'\rightarrow\two$ where $V'$ is a superset of $V$, it will be convenient to denote by $\oc_G(\sigma)=a|\sigma_V|+b m_0(\sigma_V)+c m_1(\sigma_V)$ the value of the observable on the graph $G$ with respect to the restriction of the assignment of $\sigma$ onto $V$.

\subsection{The interpolation scheme}\label{sec:43frfe4t}
Analogously to Section~\ref{sec:proofthm}, it will be convenient to consider the following computational problems. 
\prob{$\Obser(\beta,\gamma,\lambda,a,b,c)$.} 
{ A bipartite graph $G$ with max degree $\Delta$.} 
 { The $(a,b,c)$ vertex-edge observable on $G$ with parameters $\beta$,$\gamma$,$\lambda$, i.e., the value $\Oc_{\beta,\gamma,\lambda}(G)$.}

In the second, the parameter is going to be the edge activity $\alpha<1$ of an antiferromagnetic Ising model; note that the problem allows the vertex activity to be part of the input.
\prob{$\MagIsing(\alpha)$.} 
{ A cubic bipartite graph $H$, and a rational vertex activity $\hat\lambda>0$.} 
 { The magnetization on $H$ for the Ising model with parameters $\alpha,\hat\lambda$, i.e., the value $\Mc_{\alpha,\alpha,\hat\lambda}(H)$.}

We now show the following analogue of the interpolation scheme in Lemma~\ref{lem:intermediatePotts}.
\newcommand{\statelemintermediateIsing}{Let $\Delta\geq 3$ be an integer and $(\beta,\gamma,\lambda)\in \Uc_{\Delta}$. Then, there is $\alpha\in (0,1)$ such that for any $(a,b,c)$ vertex-edge  observable that is not trivial on bipartite graphs,  
\[\MagIsing(\alpha)\leq_{\mathrm{AP}} \Obser(\beta,\gamma,\lambda,a,b,c).\]}
\begin{lemma}\label{lem:intermediateIsing}
\statelemintermediateIsing
\end{lemma}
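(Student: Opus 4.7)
My plan is to mirror the Potts reduction of Lemma~\ref{lem:intermediatePotts}, replacing the Potts-specific ingredients with 2-spin analogues, and in particular replacing interpolation via edge activities by interpolation via vertex activities. Given an instance $(H,\hat\lambda)$ of $\MagIsing(\alpha)$ with $H$ cubic bipartite, I will build a bipartite graph $\widehat H$ of maximum degree $\Delta$ by (i) replacing each vertex of $H$ with a 2-spin bipartite phase gadget whose ports act as connectors; (ii) for each edge of $H$ hooking up the corresponding pairs of phase gadgets through edge-interaction subgadgets; and (iii) appending enough ``field'' subgadgets to each phase gadget so that its effective single-spin marginal matches the target Ising vertex activity $\hat\lambda$. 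Analogously to Lemma~\ref{lem:Pottsconstruction}, the resulting graph will have each copy of the phase gadget behave like a meta-spin, yielding on $H$ an effective antiferromagnetic Ising model with vertex activity $\hat\lambda$ and some effective edge activity $\alpha\in(0,1)$ determined once and for all by $(\beta,\gamma,\lambda,\Delta)$ and the gadget design.

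The subgadgets are built as follows. Phase gadgets in the 2-spin non-uniqueness region $\Uc_\Delta$ are supplied by the constructions of~\cite{CAI}: they are almost $\Delta$-regular bipartite graphs whose Gibbs distribution concentrates on two ordered phases with prescribed port marginals, and whose ports can be re-used as ``external connectors''. Since in the 2-spin setting there is no edge activity arbitrarily close to $1$ (contrast the path gadget of Lemma~\ref{lem:intergadget}), I instead rely on rooted \emph{trees}: broadcasting recursions yield families of trees whose root vertex activity is arbitrarily close to $1$, and attaching these in parallel to the ports tunes the effective external field and effective edge activity with arbitrary precision. For the probe step I will construct an analogue of Lemma~\ref{lem:suscgadget}: a pair $\Ec_1,\Ec_2$ of edge-interaction gadgets with \emph{matched} effective edge activities but \emph{different} ``observable gaps'', where the observable gap of $\Ec$ is the difference of $\Eb_{\sigma\sim\mu_{\Ec}}[\oc_{\Ec}(\sigma)\mid \sigma_\rho=\sigma_{\rho'}]$ against the corresponding mixed-boundary expectation (conditioned appropriately on the two ordered phases). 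Installing one probe $\Ec_j$ per edge of $H$, invoking the observable oracle, and subtracting the two outputs cancels the complicated interior contribution (because the effective model is unchanged) and, after dividing by the observable-gap difference, returns a $(1\pm\eta)$-approximation to $\Mc_{\alpha,\alpha,\hat\lambda}(H)$, exactly as in the Potts proof.

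The main obstacle will be the construction of the probe pair $(\Ec_1,\Ec_2)$ for \emph{arbitrary} nontrivial signatures $(a,b,c)$ of vertex-edge observables. In the Potts case only the susceptibility (the $(0,1,1)$ observable) entered, so the Cauchy functional-equation argument of~\cite{averages} applied to a single-variable gap function; here the observable jointly weights $|\sigma|$, $m_0(\sigma)$, and $m_1(\sigma)$, so the contradiction argument has to be carried out for the linear combination $a|\sigma|+bm_0(\sigma)+cm_1(\sigma)$ and must carefully exclude the degenerate cases of Definition~\ref{def:observable}, in particular the case $\beta=\gamma$, $\lambda=1$ on bipartite graphs where spin-flip symmetry makes the magnetization identically $n/2$ and hence the gap function vanishes. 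Once this probe pair is in hand, the accompanying analog of Lemma~\ref{lem:Pottsconstruction} reduces to a conditioning calculation on the phase gadgets (using their port-marginal property), and a perturbation bound analogous to Lemma~\ref{lem:t45t45perturb} controls error propagation from the $O(\log(1/r))$-sized probe and tree gadgets through the reduction.
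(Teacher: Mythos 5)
Your high-level architecture matches the paper's — phase gadgets from \cite{CAI} as meta-spins, trees attached at ports as interpolation ``field'' gadgets to tune a vertex activity, and a subtraction trick with a matched-field / mismatched-gap probe pair — but there is one structural error that derails the reduction: your probe is an \emph{edge-interaction gadget attached once per edge of $H$}, and its ``observable gap'' is defined by conditioning on $\sigma_\rho=\sigma_{\rho'}$ versus $\sigma_\rho\neq\sigma_{\rho'}$. That is the Potts probe. If you install one such probe per edge of $H$ and subtract, what you recover after cancellation is (up to offsets) $\sum_{e\in E(H)}\Pr[\text{the two incident meta-spins agree}]$, i.e.\ a \emph{susceptibility}-type quantity of the effective Ising model on $H$, not its magnetization. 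That does not give you $\Mc_{\alpha,\alpha,\hat\lambda}(H)$, which is the output required by $\MagIsing(\alpha)$. And you cannot simply switch the target to the Ising susceptibility, because then you would have to interpolate over the effective edge parameter $\hat\alpha$ rather than over $\hat\lambda$, which is precisely what the paper says is infeasible in the 2-spin setting (there is no analogue of the long-path gadget that makes edge activities arbitrarily close to $1$ here; that is why $\alpha$ is fixed and only $\hat\lambda$ is the interpolation parameter in the $\MagIsing$ problem).

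The fix is exactly what the paper does: the probe must itself be a \emph{field gadget} (rooted tree or near-tree, Definition~\ref{def:field}), attached once \emph{per vertex} of $H$ at a port of the corresponding phase gadget, and the relevant ``observable gap'' is the vertex-conditioning one from Definition~\ref{def:gapobservable}, namely $O_{\Tc}=\Eb_{\sigma\sim\mu_\Tc}[\oc_\Tc(\sigma)\mid\sigma_\rho=1]-a-\Eb_{\sigma\sim\mu_\Tc}[\oc_\Tc(\sigma)\mid\sigma_\rho=0]$. With this probe, the conditioning on the two ordered phases of the host gadget gives two different root marginals $q_\pl,q_\mi$, and the subtraction trick produces $\sum_{v\in V(H)}\Eb_{\sigma\sim\mu}[\sigma(w_v)]\approx (q_\pl-q_\mi)\Mc_{\alpha,\hat\lambda}(H)+q_\mi|V(H)|$, which after rearranging yields the magnetization of $H$. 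You then use the field gadgets $\Tc_\pl,\Tc_\mi$ (with effective fields just above and below a value $\tilde R\neq 1$) in counts $\ell_\pl,\ell_\mi$ to tune $\hat\lambda$, while the effective edge parameter $\alpha=\tfrac{M_{\pl\pl}M_{\mi\mi}}{M_{\pl\mi}M_{\mi\pl}}$ is determined once and for all by $\beta,\gamma,q_\pl,q_\mi$ and is not interpolated. A related minor point: you also propose ``hooking up'' phase gadgets through edge-interaction subgadgets for each edge of $H$; this is unnecessary, and not what the paper does — in the 2-spin construction one simply adds two plain edges per edge of $H$ (one joining the $\pl$-side ports, one joining the $\mi$-side ports), and the resulting effective edge activity on $H$ already lands in $(0,1)$ because $\beta\gamma<1$ and $q_\pl\neq q_\mi$. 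Your observation about carefully excluding the degenerate case $\beta=\gamma$, $\lambda=1$ (where spin-flip symmetry kills the gap) is correct and is exactly where the ``non-trivial on bipartite graphs'' hypothesis is used.
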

Assuming the key Lemma~\ref{lem:intermediateIsing}, the proof of Theorem~\ref{thm:main2} can be done analogously to Theorem~\ref{thm:main1}, interpolating now in terms of the vertex activity $\hat\lambda$.  We defer the proof to Section~\ref{sec:proof-2spin} of the Appendix.

\subsection{The gadgets}
In this section, we outline the gadgets that will be used to prove Lemma~\ref{lem:intermediateIsing}. These are analogous to those presented in the case of the Potts model, especially the phase gadgets. To account for general vertex-edge observables, we refine appropriately the field-gadget idea of \cite{averages}, by now paying attention to the so-called observable gap (cf. Definition~\ref{def:gapobservable}).
\subsubsection{Bipartite phase gadgets for antiferromagnetic 2-spin systems}\label{sec:phase2spingadgets}
We follow the same notation as in Section~\ref{sec:phasegadgets} to denote for integers $t,n,\Delta$ the class  $\Gc^t_{n,\Delta}$ of bipartite graphs where there are $n$ vertices with degree $\Delta$ on each side, and $t$ vertices of degree $\Delta-1$ on each side. For a graph $G\in\Gc^t_{n,\Delta}$,  we denote its bipartition by $(U^\pl,U^\mi)$ where $U^\pl,U^\mi$ are vertex sets with $|U^\pl|=|U^\mi|=n$, and we denote by $W^\pl,W^\mi$ the sets of vertices with degree $\Delta-1$ on each side of the bipartition,  so that $|W^\pl|=|W^\mi|=t$. We will refer to set $W=W^+\cup W^-$ as the \textit{ports} of $G$.  For $\sigma: U\rightarrow \two$, we define the \emph{phase} $\Yc(\sigma)$ of the configuration $\sigma$ as the side of the bipartite graph which has the most occupied vertices under $\sigma$, i.e., 
\[\Yc(\sigma)=\arg\max_{i\in \{\pl,\mi\}}|\sigma^{-1}(1)\cap U^{i}|.\] 
It is known that for $(\beta,\gamma,\lambda)\in \Uc_{\Delta}$ the system of equations  $x=\frac{1}{\lambda}\left(\frac{ \beta y+1}{y+\gamma}\right)^{\Delta-1},y=\frac{1}{\lambda}\left(\frac{ \beta x+1}{x+\gamma}\right)^{\Delta-1}$ has a unique solution with $y>x>0$, see, e.g., \cite[Lemma 7]{GSVIsing}. Let $q_{\pl}=\tfrac{1}{1+x}$, $q_\mi=\tfrac{1}{1+y}$ and note that $q_\pl,q_\mi$ are distinct numbers in the interval $(0,1)$. Define the product distributions $Q^\pl_W(\cdot), Q^\mi_W(\cdot)$  by 
\begin{equation}\label{eq:2spinproduct}
Q^\plm_W(\tau)= (q_\plm)^{|\tau^{-1}(1) \cap W^\pl|}(1-q_\plm)^{|\tau^{-1}(0) \cap W^\pl|}(q_\mip)^{|\tau^{-1}(1) \cap W^\mi|}(1-q_\mip)^{|\tau^{-1}(0) \cap W^\mi|}.
\end{equation} 

We will need the following two properties from the phase gadget $G$ for some sufficiently small $\epsilon>0$. Let $\mu:=\mu_{G;\beta,\gamma,\lambda}$.
\begin{enumerate}
	\item \label{it:2spin1}The phases appear with roughly equal probability, i.e., $|\mu(\Yc(\sigma)=\plm) - \tfrac{1}{2}\big| \leq \epsilon$.
  \item \label{it:2spin2}For any $\tau:W\rightarrow\two$, $\Big|
       \displaystyle \frac{\mu\big(\sigma_{W}=\tau\,\mid\, \Yc(\sigma)=\plm\big)}
       {Q^{\plm}_W(\tau)}
       -1
       \Big|
       \leq  \epsilon$.
\end{enumerate}
Let $\Gc^{t,\epsilon}_{n,\Delta}$ denote the set of graphs $G\in \Gc^t_{n,\Delta}$ satisfying Items~\ref{it:2spin1} and~\ref{it:2spin2}. The following lemma is is shown in \cite{CAI}.
\begin{lemma}[{\cite[Lemma 9]{CAI}}]
\label{lem:Cai2spin}
Let $\Delta\geq 3$ and $(\beta,\gamma,\lambda)\in \Uc_\Delta$.  Then, there is a randomized algorithm that, on input integer $t\geq 1$ and $\epsilon>0$, outputs in time $\mathrm{poly}(t,\tfrac{1}{\epsilon})$ an integer $n$ and a graph $G$ that belongs to $\Gc^{t,\epsilon}_{n,\Delta}$,  with probability $\geq 3/4$.
\end{lemma}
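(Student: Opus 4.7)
The plan is to follow the random-graph / second-moment paradigm of \cite{CAI} (which adapts the Sly--Sun gadget construction from the hard-core model to general antiferromagnetic 2-spin systems in the non-uniqueness region). The algorithm draws $G$ from a bipartite configuration model: on each side of the bipartition place $n$ vertices of degree $\Delta$ together with $t$ vertices of degree $\Delta-1$, pair up the half-edges uniformly at random between the two sides, and reject-and-retry if multi-edges or short cycles arise. Since $\Delta$ is constant, these bad events happen with bounded probability and a constant number of retries suffices. The output integer $n$ is chosen so that the total numbers of half-edges on the two sides match and so that the local statistics around each port behave like a random bipartite $(\Delta,\Delta)$-regular tree of depth $O(\log t)$ with high probability.

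To verify Item~\ref{it:2spin1}, one analyses the Gibbs distribution $\mu_{G;\beta,\gamma,\lambda}$ via a first-moment computation on the configuration model, parametrised by the pair of ``side densities'' $(\alpha_\pl,\alpha_\mi)$ giving the fraction of spin-$1$ vertices on each side of the bipartition. Because $(\beta,\gamma,\lambda)\in\Uc_\Delta$, the system of tree recursions defining $x,y$ has exactly two stable symmetric fixpoints $(x^*,y^*)$ and $(y^*,x^*)$, so the first-moment function has two symmetric maxima at the corresponding densities, and the phase $\Yc(\sigma)$ records which maximum $\sigma$ is close to. A second-moment calculation, combined with small-subgraph conditioning to absorb the lower-order fluctuations coming from short cycles, shows that $\mu$ concentrates on these two clusters with essentially equal weight, up to error that is polynomially small in the size of the gadget; choosing $n$ sufficiently large in terms of $t,1/\epsilon$ makes this error at most $\epsilon$.

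For Item~\ref{it:2spin2}, fix a phase (say $\pl$) and a port $w\in W$. Conditioned on $\Yc(\sigma)=\pl$, the contribution of the rest of the graph to $w$ behaves like a boundary condition on a locally tree-like neighborhood, and the tree recursion drives the effective marginal at $w$ to $q_\pl$ if $w\in W^\pl$ and to $q_\mi$ if $w\in W^\mi$, matching the product distribution $Q^\pl_W$ defined in \eqref{eq:2spinproduct}. Joint independence across ports follows because with high probability the ports are pairwise far apart in $G$, and within a phase correlations decay exponentially in the graph distance (this is the decay-within-a-phase phenomenon that is standard in these constructions). Quantitatively, truncating the tree recursion at depth $O(\log t)$ and absorbing the tail into $\epsilon$ yields the stated bound.

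The main technical obstacle is the second-moment estimate in the previous paragraph: one must show that, over pairs $(\sigma,\sigma')$ with prescribed joint type on each side, the dominant contributions come exactly from pairs lying in the same phase, and that the exponential rate of the second moment is twice that of the first moment. This reduces to an optimisation problem over a four-parameter density simplex whose critical-point analysis relies on the convexity/concavity structure implied by $|f'(x^*)|>1$; this is precisely the content that $\Uc_\Delta$ is designed to capture, and the $(\Delta-1)$-degree ports perturb the counting only by lower-order factors since $t=o(n)$ in the regime of interest. Once the second-moment ratio is controlled, Chebyshev's inequality plus the small-subgraph conditioning upgrade gives that a random $G$ satisfies Items~\ref{it:2spin1} and~\ref{it:2spin2} with probability bounded away from zero, and standard boosting (by repeating the randomized construction a constant number of times) yields the claimed success probability $\geq 3/4$.
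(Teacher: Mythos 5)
The paper does not prove this lemma; it is imported verbatim as \cite[Lemma~9]{CAI} (see the sentence ``The following lemma is is shown in \cite{CAI}'' preceding the statement). There is therefore no ``paper's proof'' to compare against, and your task was in effect to reconstruct the argument of \cite{CAI}. Your sketch does capture the right high-level architecture of that proof: a random bipartite configuration model with $n$ degree-$\Delta$ and $t$ degree-$(\Delta-1)$ vertices on each side, a first-moment computation over the two-dimensional space of side densities whose two symmetric maxima correspond to the solutions $(x,y)$ of the tree recursion, a second-moment estimate showing the partition function concentrates around these two clusters, and a local tree-like analysis giving the product structure $Q^{\plm}_W$ at the ports.

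Two points in your write-up do not hold up, however. First, your concluding ``standard boosting by repeating the construction a constant number of times'' cannot upgrade a ``bounded away from zero'' success probability to $3/4$: membership in $\Gc^{t,\epsilon}_{n,\Delta}$ is defined via Items~\ref{it:2spin1}--\ref{it:2spin2}, which refer to properties of the Gibbs distribution $\mu_{G;\beta,\gamma,\lambda}$ that one cannot efficiently verify on a given candidate $G$ (this is precisely the problem we are reducing from). Without an efficient verifier, repetition does not amplify. The actual argument must deliver the $\geq 3/4$ probability in a single shot, and this is exactly what small-subgraph conditioning buys: it upgrades the constant-probability conclusion of Chebyshev to $1-o(1)$ as $n\to\infty$ (with $n$ chosen as a sufficiently large polynomial in $t,1/\epsilon$), from which $3/4$ follows for $t$ above a fixed threshold and brute force handles the finitely many remaining $t$. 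Second, you simultaneously propose ``reject-and-retry if \ldots{} short cycles arise'' and ``small-subgraph conditioning to absorb the lower-order fluctuations coming from short cycles''; these are alternative mechanisms, not complementary ones, and the rejection route is not available here because conditioning on the absence of all $O(\log n)$-length cycles is not a constant-probability event in the configuration model. The correct account is to reject only multi-edges (a constant-probability event) and then use small-subgraph conditioning on the surviving short-cycle counts.
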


\subsubsection{Field gadgets with observable gaps}
We adopt the following definition of ``field'' gadgets from \cite{averages}.
\begin{definition}\label{def:field}
For $\lambda\neq \frac{1-\beta}{1-\gamma}$, a {\em field gadget} is a rooted tree $\Tc$ whose root $\rho$ has degree one. When $\lambda=\frac{1-\beta}{1-\gamma}$, a field gadget consists of a rooted bipartite graph obtained from a rooted tree where some  of the leaves have been replaced by a distinct cycle of length four (by identifying the leaf with a vertex of the cycle).  
\end{definition}

\begin{definition}\label{def:gapobservable}
Let $\Tc$ be a field gadget rooted at $\rho$, and $\mu=\mu_{\Tc;\beta,\gamma,\lambda}$. We denote by $R_{\mathcal{T}}=R_{\mathcal{T}}(\beta,\gamma,\lambda)$ the {\em effective field} of the gadget, i.e., $R_{\Tc}=\tfrac{1}{\lambda}\frac{\mu(\sigma_\rho=1)}{\mu(\sigma_\rho=0)}$ and by $O_{\Tc}=O_{\Tc}(\beta,\gamma,\lambda)$ the {\em observable gap} of the gadget, i.e., $O_{\Tc}=\Eb_{\sigma\sim \mu}[\,\oc_{\Tc}(\sigma)\mid \sigma_\rho=1]-a-\Eb_{\sigma\sim\mu}[\,\oc_{\Tc}(\sigma)\mid \sigma_\rho=0]$.
\end{definition}
The division by $\lambda$ in the definition of the effective field of a gadget is to avoid double-counting the contribution of the root later on.

\begin{lemma}\label{lem:tcplus}
Let $(\beta,\gamma,\lambda)$ be antiferromagnetic such that at least one of $\beta\neq \gamma$ or $\lambda\neq 1$ holds.  There are constants $C,\tilde R>0$ with $\tilde R\neq 1$ and an algorithm which, on input a rational $r\in(0,1/2)$, outputs in time  $poly(\bit(r))$  field gadgets  $\Tc_\pl,\Tc_\mi$, each of maximum degree $3$ and size $O(|\log r|)$, such that
\[R_{\Tc_\pl}> R_{\Tc_\mi}+r/2 \mbox{ and } |R_{\Tc_\pl} - \tilde R|,\ |R_{\Tc_\mi} - \tilde R|\leq r.\]
\end{lemma}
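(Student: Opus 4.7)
The plan is to build the field gadgets $\Tc_\pl,\Tc_\mi$ bottom-up using the tree recursion for the 2-spin system, and to exploit contraction of an associated iteration to control simultaneously their sizes and their effective fields.  Setting $\phi(y):=\frac{\gamma y+1}{y+\beta}$ and $\psi(R):=\phi(\lambda R)$, the single-edge field gadget gives $R_0=\psi(1)=\frac{\gamma\lambda+1}{\lambda+\beta}$.  On max-degree-$3$ gadgets we have two natural degree-preserving operations: (i) \emph{Grow}, which adds an internal vertex $w$ above the root $\rho$ of $\Tc$ and then a new root $\rho'$ above $w$, producing effective field $h(R_\Tc):=\psi(\psi(R_\Tc))$; and (ii) \emph{Combine}, which joins $\Tc_1,\Tc_2$ at a new degree-$3$ vertex $w$ together with a new root $\rho$, producing $g(R_1,R_2):=\phi\bigl(\lambda\,\psi(R_1)\,\psi(R_2)\bigr)$.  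A direct computation shows that the positive fixed point of $\psi$ solves $\lambda R^2+(\beta-\gamma\lambda)R-1=0$ and equals $1$ precisely when $\lambda=\frac{1-\beta}{1-\gamma}$, which is the degenerate case singled out in Definition~\ref{def:field}.

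In the generic case $\lambda\neq\frac{1-\beta}{1-\gamma}$, I would take $\tilde R$ either as the positive fixed point of $\psi$ (when the Grow iterate $h=\psi^2$ is contracting there, i.e.\ $|\psi'(\tilde R)|<1$) or as one of the two points of the attracting period-$2$ orbit of $\psi$ (otherwise); in either case $\tilde R\neq 1$ as required.  To construct the pair of gadgets, I would take two distinct starting fields $R_0,R_0'$ straddling $\tilde R$ (e.g., the single-edge base and a small variant obtained by one Combine step) and iterate Grow the same number of times $k$ on each.  Contraction of $h$ (or of $h^2$ in the periodic-orbit case) at rate $c<1$ gives $|R_k-R_k'|=\Theta(c^k|R_0-R_0'|)$ and $|R_k-\tilde R|,\,|R_k'-\tilde R|=O(c^k)$, so choosing $k=\Theta(|\log r|)$ such that $c^k$ lies in an appropriate multiple of $[r/2,r]$ simultaneously yields $|R_{\Tc_\pm}-\tilde R|\leq r$ and $|R_{\Tc_\pl}-R_{\Tc_\mi}|\geq r/2$.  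The resulting gadgets have size $O(k)=O(|\log r|)$ and are computable in $\mathrm{poly}(\mathrm{bits}(r))$ time.

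For the degenerate case $\lambda=\frac{1-\beta}{1-\gamma}$, induction using $\psi(1)=1$ shows that every pure tree gadget has $R_\Tc=1$, so Definition~\ref{def:field} allows augmenting the construction with $4$-cycles.  The key observation is that a $4$-cycle glued to a leaf yields a root-ratio $y\neq\lambda$ (the loop breaks the exact cancellation that forces $R=1$ on trees), hence $\phi(y)\neq 1$ and we obtain a base gadget with $R\neq 1$.  The standing hypothesis that at least one of $\beta\neq\gamma$ or $\lambda\neq 1$ holds is precisely what rules out the fully spin-symmetric Ising case, where a $4$-cycle would still be $0\leftrightarrow 1$ symmetric and therefore collapse to $R=1$.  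With this enlarged pool of bases, the iterative construction from the previous paragraph applies verbatim.

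The main obstacle is the dynamical analysis underpinning the iteration: one must show, directly from antiferromagneticity, that either the fixed point of $\psi$ or an attracting period-$2$ orbit of $\psi$ supplies a target $\tilde R\neq 1$ at which the binary-tree iterate $\psi^2$ (and hence Grow) is genuinely contracting, and that the two starting fields can be chosen so that their distances from $\tilde R$ are comparable to their mutual separation.  This is subtle because the non-uniqueness condition $(\beta,\gamma,\lambda)\in\Uc_\Delta$ governs the $\Delta$-ary iterate $\psi^{\Delta-1}$, not the binary iterate $\psi^2$ relevant here, so the contraction at $\tilde R$ must be established independently.  The cleanest route follows the template of \cite{averages}: a fixed-point analysis plus a perturbation argument (ultimately a Cauchy functional equation ruling out the pathological scenario in which reachable fields concentrate on a discrete set) gives the required contraction rate and hence the $O(|\log r|)$ size bound.
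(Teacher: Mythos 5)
The proposal takes a genuinely different route from the paper, and there is a real gap in it. The paper's proof is a one-liner that reuses the $\texttt{Build-gadget}$ machinery assembled for Theorem~\ref{thm:observablegadget}: it targets the fixpoint $x^*$ of the \emph{binary} tree recursion $\Phi_2(R)=\tfrac{1+\gamma\lambda R^2}{\beta+\lambda R^2}$, relies on the dense family $\mathcal{L}_{\twospin}$ from Lemma~\ref{ledtwospin}, and uses the well-covered property (Lemma~\ref{lctg4}) so that \emph{any} target $x\in I$ is reachable up to error $CC_{\mathrm{max}}^{t}$. You instead propose to iterate the \emph{unary} (path) recursion $\psi(R)=\tfrac{\gamma\lambda R+1}{\lambda R+\beta}$ to its fixpoint $\tilde R$, starting from just two bases. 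That is a cleaner, more self-contained idea, and some of your observations are correct — pure trees all have $R=1$ exactly when $\lambda=\tfrac{1-\beta}{1-\gamma}$, the four-cycle patch is what Definition~\ref{def:field} provides, and the hypothesis $\beta\neq\gamma$ or $\lambda\neq 1$ is what rules out the fully symmetric Ising case.

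The gap is in the ``choose $k$ so that $c^k$ lands in the right window'' step. With only a bounded number of starting fields and a geometric contraction at rate $c$, the achievable pairs (distance to $\tilde R$, separation) form a sparse geometric lattice, and you need to land in a multiplicative window of width roughly $M_0/(2D_0)$. With two bases straddling $\tilde R$ equidistantly this window has ratio $4$, so the construction requires $c\geq 1/4$; with a single base and a one-step offset the required inequality $2c(1-c)\geq 1$ has no solution at all. Nothing in antiferromagneticity bounds $c=|\psi'(\tilde R)|$ away from $0$: for instance $\beta$ large, $\gamma=0$, $\lambda=1$ gives $c=\tfrac{\sqrt{\beta^2+4}-\beta}{\sqrt{\beta^2+4}+\beta}\to 0$. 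The paper circumvents exactly this by using the dense family $\mathcal{L}_{\twospin}$ so that \emph{every} target in $I$ is hit. Two smaller points: the period-two-orbit case you allow for never occurs — $\psi$ is a M\"obius map with $\mathrm{tr}>0$ and $\det<0$, so its two eigenvalues have opposite signs and unequal magnitudes, whence the positive fixpoint has multiplier $\mu_2/\mu_1\in(-1,0)$ and is always attracting. And the Cauchy functional equation you invoke at the end plays no role in Lemma~\ref{lem:tcplus}: in the paper it is used only in Theorem~\ref{thm:observablegadget} to rule out a pathological \emph{observable-gap} function $F$, while the density of effective fields (Lemma~\ref{ledtwospin}) is inherited from the analytic techniques of~\cite{complex}. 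Your closing paragraph, where you defer to ``the template of~\cite{averages}'', is effectively conceding that the simplified path construction does not stand on its own.
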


\begin{theorem}\label{thm:observablegadget} 
Let $(\beta,\gamma,\lambda)$ be antiferromagnetic, and consider any non-trivial vertex-edge observable on general graphs. There exist constants $\hat{R},\hat{O},\Xi>0$ and  an algorithm, which, on input a rational $r\in(0,1/2)$, outputs in time  $poly(\bit(r))$ a pair of field gadgets  $\Tc_1,\Tc_2$, each of maximum degree $3$ and size $O(|\log r|)$, such that
\[|R_{\Tc_1} - \hat{R}|,\, |R_{\Tc_2} - \hat{R}|\leq r,\mbox{ but } |O_{\Tc_1}-O_{\Tc_2}|\geq \hat{O}.\]
Moreover, the observable gaps  $O_{\Tc_1}, O_{\Tc_2}$ are upper-bounded in absolute value by the constant~$\Xi$.
\end{theorem}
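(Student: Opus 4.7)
\medskip
\noindent\textbf{Proof proposal for Theorem~\ref{thm:observablegadget}.} The plan is to adapt the Cauchy-functional-equation strategy of \cite{averages} to the more general vertex-edge observables by building a rich family of composite field gadgets from the base gadgets of Lemma~\ref{lem:tcplus}, tune their effective fields to a common target $\hat{R}$, and then derive a contradiction with non-triviality of $(a,b,c)$ unless two of them have appreciably different observable gaps. Concretely, given $r$, first invoke Lemma~\ref{lem:tcplus} with an auxiliary precision $r''\ll r$ (polynomial in $r$ with a constant exponent that depends only on $\beta,\gamma,\lambda$) to obtain base field gadgets $\Tc_\pl,\Tc_\mi$ of max-degree $3$ and size $O(|\log r|)$ with effective fields satisfying $R_{\Tc_\pl}\geq R_{\Tc_\mi}+r''/2$ and $|R_{\Tc_\plm}-\tilde R|\leq r''$. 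Then define a scaffold family by taking a stem edge $\rho\rho_1$ and, at $\rho_1$, attaching two ``arms'' each consisting of a short binary tree whose leaves are populated by copies of $\Tc_\pl$ and $\Tc_\mi$ in varying proportions (replacing leaves by $4$-cycles in the exceptional case $\lambda=(1-\beta)/(1-\gamma)$ so that we remain within Definition~\ref{def:field}). By mixing how many $\Tc_\pl$ versus $\Tc_\mi$ leaves are used and permuting which side of the scaffold they sit in, one obtains a set of gadgets whose effective fields densely fill an interval $I$ around some $\hat R$ with precision $r$.

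Next I would establish the uniform bound $|O_\Tc|\leq \Xi$. Because the scaffold has max-degree $3$ and the base gadgets are trees (with the possible addition of $4$-cycles in the $\lambda=(1-\beta)/(1-\gamma)$ case), on such a structure the influence of the root spin decays geometrically down the tree under antiferromagnetic interactions of bounded strength. Writing $O_\Tc=\Eb_{\sigma\sim\mu}[\oc_\Tc(\sigma)\mid\sigma_\rho=1]-a-\Eb_{\sigma\sim\mu}[\oc_\Tc(\sigma)\mid\sigma_\rho=0]$ and expanding the observable as a sum over vertices and edges, each term contributes only $O(\rho^d)$ for some $\rho<1$ depending only on $(\beta,\gamma,\lambda)$ and $d$ the depth from the root, giving a geometric series bounded by a constant $\Xi$.

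The main step, and the primary obstacle, is to show that one cannot have $|O_{\Tc_1}-O_{\Tc_2}|<\hat O$ for every pair of gadgets in the family that share the common effective field $\hat R$. I would argue by contradiction: if all gadgets with effective field within $r$ of $\hat R$ had observable gaps differing by less than any $\hat O>0$, then by the parametric construction above one derives, in the $r\to 0$ limit, a functional identity relating $O_{\Tc_\pl}$, $O_{\Tc_\mi}$, and the local contributions of the scaffold edges. Using that the ratio $R_{\Tc_\pl}/R_{\Tc_\mi}$ can be driven to any rational in a dense subinterval around $1$ (by varying the composition), this identity reduces to a Cauchy-type functional equation $\phi(xy)=\phi(x)+\phi(y)$ in an unknown function $\phi$ that encodes the per-leaf observable contribution. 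Solving it forces $\phi$ to be logarithmic, and substituting back expresses $a,b,c$ as a specific linear combination of $\beta,\gamma,\lambda$-dependent quantities that coincides exactly with one of the degeneracy cases (i)--(iii) in Definition~\ref{def:observable} (or case (iv) when $\beta=\gamma,\lambda=1$). Since the observable is non-trivial on general graphs, the assumption fails and a pair $\Tc_1,\Tc_2$ with $|O_{\Tc_1}-O_{\Tc_2}|\geq \hat O$ must exist for some constant $\hat O>0$.

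The reduction is algorithmic because, with $r''$ chosen polynomially in $r$, the size of $\Tc_\plm$ is $O(|\log r|)$, the scaffold has constant depth, so the family of composite gadgets has polynomially many candidates of size $O(|\log r|)$, and both $R_\Tc$ and $O_\Tc$ can be evaluated in polynomial time on such bounded-treewidth instances. I would then scan this family and return the first pair matching the required field-closeness and gap-separation thresholds. The principal technical hurdle is the new cross-term bookkeeping: unlike the pure magnetization setting of \cite{averages}, the edge contributions $b\,m_0+c\,m_1$ produce interactions between leaves of a scaffold arm and the scaffold edges themselves, so the decomposition of $O_\Tc$ picks up extra terms that must be carefully separated into a leaf-local part (to which the Cauchy argument applies) and a controlled remainder before the functional equation can be extracted.
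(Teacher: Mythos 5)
Your high-level plan matches the paper's strategy (dense family of composite field gadgets, tune the effective field to a common target, Cauchy functional-equation contradiction with non-triviality), but three of its load-bearing steps are asserted rather than established, and one is likely incorrect as stated.

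First, you build the family out of just $\Tc_\pl,\Tc_\mi$ from Lemma~\ref{lem:tcplus}, but two gadgets whose effective fields differ by a tiny amount $r''$ do not give the density the recursion needs. The paper starts from a richer family $\mathcal L_\twospin=\{\Tc_1,\dots,\Tc_k\}$ (Lemma~\ref{ledtwospin}) whose effective fields are $\tau\delta$-dense around the fixpoint $x^*$; it is this density, via Lemmas~\ref{contr} and~\ref{lctg4}, that makes \texttt{Build-gadget} able to hit any target in $I$. Mixing two nearly-equal fields on a shallow binary scaffold does not obviously fill an interval, and since Lemma~\ref{lem:tcplus} is itself obtained by running the same \texttt{Build-gadget} machinery, your proposal implicitly presupposes the tool it is meant to construct. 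Second, your $|O_\Tc|\le\Xi$ argument sums per-vertex contributions of size $O(\rho^d)$ over a binary tree, which converges only if $\rho<1/2$; the antiferromagnetic contraction rate $C_{\max}=\max_{R\in I'}|\omega(R)|$ is only guaranteed to be below $1$. The paper never needs a decay-of-influence estimate at all: because the recursion in Lemma~\ref{lem:2spinrecursion} always merges the running gadget with a \emph{single} new $\Tc_i$, the observable gap obeys a scalar contraction $|\psi_i(R,O)|\le\hat T+C_{\max}(|O|+\max_i|O_i|)$, so it stays in the fixed interval $J=[-T,T]$ (Lemma~\ref{lem:obsbounds}); a branching scaffold breaks exactly this bookkeeping. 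Third, and most critically, the contradiction step is the content, not a consequence, of the theorem. Once $F$ is forced to be logarithmic (Lemma~\ref{lem:3deecer}), one must plug $F(x)=K\log(x/x^*)+L$ into the functional equation~\eqref{eq:functionFF}, differentiate twice in $y$, and simplify the resulting rational identity involving $\theta$ and $\omega$; it is only after this computation (Lemma~\ref{final2spin}) that one sees the identity forces $a=b=c=0$ or a degenerate case. This is precisely where the general vertex-edge observable (with the extra $b\,m_0+c\,m_1$ terms, which change $\theta$) departs from the pure magnetization case of \cite{averages}, and it is the genuinely new technical work of the theorem. You flag the ``cross-term bookkeeping'' as the main hurdle but do not carry it out, so the non-triviality certificate is missing from the proposal.
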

The proofs of Lemma~\ref{lem:tcplus} and Theorem~\ref{thm:observablegadget} follow closely the approach in \cite{averages}, and are therefore deferred to  Section~\ref{sec:lastproofs}.

\subsection{The reduction}
Let $H$ be a cubic bipartite graph which is input to the problem $\MagIsing(\alpha)$ of Section~\ref{sec:43frfe4t}, for some constant $\alpha\in(0,1)$ to be specified.  For integers $n,t\geq 1$ and rational $\epsilon>0$, let $G\in \calG^{t,\epsilon}_{n,\Delta}$ be a bipartite phase gadget satisfying Items~\ref{it:2spin1} and~\ref{it:2spin2} of Section~\ref{sec:phase2spingadgets}. Let $\Tc_\pl,\Tc_\mi,\Tc$ be field gadgets. Note that the gadgets $\Tc_\pl,\Tc_\mi$ serve a different role to that of $\Tc$, and in particular they will be used to interpolate over the vertex activity $\hat\lambda$.

To achieve this, for integers $\ell_\pl,\ell_\mi$ satisfying $t\geq 5+\max\{\ell_\pl,\ell_\mi\}$, we define the graph $H^{\ell_\pl,\ell_\mi}_{G,\Tc_\pl,\Tc_\mi, \Tc}$ as follows. For each vertex $v$ of $H$ replace it with a distinct copy of $G$, denoted by $G_v$; we denote by $U^{\plm}_v,W_v^{\plm}$ the sets corresponding to $U^{\plm},W^{\plm}$ in $G_v$. Moreover, for each $v\in V(H)$, attach one copy of the gadget $\Tc$ and $\ell_\pl$ copies of the gadget $\Tc_\pl$ on mutually distinct vertices of $W^{\pl}$ by identifying them with the corresponding roots. Similarly, attach $\ell_\mi$ copies of the gadget $\Tc_\mi$ on mutually distinct vertices of $W^{\mi}$. Let $\Tc_v$ be the copy of $\Tc_v$ corresponding to $v$, and $w_v$ be its root. Let $W_{\Tc}=\{w_v \mid v\in V(H)\}$ be the  set of all these roots. Further, for each edge $\{u,v\}$ of $H$, add an edge between $W_u^\pl$ and $W_v^{\pl}$, and an edge   between $W_u^\mi$ and $W_v^{\mi}$. 

Let $H^{\ell_\pl,\ell_\mi}_{G,\Tc_\pl,\Tc_\mi}$ be the graph without the internal vertices and edges of the copies of gadget $\Tc$, i.e., we keep only the roots $W_{\Tc}$ of the gadgets  in $H^{\ell_\pl,\ell_\mi}_{G,\Tc_\pl,\Tc_\mi}$. The following piece of notation will be useful: for a graph $J$ and a subgraph $J'$ of $J$, given  a configuration $\sigma:V(J)\rightarrow [q]$, it will be convenient to denote by $m_{J'}(\sigma)=\sum_{e=\{u,v\}\in E(J')}\mathbf{1}\{\sigma(u)=\sigma(v)\}$ the number of monochromatic edges of $J'$ under $\sigma$.

The following lemma relates the value of the observable $\Oc_{\beta,\gamma,\lambda}(H^{\ell_\pl,\ell_\mi}_{G,\Tc_\pl,\Tc_\mi, \Tc})$ with the magnetization $\Sc_{\alpha,\hat\lambda}(H)$, for some appropriate $\hat\lambda$ that is a function of the parameters $\beta,\gamma,\lambda$ and $\ell_\pl,\ell_\mi, R_{\Tc_\pl},R_{\Tc_\pl}, R_{\Tc}$. Analogously to Section~\ref{sec:rgtgtgrtte}, for a graph $J$ and a subgraph $J'$ of $J$, given  a configuration $\sigma:V(J)\rightarrow [q]$, it will be convenient to denote by $o_{J'}(\sigma)=a|\sigma_{V(J')}|+b m_0(\sigma_{V(J')})+cm_1(\sigma_{V(J')})$ be the contribution of $J'$ to the value of the observable on $J$.   
\newcommand{\statelemspinconstruction}{Let $\Delta\geq 3$ be an integer, $(\beta,\gamma,\lambda)\in \Uc_{\Delta}$, and $(a,b,c)$ be a vertex-edge observable.  Then, there are constants $q_\pl,q_\mi\in (0,1)$ with $q_\pl>q_\mi$ and $\alpha\in (0,1)$ so that the following holds for any field gadgets $\Tc_\pl,\Tc_\mi,\Tc$ with effective fields $ R_\pl,\hat R_\mi,R$ and observable gaps $O_1, O_2,O$, and any positive integers $\ell_\pl,\ell_\mi,t$ with $t\geq 5+\max\{\ell_\pl,\ell_\mi\}$.

For a cubic bipartite graph $H$, for any $\epsilon\leq \tfrac{1}{(5|V(H)|)^{2}}$, any integer $n$ and phase gadget $G\in \Gc^{t,\epsilon}_{n,\Delta}$, for $\mu:=\mu_{H^{\ell_\pl,\ell_\mi}_{G,\Tc_\pl,\Tc_\mi,\Tc}}$ and $\epsilon'=10 |V(H)| \epsilon$, it holds that
\begin{align*}
\Oc_{\beta,\gamma,\lambda}(H^{\ell_\pl,\ell_\mi}_{G,\Tc_\pl,\Tc_\mi,\Tc})
&=\Ac |V(H)|+\Eb_{\sigma\sim \mu}\big[\oc_{H^{\ell_\pl,\ell_\mi}_{G,\Tc_\pl,\Tc_\mi}}(\sigma)\big]
+(1\pm \epsilon') O \big[(q_\pl-q_\mi) \Mc_{\alpha,\hat\lambda}(H)+q_\mi|V(H)|\big],
\end{align*}
where  $\Ac= \Eb_{\sigma\sim \mu_{\Tc}}[\oc_\Tc(\sigma)\mid \sigma_\rho=0]$ and $\hat\lambda:=\big(\tfrac{q_\pl R+1-q_\pl}{q_\mi R+1-q_\mi}\big)\big(\tfrac{q_\pl R_{\pl}+1-q_\pl}{q_\mi R_{\pl}+1-q_\mi}\big)^{\ell_\pl}/\big(\tfrac{q_\pl R_{\mi}+1-q_\pl}{q_\mi R_{\mi}+1-q_\mi}\big)^{\ell_\mi}$.}
\begin{lemma}\label{lem:2spinconstruction}
\statelemspinconstruction
\end{lemma}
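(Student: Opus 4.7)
My plan closely parallels the Potts setting of Lemma~\ref{lem:Pottsconstruction}, with field gadgets now playing the role that edge-interaction gadgets and paths did there. The proof has three main steps: decompose the observable, identify the induced distribution on the phase vector as an Ising model on $H$, and track the approximation errors.

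First, I split the observable as
\[
\oc_{H^{\ell_\pl, \ell_\mi}_{G, \Tc_\pl, \Tc_\mi, \Tc}}(\sigma) = \oc_{H^{\ell_\pl, \ell_\mi}_{G, \Tc_\pl, \Tc_\mi}}(\sigma) + \sum_{v \in V(H)} \bigl(\oc_{\Tc_v}(\sigma) - a\, \mathbf{1}\{\sigma_{w_v} = 1\}\bigr),
\]
where the correction avoids double-counting the vertex weight of the root $w_v$, which appears in both $\Tc_v$ and $H^{\ell_\pl, \ell_\mi}_{G, \Tc_\pl, \Tc_\mi}$. Taking expectations under $\mu$ and invoking Definition~\ref{def:gapobservable}, the conditional expectations $\Eb[\oc_{\Tc_v}(\sigma)\mid \sigma_{w_v}=0]=\Ac$ and $\Eb[\oc_{\Tc_v}(\sigma)\mid \sigma_{w_v}=1]=\Ac+a+O$ yield a contribution $\Ac + O\cdot \mu(\sigma_{w_v}=1)$ per vertex, summing to $\Ac|V(H)| + O\sum_v p_v$ where $p_v := \mu(\sigma_{w_v}=1)$.

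The core step is to show that the induced marginal on the phase vector $Y=(\Yc(G_v))_{v\in V(H)}$ is approximately the Ising distribution on $H$ with edge activity $\alpha\in(0,1)$ and vertex activity $\hat\lambda$. By Items~\ref{it:2spin1} and~\ref{it:2spin2}, conditional on $Y$ the ports of each $G_v$ are an approximate product measure with marginals from~\eqref{eq:2spinproduct}: spin-$1$ probability $q_\pl$ on ports of the side aligned with $Y_v$ and $q_\mi$ on the opposite side. Integrating out the port spins, the field gadgets attached inside $G_v$ contribute a weight ratio between $Y_v=\pl$ and $Y_v=\mi$; using the effective-field identity $\mu_\Tc(\sigma_\rho=1)/\mu_\Tc(\sigma_\rho=0)=\lambda R_\Tc$, this ratio equals exactly the stated $\hat\lambda$, with the first factor coming from the single $\Tc$ gadget, the $\ell_\pl$-th power from the copies of $\Tc_\pl$ attached on $W^\pl$, and the reciprocal $\ell_\mi$-th power from the copies of $\Tc_\mi$ on $W^\mi$ (where the roles of $q_\pl,q_\mi$ swap). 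For each edge $\{u,v\}$ of $H$, the two inter-gadget edges yield a weight depending only on $(Y_u,Y_v)$; the ratio between matched $(\pl,\pl),(\mi,\mi)$ and mismatched $(\pl,\mi),(\mi,\pl)$ weights defines a fixed constant $\alpha$ depending only on $\beta,\gamma,\lambda,\Delta$, with $\alpha\in(0,1)$ following from $\beta\gamma<1$ and $q_\pl\neq q_\mi$ by a Cauchy-Schwarz argument.

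Combining the above, $p_v\approx q_\pl \mu(Y_v=\pl)+q_\mi \mu(Y_v=\mi)$, so $\sum_v p_v\approx (q_\pl-q_\mi)\Mc_{\alpha,\hat\lambda}(H)+q_\mi|V(H)|$; plugging back into the first step yields the claimed formula. The main technical obstacle, as in Lemma~\ref{lem:Pottsconstruction}, is the careful bookkeeping of errors from the $O(t|V(H)|)$ applications of the product-measure approximation (Item~\ref{it:2spin2}) and the phase-balance guarantee (Item~\ref{it:2spin1}); these compound but remain controlled via the assumption $\epsilon\leq 1/(5|V(H)|)^2$ and are absorbed into the $(1\pm\epsilon')$ factor. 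A subtlety specific to the 2-spin case, not present in Lemma~\ref{lem:Pottsconstruction}, is that the observable can have nonzero edge coefficients $b,c$; the edge contributions inside each $\Tc_v$ (including the edge incident to $w_v$) are cleanly absorbed by $\Ac$ and $O$ via Definition~\ref{def:gapobservable}, leaving only the vertex coefficient $a$ to be subtracted to prevent double-counting of the root.
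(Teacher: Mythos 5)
The proposal tracks the paper's proof of Lemma~\ref{lem:2spinconstruction} step for step: the same decomposition $\oc_{\widehat H}=\oc_F + \sum_v(\oc_{\Tc_v}-a\mathbf{1}\{\sigma_{w_v}=1\})$, the same use of Definition~\ref{def:gapobservable} to reduce the $\Tc_v$-contribution to $\Ac + O\,\mu(\sigma(w_v)=1)$, the same identification of the phase marginal $\mu(\widehat{\Yc}(\sigma)=Y)$ with an Ising model $\mu_{H;\alpha,\hat\lambda}$ via the product-measure approximation on ports, and the same derivation of $\hat\lambda$ and of $\alpha$ from the matrix product. So the approach is essentially identical.

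That said, both the proposal and the paper's proof contain the same subtle slip when evaluating $p_v=\mu(\sigma(w_v)=1)$. After conditioning on $\sigma_{\overline W_v}$ and the phase $Y(v)$, the component containing $w_v$ consists of $U_v\cup\{w_v\}\cup V(\Tc_v)$ --- i.e.\ the field gadget $\Tc_v$ is still attached to $w_v$. Hence the conditional marginal of $\sigma(w_v)$ is not $q_{Y(v)}$ but rather $\kappa_{Y(v)}:=\frac{q_{Y(v)}R}{q_{Y(v)}R+1-q_{Y(v)}}$, where $R=R_{\Tc}$. This is exactly the phenomenon that the Potts analogue Lemma~\ref{lem:Pottsconstruction} handles correctly: there the conditional probability of $\mathbf{1}_e$ is $A_j=\frac{B_{\Ec}}{B_{\Ec}+\frac{1-R_j}{R_j}}$ rather than $R_j$, precisely because the edge-interaction gadget $\Ec_e$ re-weights the joint distribution of its terminals. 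The student even cites the Potts case as a template but does not carry over this correction. The upshot is that the coefficients appearing in the final formula should be $\kappa_\pl,\kappa_\mi$ (which depend on $R_\Tc$, so they are no longer ``constants depending only on $\beta,\gamma,\lambda,\Delta$'' as the lemma statement requires). The downstream reduction in Lemma~\ref{lem:intermediateIsing} is robust to this, since there $\Tc_1,\Tc_2$ have effective fields within $r$ of a fixed $\hat R$, so $\kappa_\pm^{(1)}\approx\kappa_\pm^{(2)}$ and the subtraction trick still works with $\kappa_\pm(\hat R)$ in place of $q_\pm$; but a careful proof of the lemma as stated should address this. As a side remark, the claim that $\alpha\in(0,1)$ ``by a Cauchy--Schwarz argument'' is imprecise: it follows from the determinant factorization $\det\Mb=(\beta\gamma-1)(q_\pl-q_\mi)^2<0$, which gives $M_{\pl\pl}M_{\mi\mi}<M_{\pl\mi}M_{\mi\pl}$ directly.
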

The proof of Lemma~\ref{lem:2spinconstruction} builds upon similar ideas to that of Lemma~\ref{lem:Pottsconstruction} (see in particular the discussion around there for how this blends with the overview of Section~\ref{sec:overview}) and is deferred to Section~\ref{sec:deferred} of the full version.

We will need the following bound on the change of the observable value when we change the vertex activities of a subset of the vertices. Namely, let $G=(V,E)$ be a graph and $(\beta,\gamma)$ be antiferromagnetic. For a vertex-activity vector $\lambdab=\{\lambda_v\}_{v\in V}$, define the Gibbs distribution $\mu_{G;\beta,\gamma,\lambdab}(\sigma)\propto\beta^{m_{0}(\sigma)}\gamma^{m_{1}(\sigma)}\prod_{v\in V;\, \sigma(v)=1} \lambda_v$ for $\sigma:V\rightarrow \two$. 

\begin{lemma}[Minor adaptation of {\cite[Lemma 35]{averages}}]\label{lem:2spinperturb}
Let $(\beta,\gamma)$ be antiferromagnetic,  $\lambda,\lambda_1,\lambda_2>0$, and $(a,b,c)$ be a vertex-edge observable. Let $G=(V,E)$ be a graph and $S\subseteq V$. For $i\in \{1,2\}$, let $\lambdab_i$ be the field vector on $V$, where every $v\in S$ has activity $\lambda_i$, whereas every $v\in V\backslash S$ has activity $\lambda$. Let $\mu_i$ be the Gibbs distribution on $G$ with parameters $\beta,\gamma, \lambdab_i$. Then, for every subgraph $F$ of $G$, it holds that 
\[\big|\Eb_{\sigma\sim \mu_2}[\oc_F(\sigma)]-\Eb_{\sigma\sim \mu_1}[\oc_F(\sigma)]\big|\leq 2\big(|a|+|b|+|c|\big)\big(|V(G)|^2+|E(G)|^2\big)\, \Big|\frac{\lambda_2}{\lambda_1}-1\Big|.\]
\end{lemma}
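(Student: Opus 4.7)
The plan is a one-vertex-at-a-time telescoping argument, which avoids the exponential-in-$|V(G)|$ blow-up that a naive comparison of the weight ratios $(\lambda_2/\lambda_1)^{|\sigma_S|}$ would incur. First I would enumerate the vertices of $S$ as $v_1,\ldots,v_{|S|}$ and, for $0\le i\le |S|$, let $\nu_i$ denote the Gibbs distribution on $G$ with parameters $\beta,\gamma$ and activities $\lambda_2$ on $v_1,\ldots,v_i$, $\lambda_1$ on $v_{i+1},\ldots,v_{|S|}$, and $\lambda$ on $V\setminus S$, so that $\nu_0=\mu_1$ and $\nu_{|S|}=\mu_2$. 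Then I would write
\[
\Eb_{\mu_2}[\oc_F]-\Eb_{\mu_1}[\oc_F]=\sum_{i=0}^{|S|-1}\bigl(\Eb_{\nu_{i+1}}[\oc_F]-\Eb_{\nu_i}[\oc_F]\bigr)
\]
and bound each summand by a single-vertex computation.

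For the $i$th step only the activity at $v:=v_{i+1}$ changes, from $\lambda_1$ to $\lambda_2$. Writing $\rho=\lambda_2/\lambda_1$ and $p=\nu_i(\sigma(v)=1)$, the weight-ratio identity $\nu_{i+1}(\sigma)/\nu_i(\sigma)=\rho^{\sigma(v)}/(1+(\rho-1)p)$ yields directly
\[
\nu_{i+1}(\sigma)-\nu_i(\sigma)=\frac{\rho-1}{1+(\rho-1)p}\,\nu_i(\sigma)\bigl(\mathbf{1}[\sigma(v)=1]-p\bigr),
\]
so that testing against $\oc_F$ produces the covariance $\mathrm{Cov}_{\nu_i}(\oc_F,\mathbf{1}[\sigma(v)=1])$, which by Popoviciu's variance inequality is at most $\tfrac{1}{2}(\max \oc_F-\min \oc_F)\sqrt{p(1-p)}$. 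Combining with $1+(\rho-1)p\ge\min(1,\rho)$ and the crude estimate $\|\oc_F\|_\infty\le(|a|+|b|+|c|)(|V(G)|+|E(G)|)$, I would conclude that each summand is at most $\tfrac{|\rho-1|}{2\min(1,\rho)}(|a|+|b|+|c|)(|V(G)|+|E(G)|)$.

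Summing over $|S|\le|V(G)|$ steps and using $|V(G)|(|V(G)|+|E(G)|)\le\tfrac{3}{2}(|V(G)|^2+|E(G)|^2)$ gives the claimed bound in the regime $\rho\ge 1/2$, where $1/\min(1,\rho)\le 2$; the remaining regime $\rho<1/2$, in which $|\rho-1|>1/2$, is handled by the trivial bound $|\Eb_{\mu_2}[\oc_F]-\Eb_{\mu_1}[\oc_F]|\le 2\|\oc_F\|_\infty$, which already lies below the right-hand side. The main obstacle is essentially just bookkeeping of constants in order to match the stated factor of $2$; no structural properties of antiferromagnetic $2$-spin systems (such as the monotonicity used in the Potts analogue Lemma~\ref{lem:t45t45perturb}) are required, and this is the sense in which the proof is a minor adaptation of the argument from~\cite{averages}—the only substantive new ingredient is the crude $\ell_\infty$-bound on $\oc_F$, which extends the original magnetization-only argument to general vertex-edge observables.
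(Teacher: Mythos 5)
The paper does not actually supply a proof of this lemma --- it is stated as a ``minor adaptation'' of \cite[Lemma 35]{averages}, whose proof is not reproduced here --- so there is nothing in the visible source to compare your argument against directly. That said, your telescoping-plus-covariance argument is sound in structure, and I agree that no monotonicity is needed (in contrast to the ferromagnetic-Potts analogue, Lemma~\ref{lem:t45t45perturb}). The weight-ratio identity, the covariance rewrite, and the bound $1+(\rho-1)p\geq\min(1,\rho)$ are all correct.

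There is one small but real gap in the $\rho<1/2$ fallback. You bound the left-hand side by $2\|\oc_F\|_\infty\leq 2(|a|+|b|+|c|)(|V(G)|+|E(G)|)$ and claim this is already below the right-hand side, which in that regime is at least $(|a|+|b|+|c|)(|V(G)|^2+|E(G)|^2)$. But $2(|V(G)|+|E(G)|)\leq|V(G)|^2+|E(G)|^2$ fails for tiny graphs, e.g.\ $|V(G)|=2,|E(G)|=1$ gives $6>5$. Two clean fixes: (i) replace $2\|\oc_F\|_\infty$ with the range bound $\max\oc_F-\min\oc_F\leq(|a|+|b|+|c|)(|V(G)|+|E(G)|)$, since $a|\sigma_{V(F)}|$, $bm_0$, $cm_1$ each range over an interval of length $|a||V(F)|$, $|b||E(F)|$, $|c||E(F)|$ respectively; then one only needs $|V(G)|+|E(G)|\leq|V(G)|^2+|E(G)|^2$, which holds for all nonempty graphs. (ii) Better yet, avoid the case split entirely: instead of Cauchy--Schwarz with Popoviciu, use the exact identity $\mathrm{Cov}_{\nu_i}(\oc_F,\mathbf{1}[\sigma(v)=1])=p(1-p)\bigl(\Eb_{\nu_i}[\oc_F\mid\sigma(v)=1]-\Eb_{\nu_i}[\oc_F\mid\sigma(v)=0]\bigr)$, whose absolute value is at most $p(1-p)(\max\oc_F-\min\oc_F)$. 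Since $\frac{p(1-p)}{1+(\rho-1)p}\leq 1$ uniformly over $p\in[0,1]$ and $\rho>0$ (for $\rho\geq1$ the denominator is $\geq1$; for $\rho<1$ the denominator is $\geq1-p\geq p(1-p)$), each summand is at most $|\rho-1|(|a|+|b|+|c|)(|V(G)|+|E(G)|)$, and summing over $|S|\leq|V(G)|$ and using $|V(G)|(|V(G)|+|E(G)|)\leq\tfrac{3}{2}(|V(G)|^2+|E(G)|^2)$ gives the claimed bound with constant $3/2\leq 2$, with no trivial-bound fallback needed.

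With one of these patches the argument is complete. I would also state explicitly that $1+(\rho-1)p>0$ for all $p\in[0,1]$ (which follows from $\rho>0$), so the normalization identity is well-defined.
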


We now have all the ingredients to prove Lemma~\ref{lem:intermediateIsing}.
\begin{lemintermediateIsing}
\statelemintermediateIsing
\end{lemintermediateIsing}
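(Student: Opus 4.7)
The plan is to mirror closely the proof of Lemma~\ref{lem:intermediatePotts}, using the 2-spin analogues of all the gadget tools. Let $H$ be a cubic bipartite graph and $\hat\lambda>0$ be a rational input to $\MagIsing(\alpha)$, with desired relative error $\eta\in(0,1)$. As in the Potts case, we first handle the regime where $\hat\lambda$ is very close to $0$ or $\infty$ by a direct coupling/Glauber-dynamics argument for the antiferromagnetic Ising model at small/large external field; so we may assume $\hat\lambda$ lies in some bounded range $[\lambda_0,1/\lambda_0]$ determined only by the target $\alpha$. Choose $\alpha\in(0,1)$ and the constants $q_\pl>q_\mi$ from Lemma~\ref{lem:2spinconstruction}, so that the formula there maps the gadget parameters $(R,R_\pl,R_\mi,\ell_\pl,\ell_\mi)$ to an effective Ising vertex activity $\hat\lambda$ of the base graph $H$ at antiferromagnetic edge activity $\alpha$.

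Next apply Lemma~\ref{lem:tcplus} to obtain the interpolation gadgets $\Tc_\pl,\Tc_\mi$ with effective fields $R_\pl> R_\mi$, both within distance~$r$ of some $\tilde R\neq 1$, for any desired precision $r$. Because $\tilde R\neq 1$, the ratio $\big(\tfrac{q_\pl R_{\pl}+1-q_\pl}{q_\mi R_{\pl}+1-q_\mi}\big)/\big(\tfrac{q_\pl R_{\mi}+1-q_\pl}{q_\mi R_{\mi}+1-q_\mi}\big)$ is a fixed quantity bounded away from $1$, so by a Lemma~\ref{lem:intergadget}-style choice of the smallest positive integers $\ell_\pl,\ell_\mi$ we can make the product appearing in the expression for $\hat\lambda$ in Lemma~\ref{lem:2spinconstruction} land within a $(1\pm \epsilon^3)$-factor of the target value~$\hat\lambda$, with $\ell_\pl,\ell_\mi=O(\tfrac{1}{\epsilon}\log \hat\lambda)$. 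Here $\epsilon$ will be chosen poly-small in $\eta/|V(H)|$. Then apply Theorem~\ref{thm:observablegadget} to obtain a pair of observable gadgets $\Tc_1,\Tc_2$ of size $O(|\log r|)$, whose effective fields both lie within~$r$ of some common $\hat R$ but whose observable gaps differ by at least $\hat O$. Finally, use Lemma~\ref{lem:Cai2spin} to produce, for a sufficiently large polynomial value of $t$, a bipartite phase gadget $G\in \Gc^{t,\epsilon}_{n,\Delta}$.

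Form the two graphs $\widehat H_i := H^{\ell_\pl,\ell_\mi}_{G,\Tc_\pl,\Tc_\mi,\Tc_i}$ for $i\in\{1,2\}$, and apply Lemma~\ref{lem:2spinconstruction} to each. Since $\Tc_1,\Tc_2$ have effective fields within $r$ of the same $\hat R$, the induced vertex activities $\hat\lambda_i$ in the two lemma expressions agree up to a $(1\pm \epsilon^3)$-factor; by Lemma~\ref{lem:2spinperturb} the magnetizations $\Mc_{\alpha,\hat\lambda_i}(H)$ and the ``residual'' terms $\Eb_{\sigma\sim\mu_i}[\oc_{H^{\ell_\pl,\ell_\mi}_{G,\Tc_\pl,\Tc_\mi}}(\sigma)]$ differ between $i=1$ and $i=2$ by at most a tiny additive error (poly-small in $\eta$). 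Subtracting the two expressions from Lemma~\ref{lem:2spinconstruction} cancels the hard-to-analyze residual term and the $q_\mi|V(H)|$ offset, leaving a relation of the form
\[
\Oc_{\beta,\gamma,\lambda}(\widehat H_1)-\Oc_{\beta,\gamma,\lambda}(\widehat H_2)=|V(H)|(\Ac_1-\Ac_2) + (O_{\Tc_1}-O_{\Tc_2})\big[(q_\pl-q_\mi)\Mc_{\alpha,\hat\lambda}(H)+q_\mi|V(H)|\big]\pm (\text{small}).
\]
Invoking the presumed oracle for $\Obser(\beta,\gamma,\lambda,a,b,c)$ on $\widehat H_1$ and $\widehat H_2$ (with sufficiently small relative error, which is polynomially achievable since the sizes of our constructions are polynomial in $|V(H)|$, $1/\eta$ and $\bit(\hat\lambda)$), and on the constant-size graphs $\Tc_1,\Tc_2$ to obtain estimates of $\Ac_1,\Ac_2$ and $O_{\Tc_1},O_{\Tc_2}$, we can solve for $\Mc_{\alpha,\hat\lambda}(H)$ within relative error~$\eta$ using $|O_{\Tc_1}-O_{\Tc_2}|\geq \hat O$ and $q_\pl\neq q_\mi$. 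This yields the claimed AP-reduction.

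The main obstacle will be verifying that the combined choice of $\ell_\pl,\ell_\mi,r,\epsilon,t$ can simultaneously (i) hit the target $\hat\lambda$ to sufficient precision, (ii) keep all additive perturbation errors from Lemma~\ref{lem:2spinperturb} and the phase-gadget tolerances $\epsilon,\epsilon'$ dominated by $\eta$ times the ``signal'' magnitude $(q_\pl-q_\mi)\hat O$, and (iii) ensure the divisor $(q_\pl-q_\mi)(O_{\Tc_1}-O_{\Tc_2})$ is a constant bounded away from zero so that the final division does not amplify errors. Handling these quantitative details requires care because, unlike the Potts case where $\beta>1$ ensures a convenient monotone range, here one must separately treat the regimes $\hat\lambda$ small or large where the interpolation ratio in Lemma~\ref{lem:2spinconstruction} may become degenerate, and fall back to Glauber dynamics in those regimes.
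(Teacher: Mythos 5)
Your proposal follows the same overall architecture as the paper's proof: build $\widehat H_i=H^{\ell_\pl,\ell_\mi}_{G,\Tc_\pl,\Tc_\mi,\Tc_i}$ for $i\in\{1,2\}$ using the phase, interpolation, and observable-gap gadgets; apply Lemma~\ref{lem:2spinconstruction}; use Lemma~\ref{lem:2spinperturb} to show the hard-to-analyze residual terms nearly cancel under subtraction; and solve the resulting linear relation for $\Mc_{\alpha,\hat\lambda}(H)$. However, there is one technical misstatement that, if taken literally, would break the quantitative accounting, plus some smaller discrepancies worth flagging.

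The claim ``Because $\tilde R\neq 1$, the ratio $\big(\tfrac{q_\pl R_{\pl}+1-q_\pl}{q_\mi R_{\pl}+1-q_\mi}\big)\big/\big(\tfrac{q_\pl R_{\mi}+1-q_\pl}{q_\mi R_{\mi}+1-q_\mi}\big)$ is a fixed quantity bounded away from $1$'' is false. Lemma~\ref{lem:tcplus} gives $R_\pl-R_\mi=\Theta(r)$ with both fields within $r$ of $\tilde R$, so this ratio is $1+\Theta(r)$ and tends to $1$ as you shrink $r$. Consequently the interpolation step size is $\Theta(r)$ and you must take $\ell=\Theta(\tfrac{1}{r}\log\hat\lambda)$ copies, with $r$ chosen polynomially small in $\epsilon$ (the paper takes $r=\tfrac{|\tilde R-1|}{10}\epsilon^4$) so that the integer-rounding overshoot in $\ell$ together with the $O(r)$ spread between $R_1,R_2$ gives the stated $(1\pm\epsilon^3)$-precision on $\hat\lambda_i$. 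If the ratio really were bounded away from $1$, the rounding error alone would be a constant multiplicative factor and the $(1\pm\epsilon^3)$ precision you assert would not follow. Your stated bound $\ell=O(\tfrac{1}{\epsilon}\log\hat\lambda)$ is of the right order, so the conclusion is fine; the stated justification is not.

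Two further minor points. First, the $q_\mi|V(H)|$ offset does not cancel under the subtraction (contrary to your prose) — it survives multiplied by $O_{\Tc_1}-O_{\Tc_2}$, which is exactly what your displayed formula shows, so your formula is right but your explanation of it contradicts it. Second, the paper does not use Glauber dynamics at extreme external fields nor ``small/large $\hat\lambda$'' regimes: it simply assumes $\hat\lambda>1$ (which is all Theorem~\ref{thm:main2} ever needs, since the integral runs from $1$ to $\lambda^*$) and handles the sign of the interpolation direction by comparing $\hat\Lambda=\hat\lambda\big/\big(\tfrac{q_\pl\hat R+1-q_\pl}{q_\mi\hat R+1-q_\mi}\big)$ to $1$ and swapping $\Tc_\pl\leftrightarrow\Tc_\mi$ when $\hat\Lambda<1$. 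Also, the paper evaluates $\Ac_1,\Ac_2$ (and the observable gaps) exactly by dynamic programming on the tree-structured field gadgets rather than via additional oracle calls; your oracle-based variant works too but introduces error terms that must be tracked.
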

\begin{proof}
Let $K=|a|+|b|+|c|$ and $q_\pl,q_\mi,\alpha$ be the constants from Lemma~\ref{lem:2spinconstruction}; recall that $\alpha\in (0,1)$ and $1>q_\pl>q_\mi>0$. Let $H$ be a cubic bipartite graph and $\hat\lambda>1$ be the inputs to $\MagIsing(\alpha)$, and let $\eta\in (0,1)$ be the desired relative error that we want to approximate $\Mc_{\alpha,\hat\lambda}(H)$.

By Lemma~\ref{lem:tcplus} and Theorem~\ref{thm:observablegadget}, there exist constants $\tilde R,\hat{R}, \hat{O},C>0$ with $\tilde R\neq 1$ and polynomial time algorithms, which on input rationals $r,r'\in(0,1/2)$, output in time  $poly(\bit(r),\bit(r'))$  pairs of field gadgets  $(\Tc_\pl,\Tc_\mi)$ and $(\Tc_1,\Tc_2)$ satisfying
\begin{equation}\label{eq:2wsfieldgadgets}
\begin{aligned}
R_{\pl}> R_{\mi}+Cr \mbox{ and } |R_{\pl} - \tilde R|,\ |R_{\mi} - \tilde R|\leq r,\\
|R_{1} - \hat{R}|,\, |R_{2} - \hat{R}|\leq r,\mbox{ but } |O_1-O_2|\geq \hat{O},
\end{aligned}
\end{equation}
where $R_\pl,R_\mi,R_1,R_2$ are the effective fields of $\Tc_\pl,\Tc_\mi,\Tc_1,\Tc_2$ and $O_1,O_2$ are the observable gaps of $\Tc_1,\Tc_2$, respectively.

Let $\epsilon=\tfrac{\eta}{|V(H)|^{8}}$ and $t=\big\lceil \big(\tfrac{1}{\epsilon}|V(H)|^2\log \hat\lambda\big)^6\big\rceil$. By Lemma~\ref{lem:slypotts}, there is an algorithm that outputs in time $poly(t,\tfrac{1}{\epsilon})$ an integer $n$ and a graph $G\in {\cal G}^{t,\epsilon}_{n,\Delta}$ (satisfying Items~\ref{it:Potts1} and~\ref{it:Potts2}).
Let also $\Tc_\pl,\Tc_\mi$ be field gadgets  satisfying \eqref{eq:2wsfieldgadgets} for $r=\tfrac{|\tilde{R}-1|}{10}\epsilon^4$. Consider also the integers $\ell_\pl,\ell_\mi=\ell$, where $\ell$ is an integer specified according to whether $\hat\Lambda=\hat{\lambda}/\big(\tfrac{q_\pl \hat{R}+1-q_\pl}{q_\mi \hat{R}+1-q_\mi}\big)$ is bigger than 1. Suppose first that $\hat{\Lambda}\geq  1$.  Since $R_\pl>R_\mi$ and $q_\pl>q_\mi$, we have that $\tfrac{q_\pl R_{\pl}+1-q_\pl}{q_\mi R_{\pl}+1-q_\mi}>\tfrac{q_\pl R_{\mi}+1-q_\pl}{q_\mi R_{\mi}+1-q_\mi}$, and we pick   $\ell$ to be the smallest positive integer such that 
\begin{equation}\label{eq:lsmallest}
\big(\tfrac{q_\pl R+1-q_\pl}{q_\mi R+1-q_\mi}\big)\big(\tfrac{q_\pl R_{\pl}+1-q_\pl}{q_\mi R_{\pl}+1-q_\mi}\big)^{\ell}/\big(\tfrac{q_\pl R_{\mi}+1-q_\pl}{q_\mi R_{\mi}+1-q_\mi}\big)^{\ell}\geq \hat\lambda.
\end{equation}
If $\hat{\Lambda}< 1$, then we pick $\ell$ to be the small positive integer so that
\[\big(\tfrac{q_\pl R+1-q_\pl}{q_\mi R+1-q_\mi}\big)\big(\tfrac{q_\pl R_{\mi}+1-q_\pl}{q_\mi R_{\mi}+1-q_\mi}\big)^{\ell}/\big(\tfrac{q_\pl R_{\pl}+1-q_\pl}{q_\mi R_{\pl}+1-q_\mi}\big)^{\ell}\leq \hat\lambda,\] 
In either case, using the lower bound $R_{\pl}-R_{\mi}>Cr$ from \eqref{eq:2wsfieldgadgets}, we have that $\ell=O(\frac{1}{r}\log \hat\lambda)$  where the implicit constant depends only on $\beta,\gamma,\lambda,\Delta$. In particular, we have that $t\geq 5+\max\{\ell_\pl,\ell_\mi\}$. In the argument below, we assume w.l.o.g. that $\hat{\Lambda}\geq  1$; otherwise, just apply the same argument by swapping the roles of the gadgets $\Tc_\pl,\Tc_\mi$ in the construction below. 

For $i\in \{1,2\}$, consider now the graphs $\widehat{H}_i=H^{\ell_\pl,\ell_\mi}_{G,\Tc_\pl,\Tc_\mi,\Tc_i}$ and let $\mu_i=\mu_{\widehat{H}_i;\beta,\gamma,\lambda}$. For convenience, let also $F$ denote the graph $H^{\ell_\pl,\ell_\mi}_{G,\Tc_\pl,\Tc_\mi}$, and note that $F$ is a subgraph of both $\widehat{H}_1,\widehat{H}_2$. From Lemma~\ref{lem:2spinconstruction}, we have that for $i\in\{1,2\}$, for $\epsilon'=10 |V(H)| \epsilon$, it holds that
\begin{equation}\label{eq:4tgtg744}
\Oc_{\beta,\gamma,\lambda}(\widehat{H}_i)=\Ac_i|V(H)|+\Eb_{\sigma\sim \mu_i}\big[\oc_{F}(\sigma)\big]+(1\pm \epsilon') O_i \big[(q_\pl-q_\mi) \Mc_{\alpha,\hat\lambda_i}(H)+q_\mi|V(H)|\big],
\end{equation}
where  $\Ac_{i}= \Eb_{\sigma\sim \mu_{\Tc_i}}[\oc_{\Tc_i}(\sigma)\mid \sigma_{\rho_i}=0]$ and $\hat\lambda_i:=\big(\tfrac{q_\pl R_i+1-q_\pl}{q_\mi R_i+1-q_\mi}\big)\big(\tfrac{q_\pl R_{\pl}+1-q_\pl}{q_\mi R_{\pl}+1-q_\mi}\big)^{\ell_\pl}/\big(\tfrac{q_\pl R_{\mi}+1-q_\pl}{q_\mi R_{\mi}+1-q_\mi}\big)^{\ell_\mi}$.
From \eqref{eq:BguarP}, we have that $\hat\lambda_i=(1\pm \epsilon^3)\hat\lambda$, and therefore from Lemma~\ref{lem:2spinperturb}, we have that
\begin{gather*}
\big|\Eb_{\sigma\sim \mu_1}[\oc_{F}(\sigma)]-\Eb_{\sigma\sim \mu_2}[\oc_{F}(\sigma)]\big|\leq |E(H^\ell_{G,\Ec,\Pc})|\epsilon^3\leq \epsilon^2,\\
|\Mc_{\alpha,\hat\lambda_1}(H)-\Mc_{\alpha,\hat\lambda_2}(H)|\leq 2K\big(|V(H)|^2+|E(H)|^2\big)\epsilon^3\leq \epsilon^2.
\end{gather*}
We now invoke the oracle for $\Mc_{\beta,\gamma,\lambda}(\widehat{H}_i)$ to compute $\hat{\Mc}_i$ such that $\hat{\Mc}_i=(1\pm \epsilon^2)\Mc_{\beta,\gamma,\lambda}(\widehat{H}_i)$. By exploiting the tree structure of the field gadgets $\Tc_1,\Tc_2$ (cf. Definition~\ref{def:field}), and since they both have size $poly(\bit(r))$, we can compute the values $\Ac_1,\Ac_2$ exactly in time $poly(|V(H)|,\tfrac{1}{\eta})$ by fairly routine dynamic programming techniques.
Combining these with \eqref{eq:4tgtg744}, it follows that
\[\hat{M}=\frac{1}{q_\pl-q_\mi}\Big(\frac{(\hat{\Mc}_1-\hat{\Mc}_2)-|V(H)|(\Ac_1-\Ac_2)}{O_1-O_2}-q_\mi |E(H)|\Big)\]
is within a factor of $(1\pm \eta)$ of the susceptibility $\Mc_{\alpha,\hat\lambda}(H)$, as needed. This finishes the reduction and therefore the proof of Lemma~\ref{lem:intermediatePotts}. 
\end{proof}

\printbibliography

\appendix

\section{Proof of Theorem~\ref{thm:main2}}
\label{sec:proof-2spin}

\begin{proof}[Proof of Theorem~\ref{thm:main2}]
Let $\alpha\in (0,1)$ be as in Lemma~\ref{lem:intermediateIsing}. By \cite[Theorem 2]{CAI}, there is $\lambda_0>1$ such that for any $\lambda^*>\lambda_0$ the problem  $\IsingCubic(\alpha,\lambda^*)$, i.e., the problem of approximating the partition $Z_{G;\alpha,\alpha,\lambda^*}$ on cubic bipartite graphs $G$, is \#BIS-hard. From Lemma~\ref{lem:intermediateIsing}, we have that for $\beta>\beta_p(q,\Delta)$ it holds that
\[\MagIsing(\alpha)\leq_{\mathrm{AP}} \Obser(\beta,\gamma,\lambda,a,b,c).\]
so it suffices to show that $\IsingCubic(\alpha,\lambda^*)\leq_{\mathrm{AP}} \MagIsing(\alpha)$. For convenience, we will henceforth drop $\alpha$ from the subscript notation and write $Z_{G;\hat\lambda}, \Mc_{G;\alpha,\alpha,\hat\lambda}$ instead of $Z_{G;\alpha,\alpha,\hat\lambda}, \Mc_{G;\alpha,\alpha,\hat\lambda}$.

Let $G$ be an instance of $\IsingCubic(\alpha,\lambda^*)$ with $n$ vertices, and $\epsilon>0$ be the desired relative error for approximating $Z_{G;\lambda^*}$. Since $\frac{\partial \log Z_{G;\hat\lambda}}{\partial \hat\lambda}=\tfrac{1}{\hat\lambda}\Mc_{G;\hat\lambda}$, we have that
\begin{equation}\label{eq:interpolation2}
\log Z_{G;\lambda^*}=\int^{\lambda^*}_{1}\, \tfrac{1}{\hat\lambda}\Mc_{G;\hat\lambda} \mathrm{d}\hat\lambda.
\end{equation}
Let $T=\left\lceil \big(\tfrac{10\hat\lambda n}{\epsilon \alpha}\big)^4\right\rceil$ and for $i=0,1,\hdots, T$, consider the sequence of edge parameters $\hat\lambda_i=1+i\tfrac{\lambda^*-1}{T}$. The function $\log Z_{G;\hat\lambda}$ is convex with respect to $\hat\lambda$, since its second derivative is equal to the variance of the number of vertices with spin 1, and therefore the function $\tfrac{1}{\hat\lambda}\Mc_{G;\hat\lambda}$ is increasing. Since $G$ is cubic, we also have the crude bounds  $\tfrac{\alpha^{3}}{1+\alpha^{3}}n\leq \Mc_{G;\hat\lambda}\leq n$ that hold for all $\hat\lambda\geq 1$,\footnote{The upper bound is immediate, the lower bound can be proved by considering the spins of the neighborhood of a vertex, see for example \cite[Lemma 26]{SIDMA} for the case $\hat\lambda=1$ which is actually the worst-case for the argument therein.}, so approximating the integral in \eqref{eq:interpolation2} with a sum (analogously to the proof of Theorem~\ref{thm:main1}), we obtain that
\[ \log Z_{G;\lambda^*}=\big(1\pm\frac{\epsilon}{10}\big)\sum^T_{i=1}\tfrac{1}{\hat\lambda_i}\Mc_{G;\hat\lambda_i}.\]
Using the presumed oracle for $\MagIsing(\alpha)$, we compute $\hat{M}_i$ such that $\hat{M}_i=(1\pm \tfrac{\epsilon}{10})\Mc_{G;\hat\beta_i}$ for $i\in [T]$, and therefore the quantity $\hat Z=\exp\big(\sum_{i\in [T]} \tfrac{1}{\hat\lambda_i}\Mc_{G;\hat\lambda_i}\big)$
is a $(1\pm \epsilon)$-factor approximation to $Z_{G;\lambda^*}$. This completes the AP-reduction, and therefore the proof as well.
\end{proof}

\section{Proofs of Lemmas~\ref{lem:Pottsconstruction} and~\ref{lem:2spinconstruction}}\label{sec:deferred}
In this section, we give the deferred proofs of Lemmas~\ref{lem:Pottsconstruction} and~\ref{lem:2spinconstruction}. We start with Lemma~\ref{lem:Pottsconstruction}.
\begin{lemPottsconstruction}
\statelemPottsconstruction
\end{lemPottsconstruction}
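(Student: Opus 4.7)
The plan is to pass from the Gibbs distribution $\mu$ on $H^\ell_{G,\Ec,\Pc}$ to a meta-Potts distribution on $H$ at a modified edge activity $\hat\beta$, and then evaluate the susceptibility edge by edge using the Markov property of the construction.

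Let $\Phi_v:=\Yc(\sigma_{U_v})\in [q]$ denote the phase of the bipartite gadget copy $G_v$. The first step is to show that the marginal of $\Phi$ under $\mu$ is $(1\pm O(q|V(H)|\epsilon))$-close to the Potts distribution on $H$ at activity $\hat\beta$. I would first integrate out the internal vertices of each $\Pc$ copy and each $\Ec_e$: by the Markov property (each is attached to the rest of the graph only through its two ports), this collapses each into an effective port-to-port edge of activity $B_\Pc$ or $B_\Ec$. Conditional on $\Phi_v=i$, Item~\ref{it:Potts2} gives that the ports $W_v$ are $(1\pm\epsilon)$-close to the product distribution $Q^i_{W_v}$; summing the effective edge weights over the possible colourings of the ports used by the $\ell+1$ gadgets on edge $\{u,v\}$ gives a ``same-phase'' factor $[1+(B_\Pc-1)R_{\equal}]^\ell[1+(B_\Ec-1)R_{\equal}]$ versus a ``different-phase'' factor $[1+(B_\Pc-1)R_{\nequal}]^\ell[1+(B_\Ec-1)R_{\nequal}]$, where $R_{\equal}:=\sum_c Q^i(c)^2$ and $R_{\nequal}:=\sum_c Q^i(c) Q^j(c)$ for any fixed $i\neq j$. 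Combined with Item~\ref{it:Potts1} for the phase marginal on each copy of $G$, this yields $\mu(\Phi)\propto\prod_{\{u,v\}\in E(H)}\hat\beta^{\mathbf{1}\{\Phi_u=\Phi_v\}}$ up to the stated multiplicative error, where $\hat\beta$ equals the ratio of the two factors above. A short calculation using $p=\tfrac{x}{x+q-1}>1/q$ from Section~\ref{sec:phasegadgets} shows $R_{\equal}-R_{\nequal}=\bigl(p-\tfrac{1-p}{q-1}\bigr)^2>0$, so $R_{\equal}>R_{\nequal}$, as the lemma requires.

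The second step is the decomposition
\[\Sc_{q,\beta}(H^\ell_{G,\Ec,\Pc})=\Eb_\mu[m_{H^\ell_{G,\Pc}}(\sigma)]+\sum_{e\in E(H)}\Eb_\mu[m_{\Ec_e}(\sigma)],\]
whose first summand is already the middle term of the target formula. For each $\Ec_e$ with ports $\rho_e,\rho_e'$, the Markov property and the definitions of $A_\Ec$, $S_\Ec$ give $\Eb_\mu[m_{\Ec_e}(\sigma)] = A_\Ec\,\mu(\sigma_{\rho_e}=\sigma_{\rho_e'}) + (A_\Ec-S_\Ec)\mu(\sigma_{\rho_e}\neq\sigma_{\rho_e'})$, and the port-agreement probability further decomposes across phases: conditional on $\Phi_u=i$, $\Phi_v=j$, the pair $(\sigma_{\rho_e},\sigma_{\rho_e'})$ is drawn from $Q^i\times Q^j$ reweighted by $B_\Ec$ at agreement, giving $\tfrac{R_{ij}B_\Ec}{R_{ij}B_\Ec+1-R_{ij}}$, which equals $A_0$ when $i=j$ and $A_1$ when $i\neq j$, matching the formula $A_j=\tfrac{B_\Ec}{B_\Ec+(1-R_j)/R_j}$. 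Using the first step to replace $\sum_e \mu(\Phi_u=\Phi_v)$ by $\Sc_{q,\hat\beta}(H)$ up to the stated multiplicative error and assembling the terms produces the displayed formula of the lemma.

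The main technical obstacle is making the meta-Potts identification quantitative: Item~\ref{it:Potts2} is a statement about the isolated gadget $G$, whereas in the full graph each port is attached to an external gadget. The standard workaround is to write
\[\mu(\Phi,\sigma_W)\propto \prod_{v\in V(H)}\mu_G(\Phi_v)\mu_G(\sigma_{W_v}\mid\Phi_v)\prod_{e\in E(H)} w_e(\sigma_{W_u},\sigma_{W_v}),\]
where $w_e$ is the product of $B_\Pc^{\mathbf{1}\{\cdot\}}$ and $B_\Ec^{\mathbf{1}\{\cdot\}}$ factors over the $\ell+1$ port-to-port gadgets on edge $e$, and then substitute each $\mu_G(\sigma_{W_v}\mid\Phi_v)$ by $Q^{\Phi_v}_{W_v}$ at the cost of a $(1\pm\epsilon)$ multiplicative error per copy. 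Summing these errors together with the Item~\ref{it:Potts1} errors for the phase marginals over all $|V(H)|$ copies gives an overall error of at most $1\pm O(q|V(H)|\epsilon)\leq 1\pm\epsilon'$, which is where the bound $\epsilon\leq(5q|V(H)|)^{-2}$ enters. The condition $t\geq 3(\ell+1)$ is a bookkeeping constraint ensuring that each $G_v$ has enough degree-$(\Delta-1)$ ports to accommodate up to three neighbours of $v$ in $H$, each of which requires $\ell+1$ distinct port connections.
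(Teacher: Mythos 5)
Your proposal follows essentially the same route as the paper's proof: the same decomposition of the susceptibility into the $m_{H^\ell_{G,\Pc}}$ part plus the per-edge $\Ec_e$ contributions, the same use of the Markov property to isolate the susceptibility gadgets and to collapse paths/interaction gadgets into effective port-to-port activities $B_\Pc,B_\Ec$, the same substitution of the conditional port marginals by the product distributions $Q^{i}_{W}$ via Item~\ref{it:Potts2}, and the same computation of $R_\equal, R_\nequal$ and of the agreement probabilities $A_0,A_1$ leading to the meta-Potts identification at activity $\hat\beta$. The ordering is swapped (you establish the phase-marginal closeness first, the paper defers it), and you write the per-edge contribution in the slightly more transparent form $A_\Ec\,\mu(\text{agree})+(A_\Ec-S_\Ec)\,\mu(\text{disagree})$, but the content is the same.
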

\begin{proof}
For an edge $e=(u,v)\in E(H)$ and $i\in [\ell]$, let $\Pc^{i}_{e}$ be the $i$-th path connecting the gadgets $G_u,G_v$, respectively.   Denote by $\Ec_{e}$ the copy of $\Ec$ that connects $G_u,G_v$ and let $w_{u,e},w_{v,e}$ be roots of the gadgets $\Ec_{e}$.  Let also $W_{u,e},W_{u,v}$ be the vertices involved in the paths  $\Pc^{i}_{e}$ and the susceptibility gagdet $\Ec_{e}$ on each of $G_{u}, G_{v}$, respectively.

Note that the edge set of $H^{\ell}_{G,\Ec,\Pc}$ is the disjoint union of $E(H^{\ell}_{G, \Pc})$  and $E(\Ec_{e})$ for $e\in E(H)$, so we have 
\begin{equation}\label{eq:MHkGT}
\Sc_{q,\beta}(H^{\ell}_{G,\Ec,\Pc})=\Eb_{\sigma\sim \mu}[m_{H^\ell_{G,\Pc}}(\sigma)]+\sum_{e\in E(H)}\Eb_{\sigma\sim \mu}\Big[m_{\Ec_{e}}(\sigma)\Big].
\end{equation}
Fix arbitrary $e=(u,v)\in E(H)$ and for convenience let $\mu_{e}$ denote the Gibbs distribution on $\Ec_{e}$ with parameters $q,\beta$. Note that,  conditioned on the spins  of $w_{u,e},w_{v,e}$, the distribution $\mu$ factorizes; more precisely, for $\tau: V(\Ec_{e})\rightarrow [q]$ and $s_1,s_2\in [q]$, we have
\[\mu\Big(\sigma_{V(\Ec_{e})}=\tau\mid \sigma(w_{u,e})=s_1,\sigma(w_{v,e})=s_2\Big)=\mu_{e}(\tau\mid \sigma(w_{u,e})=s_1,\sigma(w_{v,e})=s_2).\] 
Using that the susceptibility gap of $\Ec_e$ is $S_\Ec$, and letting $\mathbf{1}_e$ to be the event that $\sigma(w_{u,e})=\sigma(w_{v,e})$  we therefore obtain that 
\begin{align*}
\Eb_{\sigma\sim \mu}\Big[m_{\Ec_{e}}(\sigma)\Big]&=\sum_{s_1,s_2\in [q]}\mu\big(\sigma(w_{u,e})=s_1,\sigma(w_{v,e})=s_2\big)\, \Eb_{\sigma\sim \mu_{e}}\big[m_{\Ec_{e}}(\sigma)\mid \sigma(w_{u,e})=s_1,\sigma(w_{v,e})=s_2\big]\\
&=S_\Ec\,\Eb_{\sigma\sim \mu}[\mathbf{1}_e]+\Ac_\Ec,
\end{align*}
where  the last equality follows by noting the symmetry among the colours and writing $\mu\big(\sigma(w_{u,e})\neq\sigma(w_{v,e})\big)=1-\Eb_{\sigma\sim \mu}[\mathbf{1}_e]$. Plugging into \eqref{eq:MHkGT}, we obtain
\begin{equation}\label{eq:4g4t4g56yttere5}
\Sc_{q,\beta}(H^{\ell}_{G,\Ec,\Pc})=\Ac_\Ec|E(H)|+\Eb_{\sigma\sim \mu}[m_{H^\ell_{G,\Pc}}(\sigma)]+S_\Ec \sum_{e\in E(H)}\Eb_{\sigma\sim \mu}[\mathbf{1}_e].
\end{equation}
For convenience, let $\overline{W}_e$ be the set $(W_u\cup W_v)\backslash \{w_{u,e},w_{v,e}\}$. We now expand $\Eb_{\sigma\sim \mu}[m(\sigma_{e})]$ according to the configuration  on the ports $\overline{W}_e$ and the phase assignment $Y$ on $V(H)$,   i.e., 
\begin{equation}\label{eq:4tg5tb5yhyhnh}
\Eb_{\sigma\sim \mu}[\mathbf{1}_e]=\sum_{\substack{Y: V(H)\rightarrow [q];\\\eta:\overline{W}_e\rightarrow [q]}}\mu\big(\sigma_{\overline{W}_e}=\eta, \widehat{\Yc}(\sigma)=Y\big)\, \Eb_{\sigma\sim \mu}[\mathbf{1}_e\mid \sigma_{\overline{W}_e}=\eta, \widehat{\Yc}(\sigma)=Y].
\end{equation}
Now, observe that $\overline{W}_e$ disconnects the vertices in $U_u\cup U_v$ from the rest of the graph $H^{\ell}_{G,\Ec,\Pc}$, so, conditioned on $\eta:\overline{W}_e\rightarrow [q]$, the configuration on $w_u,w_v$ is distributed according to $\mu_{G_u,G_v}$,  where $\mu_{G_u,G_v}$ is the Gibbs distribution on the subgraph of $H^{k}_{G,\Tc}$ induced by $U_u\cup U_v$. Therefore, for any $Y: V(H)\rightarrow [q]$ and $\eta:\overline{W}_e\rightarrow [q]$, we have that
\begin{equation}\label{eq:6N6J8R}
\Eb_{\sigma\sim \mu}[\mathbf{1}_e\mid \sigma_{\overline{W}_e}=\eta, \widehat{\Yc}(\sigma)=Y]=\Eb_{\sigma\sim \mu_{G_u,G_v}}[\mathbf{1}_e\mid \Yc(\sigma_{U_u})=Y(u), \Yc(\sigma_{U_v})=Y(v),\sigma_{\overline{W}_e}=\eta].
\end{equation}
Since $G_u,G_v$ are copies of $G$, and $G\in \Gc^{t,\epsilon}_{n,\Delta}$, by the product-distribution property in Item~\ref{it:Potts2}, we obtain that the expectation of $\mathbf{1}_e$ conditioned  on $\Yc(\sigma_{U_u})=Y(u), \Yc(\sigma_{U_v})=Y(v),\sigma_{\overline{W}_e}=\eta$ is approximately independent of $\eta$ and  within a factor of $(1\pm \epsilon)$ from the Gibbs distribution on $\Ec_e$ where $w_{u,e}$ gets the spin $Y_u$ with probability $p$ and any other spin with probability $\frac{1-p}{q-1}$, and similarly for $w_{v,e}$. To do the calculation of this expectation, let for convenience 
\[f_{ij}=\mu_{e}(\sigma(w_u)=i,\sigma(w_v)=j), \qquad f_{\equal}:=f_{11}, \qquad f_{\nequal}:=f_{12}\]
and note that by symmetry $f_{ii}=f_{\equal}$ for $i\in [q]$ and $f_{ij}=f_{\nequal}$ for distinct $i\neq j$. Using these, we can write $B_{\Ec}=\frac{ f_{\equal}}{ f_{\nequal}}$. The probability that $\mathbf{1}_e=1$ when $Y(u)=Y(v)=i$ is 
\begin{equation}\label{eq:Qequal}
\frac{R_0 f_{\equal}}{R_0 f_{\equal}+(1-R_0) f_\nequal}=A_0, \mbox{ where } R_0=p^2+\tfrac{(1-p)^2}{q-1}, \quad A_{0}=\tfrac{B_{\Ec}}{B_{\Ec}+\tfrac{1-R_0}{R_0}}.
\end{equation} 
 In contrast,  when $Y(u)\neq Y(v)$, the probability that $\mathbf{1}_e=1$ is equal to 
\begin{equation}\label{eq:Qequalp}
\frac{R_1 f_{\equal}}{R_1 f_{\equal}+(1-R_1) f_\nequal}=A_1,  \mbox{ where } R_1=\tfrac{2p(1-p)}{q-1}+\tfrac{(q-2)(1-p)^2}{(q-1)^2},\quad A_{1}=\tfrac{B_{\Ec}}{B_{\Ec}+\tfrac{1-R_1}{R_1}}.
\end{equation}
It follows that
\[\Eb_{\sigma\sim \mu_{G_u,G_v}}[\mathbf{1}_e\mid \Yc(\sigma_{U_u})=Y(u), \Yc(\sigma_{U_v})=Y(v),\sigma_{\overline{W}_e}=\eta]=(1\pm \epsilon)A_{\mathbf{1}\{Y(u)\neq Y(v)\}}.\]
Combining this with \eqref{eq:6N6J8R}, and plugging back into \eqref{eq:4tg5tb5yhyhnh} we obtain that
\[\Eb_{\sigma\sim \mu}[\mathbf{1}_e]=(1\pm \epsilon)\sum_{Y: V(H)\rightarrow [q]}\mu\big(\widehat{\Yc}(\sigma)=Y\big) A_{\mathbf{1}\{Y(u)\neq Y(v)\}},\]
In turn, plugging this back into \eqref{eq:4tg5tb5yhyhnh} yields that
\begin{equation}\label{eq:4h}
\Sc_{q,\beta}(H^{\ell}_{G,\Ec,\Pc})=\Ac_\Ec|E(H)|+\Eb_{\sigma\sim \mu}[m_{H^\ell_{G,\Pc}}(\sigma)]+(1\pm \epsilon) S_\Ec\big[ (A_{0}-A_{1})\mbox{\textsc{Avg}}_\mu(H)+A_{\nequal}|E(H)|\big],
\end{equation}
where $\mbox{\textsc{Avg}}_\mu(H)=\sum_{Y: V(H)\rightarrow [q]}\mu\big(\widehat{\Yc}(\sigma)=Y\big)m_H(Y)$. We will show that for any phase vector $Y:V(H)\rightarrow [q]$, it holds that
\begin{equation}\label{eq:43534Pottsphases}
\mu\big(\widehat{\Yc}(\sigma)=Y\big)=(1\pm \tfrac{\epsilon'}{2})\mu_{H;q,\hat\beta}(Y).
\end{equation}
where $\hat\beta$ and $\epsilon'=10q|V(H)|$ are as in the statement of the lemma. From this, we conclude that $|\mbox{\textsc{Avg}}_\mu(H)-\Sc_{q,\hat\beta}(H)|\leq \frac{\epsilon'}{2}|V(H)|$, which completes the proof. 

It remains to prove \eqref{eq:43534Pottsphases}; this follows by adapting suitably the arguments in \cite{PottsBIS}. In particular, for an edge $e=(u,v)\in E(H)$ and $i\in [\ell]$, let $\Pc^{i}_{e}$ be the $i$-th path connecting the gadgets $G_u,G_v$, respectively, and $w^{i}_{u,e},w^{i}_{v,e}$ be the corresponding endpoints. Recall that $w_{u,e},w_{v,e}$ correspond instead to the endpoints of the gadget $\Ec$. Let $\hat{\mu}$ be the Gibbs distribution on the graph $H_G$, i.e., the graph where all the paths and susceptibility gadgets have been removed, so that $H_G$ consists of $V(H)$ disjoint copies of the gadget $G$. Then, for all $Y: V(H)\rightarrow [q]$, we have that  
\begin{equation}\label{eq:3fr54fff}
\begin{aligned}
\mu\big(\widehat{\Yc}(\sigma)&=Y\big)\propto \hat{\mu}\big(\widehat{\Yc}(\sigma)=Y\big)\times\\
&\sum_{\eta: W\rightarrow [q]} \hat{\mu}\big(\sigma_W=\eta\,\big|\, \widehat{\Yc}(\sigma)=Y)\prod_{e \in E(H)} (B_{\Ec})^{\mathbf{1}\{\sigma(w_{u,e})=\sigma(w_{v,e})\}} \prod_{i\in [\ell]} (B_{\Pc})^{\mathbf{1}\{\sigma(w^{i}_{u,e})=\sigma(w^{i}_{v,e})\}}.
\end{aligned}
\end{equation}
Since the gadgets $G_u\in G^{t,\epsilon}_{n,\Delta}$, we have from Items~\ref{it:Potts1} and~\ref{it:Potts2} that 
\[\hat{\mu}\big(\widehat{\Yc}(\sigma)=Y\big)=\big(\tfrac{1}{q}\pm \epsilon\big)^{|V_H|}\mbox{ and }\hat{\mu}\big(\sigma_W=\eta\, \big|\, \widehat{\Yc}(\sigma)=Y)=(1\pm \epsilon)^{|V(H)|}\prod_{v\in V(H)}Q^{Y(v)}_{W_v}(\eta_{W_v}).\]
It follows from \eqref{eq:3fr54fff} that for all $Y:V(H)\rightarrow[q]$
\begin{equation}\label{eq:3fr54fff2}
\begin{aligned}
\mu&\big(\widehat{\Yc}(\sigma)=Y\big)\propto\\
&(1\pm \tfrac{\epsilon'}{5})\sum_{\eta: W\rightarrow [q]} \prod_{v\in V(H)}Q^{Y(v)}_{W_v}(\eta_{W_v})\prod_{e \in E(H)} (B_{\Ec})^{\mathbf{1}\{\sigma(w_{u,e})=\sigma(w_{v,e})\}} \prod_{i\in [\ell]} (B_{\Pc})^{\mathbf{1}\{\sigma(w^{i}_{u,e})=\sigma(w^{i}_{v,e})\}}.
\end{aligned}
\end{equation}
For every edge $e=(u,v)\in V(H)$ such that $Y(u)=Y(v)$, for $i\in \ell$, the endpoints $w^i_{u,e}$ and $w^i_{v,e}$ of the path $\Pc^i_{e}$ for $i\in [\ell]$ have the same spin under the distributions $Q^{Y(u)}_{W_u}(\cdot)$ and $Q^{Y_u}_{W_u}(\cdot)$ with probability $R_0$ (as defined in \eqref{eq:Qequal}), and therefore they contribute a factor of $1+(B_{\Pc}-1)R_{\equal}$ to the sum; when $Y_u\neq Y_v$, the corresponding factor is given by $1+(B_{\Pc}-1)R_1$, with $R_1$ as defined in \eqref{eq:Qequalp}. Similarly, for $w_{u,e}$ and $w_{v,e}$ we obtain the expressions $1+(B_{\Ec}-1)R$ and $1+(B_{\Ec}-1)R'$, respectively. Therefore, for $\hat\beta:=\big(\frac{1+(B_{\Pc}-1)R_{\equal}}{1+(B_{\Pc}-1)R_{\nequal}}\big)^{\ell}\big(\tfrac{1+(B_{\Ec}-1)R_{\equal}}{1+(B_{\Ec}-1)R_{\nequal}}\big)$, it follows that
\[\mu\big(\widehat{\Yc}(\sigma)=Y\big)\propto(1\pm \tfrac{\epsilon'}{5})\hat\beta^{m_H(Y)} \mbox{ for all } Y: V(H)\rightarrow[q],\]
from which \eqref{eq:43534Pottsphases} follows. This finishes the proof of Lemma~\ref{lem:Pottsconstruction}.
\end{proof}
We next give the proof of Lemma~\ref{lem:2spinconstruction}, which builds upon analogous ideas.
\begin{lemspinconstruction}
\statelemspinconstruction
\end{lemspinconstruction}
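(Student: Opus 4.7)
The plan is to adapt the proof of Lemma~\ref{lem:Pottsconstruction} to the 2-spin setting, accommodating a general vertex--edge observable through the observable gap of the $\Tc$ gadget. With the shorthand $\widehat H := H^{\ell_\pl,\ell_\mi}_{G,\Tc_\pl,\Tc_\mi,\Tc}$, $F := H^{\ell_\pl,\ell_\mi}_{G,\Tc_\pl,\Tc_\mi}$ and $\mu=\mu_{\widehat H;\beta,\gamma,\lambda}$, the first step is the additive decomposition
\[\oc_{\widehat H}(\sigma) = \oc_F(\sigma) + \sum_{v\in V(H)}\bigl[\oc_{\Tc_v}(\sigma) - a\cdot\mathbf{1}\{\sigma(w_v)=1\}\bigr],\]
valid because each $\Tc_v$ is attached to the rest of $\widehat H$ only through its root $w_v\in V(F)$, so the vertex and edge sets of $F$ and of the internal pieces $V(\Tc_v)\setminus\{w_v\}$ form disjoint unions; the subtraction corrects the double counting of $w_v$'s vertex activity. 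Applying the Markov property at $w_v$ and using the observable-gap definition (Definition~\ref{def:gapobservable}) gives $\Eb_\mu[\oc_{\Tc_v}\mid\sigma(w_v)=0]=\Ac$ and $\Eb_\mu[\oc_{\Tc_v}\mid\sigma(w_v)=1]=\Ac+a+O$, so taking expectations and summing over $v$ produces the $\Ac|V(H)|$ contribution and reduces the task to computing $\sum_v\Pr_\mu[\sigma(w_v)=1]$.

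The second and central step is the phase identification, the 2-spin analog of~\eqref{eq:43534Pottsphases}: prove that $\Pr_\mu[\widehat Y(\sigma)=Y] = (1\pm\epsilon'/2)\,\mu_{H;\alpha,\alpha,\hat\lambda}(Y)$ for every $Y\colon V(H)\to\{\pl,\mi\}$. The argument marginalises, in order, (i) the degree-$\Delta$ vertices of each $G_v$ via Item~\ref{it:2spin1} (equal phase probabilities), (ii) the port configurations via Item~\ref{it:2spin2}, so that conditional on phase $Y(v)$ the ports behave as independent Bernoullis with the Section~\ref{sec:phase2spingadgets} parameters $q_\pl,q_\mi$ swapped between the $\pl$ and $\mi$ sides according to the phase, as in~\eqref{eq:2spinproduct}, and (iii) the internal vertices of every attached field gadget, which by definition of effective field contributes a single-port factor with ratio $R_\star\in\{R,R_\pl,R_\mi\}$. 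After port-by-port summation, the weight factors as $\prod_v g(Y(v))\prod_{\{u,v\}\in E(H)}h(Y(u),Y(v))$; a direct computation gives $g(\pl)/g(\mi)=\hat\lambda$, matching the statement. The $q_\pl\leftrightarrow q_\mi$ swap between the two sides of the bipartite phase gadget forces $h(\pl,\pl)=h(\mi,\mi)\neq h(\pl,\mi)$, so $\alpha\in(0,1)$---the antiferromagnetic monochromatic-to-bichromatic ratio---depends only on $\beta,\gamma,q_\pl,q_\mi$ and is universal among the field-gadget choices, identifying the phase distribution with the claimed Ising measure.

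The third step evaluates $\sum_v\Pr_\mu[\sigma(w_v)=1]$. Because $w_v$ is incident to no $H$-edge, conditional on $\widehat Y$ it is coupled to the rest of $\widehat H$ only through the single gadget $\Tc_v$; Item~\ref{it:2spin2} then gives $\Pr_\mu[\sigma(w_v)=1\mid\widehat Y=Y] = (1\pm\epsilon)\cdot qR/(qR+1-q)$, where $q$ is $q_\pl$ or $q_\mi$ depending on $Y(v)$. The two resulting values play the role of the constants $q_\pl,q_\mi\in(0,1)$ announced in the lemma (with $q_\pl>q_\mi$), and averaging against the Ising measure from step two yields $\sum_v\Pr_\mu[\sigma(w_v)=1] = (1\pm\epsilon')\bigl[(q_\pl-q_\mi)\Mc_{\alpha,\alpha,\hat\lambda}(H)+q_\mi|V(H)|\bigr]$, which combined with step one delivers the identity in the statement.

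The main obstacle is step two: threading $|V(H)|$ independent $\epsilon$-level approximations from Items~\ref{it:2spin1}--\ref{it:2spin2} through the layered field-gadget marginalisations while keeping the cumulative multiplicative error at $\epsilon'/2=5|V(H)|\epsilon$, and then reading off $\hat\lambda$ and $\alpha$ by an explicit if elementary calculation. The hypothesis $t\geq 5+\max\{\ell_\pl,\ell_\mi\}$ is used here to guarantee enough free ports on each side of every phase gadget to accommodate all attached field gadgets plus the three $H$-edges incident to $v$, so the construction is well-defined in the first place. The remaining pieces---the decomposition in step one and the two-state computation in step three---are direct adaptations of the Potts argument.
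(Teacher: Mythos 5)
Your proposal follows the same overall decomposition as the paper: the additive split of $\oc_{\widehat H}$ into $F$-part and $\Tc_v$-parts with the $-a\mathbf{1}\{\sigma(w_v)=1\}$ correction, the observable-gap formula $\Eb_\mu[\oc_{\Tc_v}]=(O+a)\Pr_\mu[\sigma(w_v)=1]+\Ac$, the reduction to $\sum_v\Pr_\mu[\sigma(w_v)=1]$, and the phase-identification step $\Pr_\mu[\widehat\Yc=Y]=(1\pm\epsilon'/2)\mu_{H;\alpha,\alpha,\hat\lambda}(Y)$. The high-level structure is the same.

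There is one substantive point of divergence. In step three you write $\Pr_\mu[\sigma(w_v)=1\mid\widehat\Yc=Y]=(1\pm\epsilon)\cdot q_{Y(v)}R/\bigl(q_{Y(v)}R+1-q_{Y(v)}\bigr)$, carrying the effective field $R=R_{\Tc}$ of the attached gadget through the conditional marginal. The paper's proof instead asserts that this conditional equals $(1\pm\epsilon)q_{Y(v)}$, justifying this by saying the conditional is determined by $\mu_{G_v}$ alone. That identification appears to drop the contribution of $\Tc_v$: conditioning on $\overline W_v=W_v\setminus\{w_v\}$ disconnects $U_v\cup\{w_v\}\cup V(\Tc_v)$ as a block, so the marginal of $\sigma(w_v)$ is biased both by the phase-gadget product law (Item~\ref{it:2spin2}) and by the ratio $R_\Tc$ contributed by $\Tc_v$. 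Your version is the one that is consistent with the paper's own treatment of the Potts case in Lemma~\ref{lem:Pottsconstruction}, where the analogous quantities $A_j=B_\Ec/\bigl(B_\Ec+\tfrac{1-R_j}{R_j}\bigr)$ explicitly carry the edge-interaction gadget's effective activity $B_\Ec$.

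Having included that factor, however, you then gloss over a resulting mismatch with the lemma statement: the two values $q_{\plm}R/\bigl(q_{\plm}R+1-q_{\plm}\bigr)$ depend on $R_\Tc$, and so are not ``constants'' valid ``for any field gadgets $\Tc_\pl,\Tc_\mi,\Tc$'' as the lemma is quantified. This is not your invention --- the paper's statement already declares $q_\pl,q_\mi$ to be the universal phase-gadget constants of Section~\ref{sec:phase2spingadgets}, which would only be exactly right if $R_\Tc=1$ --- but since you corrected the derivation you should also flag that the statement needs the Potts-style amendment (replacing $q_\plm$ in the final bracket by $\Tc$-dependent quantities, analogously to $A_0,A_1$), and then check that the downstream estimator in Lemma~\ref{lem:intermediateIsing}, which divides by $q_\pl-q_\mi$, still stabilizes because $R_{\Tc_1},R_{\Tc_2}$ are both $r$-close to the fixed $\hat R$ of Theorem~\ref{thm:observablegadget}. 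Finally, your step two (the phase identification) is stated at a high level; the paper's version requires the explicit matrix computation giving $\alpha=M_{\pl\pl}M_{\mi\mi}/(M_{\pl\mi}M_{\mi\pl})$ and the per-vertex factors producing $\hat\lambda$, plus Lemma~\ref{lem:Cai2spin} (Items~\ref{it:2spin1},~\ref{it:2spin2}) to control the cumulative error, so that part of your write-up is more of a plan than a proof.
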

\begin{proof}
By linearity of expectation, we have  
\begin{equation}\label{eq:bMbHkGT}
\Oc_{\beta,\gamma,\lambda}(H^{\ell_\pl,\ell_\mi}_{G,\Tc_\pl,\Tc_\mi,\Tc})=\Eb_{\sigma\sim \mu}\Big[\oc_{H^{\ell_\pl,\ell_\mi}_{G,\Tc_\pl,\Tc_\mi}}(\sigma)-a|\sigma_{W_{\Tc}}|\Big]+\sum_{v\in V(H)}\Eb_{\sigma\sim \mu}\big[\oc_{\Tc_{v}}(\sigma)\big],
\end{equation}
where we have subtracted the term $a|\sigma_{W_{\Tc}}|$ in the first expectation to avoid double-counting the contribution of the magnetization of the vertices in $W_\Tc$. 

Fix arbitrary $v\in V(H)$ and  let $\mu_{v}$ denote the Gibbs distribution on $\Tc_{v}$ with parameters $\beta,\gamma,\lambda$. Note that,  conditioned on the spin  of $w_{v}$, the distribution $\mu$ factorizes; more precisely, for $\tau: V(\Tc_{v})\rightarrow \two$ and $s\in \two$, we have $\mu\Big(\sigma_{V(\Tc_v)}=\tau\mid \sigma(w_{u})=s\Big)=\mu_{v}(\tau\mid \sigma(w_{v})=s)$. From this, and since the  observable gap of $\Tc_v$ is $M_\Tc$, and letting $\mathbf{1}_e$ to be the event that $\sigma(w_{u,e})=\sigma(w_{v,e})$  we therefore obtain that 
\begin{align*}
\Eb_{\sigma\sim \mu}\big[\oc_{\Tc_{v}}(\sigma)\big]&=\sum_{s\in \two}\mu\big(\sigma(w_{u})=s\big)\, \Eb_{\sigma\sim \mu_{v}}\big[\oc_{\Tc_{v}}(\sigma)\mid \sigma(w_{u})=s\big]=(O+a)\,\Eb_{\sigma\sim \mu}[\sigma(w_v)]+\Ac,
\end{align*}
where  the last equality follows by writing $\mu\big(\sigma(w_{u})=0\big)=1-\Eb_{\sigma\sim \mu}[\sigma(w_v)]$ and the definition of the observable gap (which accounts for the additive $a$ in the above formula). Plugging into \eqref{eq:bMbHkGT}, and by linearity of expectation, we obtain
\begin{equation}\label{eq:b4bg4t4g56yttere5}
\Oc_{\beta,\gamma,\lambda}(H^{\ell_\pl,\ell_\mi}_{G,\Tc_\pl,\Tc_\mi,\Tc})=\Ac|V(H)|+\Eb_{\sigma\sim \mu}\big[\oc_{H^{\ell_\pl,\ell_\mi}_{G,\Tc_\pl,\Tc_\mi}}(\sigma)\big]+O \sum_{v\in V(H)}\Eb_{\sigma\sim \mu}[\sigma(w_v)].
\end{equation}
For convenience, let $\overline{W}_v$ be the set $W_v\backslash \{w_{v}\}$. We expand $\Eb_{\sigma\sim \mu}[\sigma(w_v)]$ according to the configuration  on the ports $\overline{W}_v$ and the phase assignment $Y$ on $V(H)$,   i.e., 
\begin{equation}\label{eq:b4btg5tb5yhyhnh}
\Eb_{\sigma\sim \mu}[\sigma(w_v)]=\sum_{\substack{Y: V(H)\rightarrow \two;\\\eta:\overline{W}_v\rightarrow \two}}\mu\big(\sigma_{\overline{W}_v}=\eta, \widehat{\Yc}(\sigma)=Y\big)\, \Eb_{\sigma\sim \mu}[\sigma(w_v)\mid \sigma_{\overline{W}_v}=\eta, \widehat{\Yc}(\sigma)=Y].
\end{equation}
The set $\overline{W}_v$ disconnects the vertices in $U_u$ from the rest of the graph $H^{\ell_\pl,\ell_\mi}_{G,\Tc_\pl,\Tc_\mi,\Tc}$, and therefore conditioned on $\eta:\overline{W}_v\rightarrow [q]$, the configuration on $w_u$ is distributed according to $\mu_{G_u}$,  where $\mu_{G_u}$ is the Gibbs distribution on the phase gadget $G_u$. Therefore, for any $Y: V(H)\rightarrow \two$ and $\eta:\overline{W}_v\rightarrow \two$, we have that
\[\Eb_{\sigma\sim \mu}[\sigma(w_v)\mid \sigma_{\overline{W}_v}=\eta, \widehat{\Yc}(\sigma)=Y]=\Eb_{\sigma\sim \mu_{G_v}}[\sigma(w_v)\mid \sigma_{\overline{W}_v}=\eta, \Yc(\sigma)=Y(v)].\]
Since $G_v\in \Gc^{t,\epsilon}_{n,\Delta}$, by the product-distribution property in Item~\ref{it:2spin2}, we obtain that 
\[\Eb_{\sigma\sim \mu_{G_v}}[\sigma(w_v)\mid \sigma_{\overline{W}_v}=\eta, \Yc(\sigma)=Y(v)]=(1\pm \epsilon)q_{Y(v)},\]
where $q_{\plm}$ are as in Section~\ref{sec:phase2spingadgets}.
Plugging back into \eqref{eq:b4btg5tb5yhyhnh} we obtain that
\[\Eb_{\sigma\sim \mu}[\sigma(w_v)]=(1\pm \epsilon)\sum_{Y: V(H)\rightarrow \two}\mu\big(\widehat{\Yc}(\sigma)=Y\big) q_{Y(v)},\]
In turn, plugging this back into \eqref{eq:b4btg5tb5yhyhnh} yields that
\begin{equation}\label{eq:b5}
\Oc_{\beta,\gamma,\lambda}(H^{\ell_\pl,\ell_\mi}_{G,\Tc_\pl,\Tc_\mi,\Tc})=\Ac|V(H)|+\Eb_{\sigma\sim \mu}\big[\oc_{H^{\ell_\pl,\ell_\mi}_{G,\Tc_\pl,\Tc_\mi}}(\sigma)\big]+O \big[ (q_{\pl}-q_{\mi})\mbox{\textsc{Avg}}_\mu(H)+q_\mi |V(H)|\big].
\end{equation}
where $\mbox{\textsc{Avg}}_\mu(H)=\sum_{Y: V(H)\rightarrow \two}|Y|\mu\big(\widehat{\Yc}(\sigma)=Y\big)$, and $|Y|$ denotes the number of vertices $v\in V(H)$ with $Y(v)=\pl$. We will show that for any phase vector $Y:V(H)\rightarrow \two$, it holds that
\begin{equation}\label{eq:b4b35342spinphases}
\mu\big(\widehat{\Yc}(\sigma)=Y\big)=(1\pm \tfrac{\epsilon'}{2})\mu_{H;\alpha,\hat\lambda}(Y).
\end{equation}
where $\hat\lambda$ and $\epsilon'=10|V(H)|\epsilon$ are as in the statement of the lemma, and $\alpha\in (0,1)$ is a constant depending only on $\beta,\gamma,\lambda,\Delta$. From this, we conclude that $|\mbox{\textsc{Avg}}_\mu(H)-\Mc_{\alpha,\hat\lambda}(H)|\leq \frac{\epsilon'}{2}|V(H)|$, which completes the proof. 

To prove \eqref{eq:b4b35342spinphases}, we adapt arguments from \cite{CAI}. Let $W_H=\bigcup_{v\in V(H)} W_v$ denote the union of all ports over the phase gadgets $G$. Moreover, for a vertex $v\in  V(H)$ and $i\in [\ell_\pl]$, let $\hat\Tc^{i}_{\pl}$ be the $i$-th copy of the gadget $\Tc_\pl$ in $G_u$,  and $w^{i}_{v,\pl}$ be the corresponding root. Similarly, for $i\in [\ell_\mi]$, let $\Tc^{i}_{\mi}$ be the $i$-th copy of the field gadget $\Tc_\mi$ in $G_v$,  and $w^{i}_{v,\mi}$ be the corresponding root.   Recall also that $w_{v}$ is the root of the field gadget $\Tc_v$. For an edge $\{u,v\}\in E(H)$, denote also by $w^\pl_{u,e},w^\pl_{v,e}$ the endpoints of the edge connecting the $\pl$ sides of the gadgets $G_u,G_v$, and denote by $w^\mi_{u,e},w^\mi_{v,e}$ the corresponding endpoints for the $\mi$ side.

For a configuration $\eta:W_H\rightarrow\two$, $v\in V(H)$ and an edge $e=\{u,v\}\in E(H)$, it will be convenient to set
\begin{equation*}
\begin{aligned}
\Pi_v(\eta)&:= R_\Tc^{\mathbf{1}\{\eta(w_{u})=1\}}\mbox{$\prod_{i\in [\ell_\pl]}$\,}(R_{\Tc_\pl})^{\mathbf{1}\{\eta(w_{u,\pl}^i)=1\}} \mbox{$\prod_{i\in [\ell_\mi]}$\,}(R_{\Tc_\mi})^{\mathbf{1}\{\eta(w_{v,\mi}^{i})=1\}},\\
\Pi_{e}(\eta)&=\mbox{$\prod_{i\in {\pl,\mi}}$\,}\beta^{\mathbf{1}\{\sigma(w^i_{u,e})=\sigma(w^i_{v,e})=0\}}\,\gamma^{\mathbf{1}\{\sigma(w^i_{u,e})=\sigma(w^i_{v,e})=1\}}.
\end{aligned}
\end{equation*}
Let $\hat{\mu}$ be the Gibbs distribution with parameters $\beta,\gamma,\lambda$ on the graph $H_G$, i.e., the graph consisting only of the phase gadgets $G$, where all the field gadgets have been removed. Then, for all $Y: V(H)\rightarrow \two$, we have that  
\begin{equation}\label{eq:aab3bfr54fff}
\begin{aligned}
\mu\big(\widehat{\Yc}(\sigma)&=Y\big)\propto \hat{\mu}\big(\widehat{\Yc}(\sigma)=Y\big)\sum_{\eta: W_H\rightarrow \two} \hat{\mu}\big(\sigma_W=\eta\,\big|\, \widehat{\Yc}(\sigma)=Y)\prod_{v\in V(H)}\Pi_v(\eta)\prod_{e\in E(H)}\Pi_e(\eta).
\end{aligned}
\end{equation}
Since the gadgets $G_v\in G^{t,\epsilon}_{n,\Delta}$ for all $v\in V(H)$, we have from Items~\ref{it:2spin1} and~\ref{it:2spin2} that 
\[\hat{\mu}\big(\widehat{\Yc}(\sigma)=Y\big)=\big(\tfrac{1}{2}\pm \epsilon\big)^{|V_H|}\ \mbox{ and }\ \hat{\mu}\big(\sigma_W=\eta\, \big|\, \widehat{\Yc}(\sigma)=Y)=(1\pm \epsilon)^{|V(H)|}\prod_{v\in V(H)}Q^{Y(v)}_{W_v}(\eta_{W_v}).\]
Therefore we can rewrite \eqref{eq:aab3bfr54fff} as 
\begin{equation}\label{eq:bbbger3fr54fff2}
\begin{aligned}
\mu\big(\widehat{\Yc}(\sigma)&=Y\big)\propto \big(1\pm \tfrac{\epsilon'}{5}\big)\sum_{\eta: W_H\rightarrow \two} \prod_{v\in V(H)}Q^{Y(v)}_{W_v}(\eta_{W_v})\prod_{v\in V(H)}\Pi_v(\eta)\prod_{e\in E(H)}\Pi_e(\eta).
\end{aligned}
\end{equation}
For every vertex $v\in V(H)$, the multiplicative contribution to the sum in the r.h.s. of \eqref{eq:bbbger3fr54fff2} coming from $v$  is 
\begin{align*}
\big(q_\pl R_{\pl}+1-q_\pl\big)^{\ell_\pl}\big(q_\mi R_{\mi}+1-q_\mi\big)^{\ell_\mi}\big(q_\pl R+1-q_\pl\big) &\quad \mbox{ if $Y(v)=\pl$},\\
\big(q_\mi R_{\pl}+1-q_\mi\big)^{\ell_\pl}\big(q_\pl R_{\mi}+1-q_\pl\big)^{\ell_\mi}\big(q_\mi R+1-q_\mi\big)&\quad  \mbox{ if $Y(v)=\mi$.}
\end{align*}
To calculate the multiplicative contribution for an edge $e=\{u,v\}\in E(H)$, it will be convenient to define the matrix $\Mb=\{M_{ij}\}_{i,j\in \plm}$ from $\Mb=\big[\begin{smallmatrix} 1-q_\mi& q_\mi\\1-q_\pl&q_\pl\end{smallmatrix}\big]\big[\begin{smallmatrix} \beta&1\\1&\gamma\end{smallmatrix}\big]\big[\begin{smallmatrix} 1-q_\mi& 1-q_\pl\\q_\mi&q_\pl\end{smallmatrix}\big]$. Then, the multiplicative contribution to the sum in the r.h.s. of \eqref{eq:bbbger3fr54fff2} coming from an edge $e=\{u,v\}\in E(H)$ is  
\[M_{\pl\pl}M_{\mi\mi} \quad \mbox{ if $Y(u)=Y(v)$},\qquad  M_{\pl\mi}M_{\mi\pl}\quad \mbox{ if $Y(u)\neq Y(v)$.}\]
Using that $\beta\gamma<1$ and that $q_\pl\neq q_\mi$, it follows that $\alpha:=\tfrac{M_{\pl\pl}M_{\mi\mi}}{M_{\pl\mi}M_{\mi\pl}}\in (0,1)$.  Therefore, for $\hat\lambda:=\big(\tfrac{q_\pl R+1-q_\pl}{q_\mi R+1-q_\mi}\big)\big(\tfrac{q_\pl R_{\pl}+1-q_\pl}{q_\mi R_{\pl}+1-q_\mi}\big)^{\ell_\pl}/\big(\tfrac{q_\pl R_{\mi}+1-q_\pl}{q_\mi R_{\mi}+1-q_\mi}\big)^{\ell_\mi}$ as in the statement of the lemma, it follows that
\[\mu\big(\widehat{\Yc}(\sigma)=Y\big)\propto(1\pm \tfrac{\epsilon'}{5})\alpha^{m_H(Y)}\hat\lambda^{|Y|} \mbox{ for all } Y: V(H)\rightarrow\two,\]
which yields \eqref{eq:b4b35342spinphases}, and therefore completes the proof of Lemma~\ref{lem:2spinconstruction}.
\end{proof}

\section{Constructing the field  and edge-interaction gadgets}

\subsection{Recursive Constructions}

\subsubsection{Field Gadgets with observable gaps for 2-spin systems}
The following lemma gives the effective field and observable gap of a field gadget built out of smaller field gadgets. An analogous lemma was proved in \cite{averages} in the special case of magnetization, but the proof therein crucially uses the fact that the magnetization can be viewed as a partial derivative of the partition function. Here, we show a more general argument that does not require this partial-derivative view (which is not true for general vertex-edge observables).
\begin{lemma}\label{lem:2spinrecursion}
Let $(\beta,\gamma)$ be antiferromagnetic and $\lambda>0$, and let $(a,b,c)$ be a vertex-edge observable. 

Suppose we have field gadgets $\Tc_1,\dots,\Tc_k$ with roots $\rho_1,\hdots,\rho_k$, effective fields $R_1,\dots,R_k$ and
observable gaps $O_1,\dots,O_k$, respectively. Let $\Tc$ be the field gadget with root $\rho$, constructed by identifying the 
roots of $\Tc_1,\dots,\Tc_k$ into a single vertex $u$ and adding the edge $\{\rho,u\}$. Let $R$ and $O$ be the
effective field and observable gap for $\Tc$. Then
\begin{equation*}
R = \frac{1+\gamma\lambda\prod_{i\in [k]} R_i}{\beta+\lambda \prod_{i\in [k]}R_i}, \qquad O  = \theta(R) - \omega(R)  \sum_{i\in[k]} O_i,
\end{equation*}
where $\omega(R) := \frac{1+\beta \gamma-\beta R-\gamma/R}{1-\beta \gamma}$ and  $\theta(R)=-a\cdot \omega (R)- b \frac{\beta(R-\gamma)}{1-\beta \gamma}+c\frac{\gamma(1/R-\beta)}{1-\beta\gamma}$. Moreover, it holds that $R\in (\gamma, 1/\beta)$ and $0<\omega(R)<1$.
\end{lemma}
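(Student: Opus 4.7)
The plan is to compute $R$ and $O$ by a direct probabilistic decomposition, conditioning on the spin of the auxiliary vertex $u$ obtained by identifying the roots $\rho_1,\ldots,\rho_k$. Working with conditional expectations rather than differentiating the log-partition function is precisely what removes the magnetization-specific step of \cite{averages} and lets one handle arbitrary vertex-edge observables.

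For the effective field, let $Z_i^{(s)}$ denote the partition function of $\Tc_i$ restricted to configurations with $\sigma_{\rho_i}=s$ and with the root's vertex activity $\lambda^{s}$ stripped out, so that $R_i=Z_i^{(1)}/Z_i^{(0)}$. The weight in $\Tc$ of a configuration with $(\sigma_\rho,\sigma_u)=(s_\rho,s_u)$ factors as $\lambda^{s_\rho+s_u}w(s_\rho,s_u)\prod_i Z_i^{(s_u)}$, where $w(0,0)=\beta$, $w(1,1)=\gamma$, $w(0,1)=w(1,0)=1$. Setting $P:=\prod_i R_i$ and summing $s_u$ out yields the claimed expression $R=(1+\gamma\lambda P)/(\beta+\lambda P)$; I would then invert this relation to obtain $\lambda P=(1-\beta R)/(R-\gamma)$ and read off the four conditional probabilities $P(s_u\mid s_\rho):=\mu(\sigma_u=s_u\mid \sigma_\rho=s_\rho)$. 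In closed form this gives $P(1\mid 1)=\gamma(1/R-\beta)/(1-\beta\gamma)$ and $P(0\mid 0)=\beta(R-\gamma)/(1-\beta\gamma)$, together with the key algebraic identity $P(1\mid 1)-P(1\mid 0)=-\omega(R)$.

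For the observable gap, the bookkeeping is the delicate point: the identified vertex $u$ lies in every subtree $\Tc_i$, so its vertex contribution must be counted once rather than $k$ times. Writing $\widetilde o_i(\sigma)$ for the observable of $\Tc_i$ considered as a standalone gadget rooted at $u$, I would decompose
\[o_\Tc(\sigma)=a\mathbf{1}\{\sigma_\rho=1\}+b\mathbf{1}\{\sigma_\rho=\sigma_u=0\}+c\mathbf{1}\{\sigma_\rho=\sigma_u=1\}+\sum_{i=1}^{k}\widetilde o_i(\sigma)-(k-1)a\mathbf{1}\{\sigma_u=1\}.\]
Removing $u$ disconnects $\rho$ from all subtrees, so conditional on $\sigma_u=s_u$ the $\Tc_i$ are independent of $\sigma_\rho$ and each contributes $Y_i^{(s_u)}:=\Eb_{\mu_{\Tc_i}}[\widetilde o_i\mid \sigma_{\rho_i}=s_u]$, with $Y_i^{(1)}-Y_i^{(0)}=O_i+a$ by definition of the observable gap. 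Taking the two conditional expectations of $o_\Tc$, forming $O=\Eb[o_\Tc\mid\sigma_\rho=1]-a-\Eb[o_\Tc\mid\sigma_\rho=0]$ and collecting terms yields $O=P(1\mid 1)c-P(0\mid 0)b+\bigl(P(1\mid 1)-P(1\mid 0)\bigr)\bigl[\sum_i O_i+a\bigr]$; substituting the closed forms and the identity $P(1\mid 1)-P(1\mid 0)=-\omega(R)$ gives exactly $\theta(R)-\omega(R)\sum_i O_i$.

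The final step is to verify the bounds. Both $R>\gamma$ and $R<1/\beta$ reduce, using $R=(1+\gamma\lambda P)/(\beta+\lambda P)$ with $\lambda P>0$, to the antiferromagneticity assumption $\beta\gamma<1$. For $\omega(R)>0$ I would exploit the factorisation $\beta R^{2}-(1+\beta\gamma)R+\gamma=(\beta R-1)(R-\gamma)$, which is strictly negative on $(\gamma,1/\beta)$. For $\omega(R)<1$, AM--GM gives $\beta R+\gamma/R\geq 2\sqrt{\beta\gamma}>2\beta\gamma$, again by $\beta\gamma<1$. The main obstacle is the observable-decomposition step: the coefficient $-\omega(R)$ in front of $\sum_i O_i$ is not visible from any local consideration, and emerges only once the cancellation identity $P(1\mid 1)-P(1\mid 0)=-\omega(R)$ is extracted from the algebra; getting there requires the $(k-1)a\mathbf{1}\{\sigma_u=1\}$ correction for the overcount of $u$ to be threaded correctly through the conditional expectations, which is precisely the point at which the general vertex-edge case departs from the magnetization-only argument of \cite{averages}.
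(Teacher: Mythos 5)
Your proposal is correct and matches the paper's proof in essence: both compute $R$ by summing out $\sigma_u$ from the partition-function factorization, define the conditional probabilities $p_{s_\rho s_u}=\mu(\sigma_u=s_u\mid\sigma_\rho=s_\rho)$ and extract the identity $p_{11}-p_{01}=p_{00}-p_{10}=-\omega(R)$, decompose the observable by conditioning on $\sigma_u$ (exploiting the conditional independence of the subtrees and the $-(k-1)a\mathbf{1}\{\sigma_u=1\}$ overcount correction), and collect terms to arrive at $\theta(R)-\omega(R)\sum_i O_i$. The only cosmetic difference is that you package the bookkeeping as an explicit decomposition of $o_{\Tc}(\sigma)$ before taking expectations, whereas the paper writes the two conditional expectations out directly; your explicit verification of the bounds $R\in(\gamma,1/\beta)$ and $\omega(R)\in(0,1)$ is also correct and is additional detail the paper's proof leaves implicit.
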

\begin{proof}[Proof of Lemma~\ref{lem:2spinrecursion}]
The equality for $R$ has been shown in \cite{averages}, though it will be helpful to review the derivation.  Let $Z_1,Z_0$ be the weight of all configurations on $\Tc$  where $\rho$ has spins $1$ and 0 respectively, and define analogously $Z_{1,i},Z_{0,i}$ for $i=1,\hdots,k$. Then, we have
\begin{align}\label{eq:Z0Z1}
Z_1=\lambda \bigg(\prod_{i\in [k]} Z_{0,i}+\gamma\lambda\prod_{i\in [k]} \frac{Z_{i,1}}{\lambda}\bigg), \quad
Z_0=\beta\prod_{i\in [k]} Z_{0,i}+\lambda \prod_{i\in [k]} \frac{Z_{i,1}}{\lambda}
\end{align}
We can therefore express the effective field of $\Tc$ as 
\begin{equation}\label{eq:R1R1RR}
R=\frac{1}{\lambda}\frac{Z_1}{Z_0}=\frac{1+\gamma\lambda\prod^{k}_{i=1} R_i}{\beta+\lambda \prod_{i\in [k]} R_i}, \mbox{ and hence } \prod_{i\in [k]}R_i=\frac{1-\beta R}{\lambda(R-\gamma)}.
\end{equation}
To find the observable gap, it will be convenient to  write $\mu$ instead of $\mu_{\Tc;\beta,\gamma,\lambda}$ and  $\mu_i$ instead of $\mu_{\Tc_i;\beta,\gamma,\lambda}$. For $s,t\in\{0,1\}$, let $p_{st}=\mu(\sigma(u)=t\mid \sigma(\rho)=s)$. Using \eqref{eq:Z0Z1} and \eqref{eq:R1R1RR}, we have the more precise expressions
\begin{equation}\label{eq:p0001}
\begin{aligned}
p_{00}&=\frac{\beta\prod^{k}_{i=1} Z_{0,i}}{Z_0}=\frac{\beta}{\beta+\lambda \prod^k_{i=1}R_i}=\frac{\beta(R-\gamma)}{1-\beta \gamma},&&  p_{01}=1-p_{00}=\frac{1-\beta R}{1-\beta \gamma}, \\
p_{10}&=\frac{\lambda\prod^{k}_{i=1} Z_{0,i}}{Z_1}=\frac{1}{1+\gamma \lambda \prod^{k}_{i=1} R_i}=\frac{1-\gamma/R}{1-\beta\gamma},&& p_{11}=1-p_{10}=\frac{\gamma(1/R-\beta)}{1-\beta\gamma}.
\end{aligned}
\end{equation}
Observe now that, conditioned on $\sigma(u)$, the distribution $\mu$ factorizes, and  the configurations on $\Tc_1,\hdots,\Tc_k$ are independent. Therefore
\begin{align*}
\Eb_{\sigma\sim\mu}&[\oc_{\Tc}(\sigma)\mid \sigma_\rho=0]=\sum_{t=0,1}p_{0t}\Eb_{\sigma\sim\mu}[\oc_{\Tc}(\sigma)\mid \sigma_u=t,\sigma_\rho=0]\\
&=p_{00}\Big(b+\sum_{i\in [k]}\Eb_{\sigma\sim\mu_{i}}[\oc_{\Tc_i}(\sigma)\mid \sigma_{\rho_i}=0]\Big) +p_{01}\Big(-a(k-1)+\sum_{i\in [k]}\Eb_{\sigma\sim\mu_{i}}\big[\oc_{\Tc_i}(\sigma)\mid \sigma_{\rho_i}=1\big]\Big),
\end{align*}
and similarly
\begin{align*}
\Eb_{\sigma\sim\mu}&[\oc_\Tc(\sigma)\mid \sigma_\rho=1]=\sum_{t=0,1}p_{1t}\Eb_{\sigma\sim\mu}[\oc_{\Tc}(\sigma)\mid \sigma_u=t,\sigma_\rho=1]\\
&=a+p_{10}\sum_{i\in [k]}\Eb_{\sigma\sim\mu_{i}}[\oc_{\Tc_i}(\sigma)\mid \sigma_{\rho_i}=0] +p_{11}\Big(c-a(k-1)+\sum_{i\in [k]}\Eb_{\sigma\sim\mu_{i}}\big[\oc_{\Tc_i}(\sigma)\mid \sigma_{\rho_i}=1\big]\Big).
\end{align*}
Note that $p_{11}-p_{01}=p_{00}-p_{10}=-\omega(R)$ and $-a\cdot \omega(R)+p_{11}c-p_{00}b=\theta(R)$, cf. \eqref{eq:p0001}. Since 
$ O=\Eb_{\sigma\sim\mu}[\oc_\Tc(\sigma)\mid \sigma_\rho=1]-a-\Eb_{\sigma\sim\mu}  [\oc_\Tc(\sigma)\mid \sigma_\rho=0]$, by subtracting the equations above we obtain the expression for $O$. 
\end{proof}

\subsubsection{Edge-interaction gadgets with susceptibility gaps for the Potts model}\label{sec:susc1Potts}
The following lemma is the analogue of Lemma~\ref{lem:2spinrecursion} in the case of the Potts model.
\begin{lemma}\label{lem:Pottsrecursion}
Let $q\geq 3$ and $\beta>1$. Suppose we have edge-interaction gadgets $\Ec_1,\dots,\Ec_k$ with ports $(\rho_1,\rho_1'),\hdots,(\rho_k,\rho_k')$, effective edge activities $B_1,\dots,B_k$ and
susceptibility gaps $S_1,\dots,S_k$, respectively. Let $\Ec$ be the edge-interaction gadget with ports $(\rho,\rho')$, constructed by identifying the 
ports $\rho_1,\hdots,\rho_k$ into a single vertex $u$, the ports $\rho_1',\hdots,\rho_k'$ into a single vertex $v$, and adding the edges $\{\rho,u\}$ and $\{v,\rho'\}$. Let $B$ and $S$ be the
effective edge activity and susceptibility gap for $\Ec$, respectively. Then
\begin{equation*}
B = \frac{1+\hat\gamma\hat\lambda\prod_{i\in [k]} B_i}{\hat\beta+\hat\lambda \prod_{i\in [k]}B_i}, \qquad S  = \theta(B) - \omega(B)  \sum_{i\in[k]} O_i,
\end{equation*}
where $\hat\beta:=1+\tfrac{(\beta-1)^2}{(q-1)(2\beta+q-2)}$, $\hat\gamma:=1+\tfrac{(\beta-1)^2}{2\beta+q-2}$, $\hat\lambda:=\tfrac{1-\hat\beta}{1-\hat\gamma}=\tfrac{1}{q-1}$, and $\omega(B) := \tfrac{1+\hat\beta \hat\gamma-\hat\beta B-\hat\gamma/B}{1-\hat\beta \hat\gamma}$, $\theta(B):=\tfrac{2\beta ( B-1) (B + q-1)}{B (\beta-1) (\beta + q-1)}$. Moreover, it holds that $B\in (1, \gamma)$ and $-1<\omega(B)<0$.
\end{lemma}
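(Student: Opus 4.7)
The plan is to exploit the $S_q$-symmetry of the Potts model to reduce all calculations to a 2-state analysis at the two ports $\rho,\rho'$, revealing $\hat\beta,\hat\gamma,\hat\lambda$ as the natural parameters that rewrite the Potts composition rule in the 2-spin form of Lemma~\ref{lem:2spinrecursion}. To carry this out for $B$, I first identify the ports of the $\Ec_i$'s into $u,v$ to produce a lumped two-vertex interaction whose weight depends only on whether $\sigma_u=\sigma_v$: setting $a=\prod_i a_i$ and $b=\prod_i b_i$ for the two cases (where $a_i,b_i$ are the analogous quantities for $\Ec_i$, so $B_i=a_i/b_i$ and $B_*:=\prod_i B_i=a/b$), I then expand the weights $W(\sigma_\rho,\sigma_{\rho'})$ by case-splitting on $(\sigma_u,\sigma_v)\in[q]^2$ and collecting the edge factors $\beta^{\mathbf{1}\{\sigma_\rho=\sigma_u\}}\beta^{\mathbf{1}\{\sigma_v=\sigma_{\rho'}\}}$; a direct enumeration gives
\begin{equation*}
W(1,1)/b = B_*[\beta^2+q-1]+(q-1)[2\beta+q-2],\quad W(1,2)/b = B_*[2\beta+q-2]+(\beta+q-2)^2+q-1.
\end{equation*}
The formula $B=W(1,1)/W(1,2)=(1+\hat\gamma\hat\lambda B_*)/(\hat\beta+\hat\lambda B_*)$ is then a direct algebraic verification after substituting the prescribed $\hat\beta,\hat\gamma,\hat\lambda$.

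For the susceptibility gap, I decompose $m_\Ec=\mathbf{1}\{\sigma_\rho=\sigma_u\}+\sum_i m_{\Ec_i}+\mathbf{1}\{\sigma_v=\sigma_{\rho'}\}$ and condition on $(\sigma_u,\sigma_v)$: the gadgets $\Ec_i$ become mutually independent and, by $S_q$-symmetry, each is distributed as $\mu_{\Ec_i}$ conditioned only on whether $\sigma_{\rho_i}=\sigma_{\rho_i'}$, so $\Eb[m_{\Ec_i}\mid\sigma_u=\sigma_v]-\Eb[m_{\Ec_i}\mid\sigma_u\neq\sigma_v]=S_i$. Writing $P^=_{uv}=P(\sigma_u=\sigma_v\mid\sigma_\rho=\sigma_{\rho'})$ and $P^{\neq}_{uv}=P(\sigma_u=\sigma_v\mid\sigma_\rho\neq\sigma_{\rho'})$, this contributes $(P^=_{uv}-P^{\neq}_{uv})\sum_i S_i$ to $S$, identifying $\omega(B)=P^{\neq}_{uv}-P^=_{uv}$. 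By the left-right symmetry of the construction, the two terminal-edge indicators contribute equally, giving $\theta(B)=2\bigl[P(\sigma_\rho=\sigma_u\mid\sigma_\rho=\sigma_{\rho'})-P(\sigma_\rho=\sigma_u\mid\sigma_\rho\neq\sigma_{\rho'})\bigr]$. Direct enumeration then expresses each of these four conditional probabilities as an explicit rational function of $B_*,\beta,q$ via the same weights above.

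Matching these probabilities to the closed forms in the statement hinges on two factorizations. Writing $D_==W(1,1)/b$ and $D_{\neq}=W(1,2)/b$, a short expansion gives
\begin{equation*}
D_=-D_{\neq}=(B_*-1)(\beta-1)^2,\qquad D_=+(q-1)D_{\neq}=(B_*+q-1)(\beta+q-1)^2,
\end{equation*}
so that $B-1=(B_*-1)(\beta-1)^2/D_{\neq}$ and $B+q-1=(B_*+q-1)(\beta+q-1)^2/D_{\neq}$. Substituting these, the difference of marginals for $\theta(B)$ telescopes into $2\beta(B-1)(B+q-1)/[B(\beta-1)(\beta+q-1)]$, and the analogous manipulation for $\omega(B)$ yields $(1+\hat\beta\hat\gamma-\hat\beta B-\hat\gamma/B)/(1-\hat\beta\hat\gamma)$. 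Finally, the range $B\in(1,\hat\gamma)$ follows since $B$ is strictly increasing in $B_*\geq 1$ with $B=1$ at $B_*=1$ and $\lim_{B_*\to\infty}B=\hat\gamma$, while the bounds $-1<\omega(B)<0$ reduce, using $1-\hat\beta\hat\gamma<0$, to the inequalities $2<\hat\beta B+\hat\gamma/B<1+\hat\beta\hat\gamma$ on $B\in(1,\hat\gamma)$, which hold because the function $B\mapsto \hat\beta B+\hat\gamma/B$ has minimum $2\sqrt{\hat\beta\hat\gamma}>2$ and attains $1+\hat\beta\hat\gamma$ only at $B=\hat\gamma$. The main obstacle is discovering and verifying the two factorization identities for $D_=\pm(q-1)D_{\neq}$; these are the algebraic mechanism that lets the $q$-ary Potts composition collapse into a 2-spin recursion and dictates the explicit form of $\hat\beta,\hat\gamma,\hat\lambda$.
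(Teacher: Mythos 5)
Your proof is correct, but it takes a genuinely different route from the paper's. The paper exploits the identity $S_\Ec = \beta\,\partial_\beta\log B_\Ec$, which holds because the susceptibility is a logarithmic derivative of the restricted partition functions $Z^{\mathsf{in}},Z^{\mathsf{out}}$ (the denominators in $B_\Ec$); it then differentiates the composition formula $B = (1+\hat\gamma\hat\lambda\prod B_i)/(\hat\beta+\hat\lambda\prod B_i)$ with respect to $\beta$, treating $\hat\beta,\hat\gamma$ as explicit functions of $\beta$ and using $\partial_\beta B_i = (B_i/\beta)S_i$. The $\omega(B)\sum_i S_i$ term then falls out of the chain rule through the $B_i$'s, and $\theta(B)$ from the $\hat\beta,\hat\gamma$ dependence. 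You instead carry out a direct probabilistic decomposition of $m_\Ec$ by conditioning on $(\sigma_u,\sigma_v)$, identifying $-\omega(B)$ with the probability gap $P(\sigma_u=\sigma_v\mid\sigma_\rho=\sigma_{\rho'})-P(\sigma_u=\sigma_v\mid\sigma_\rho\neq\sigma_{\rho'})$ and $\theta(B)$ with twice the terminal-edge gap, then recover the closed forms via the two factorization identities $D_= - D_{\neq}=(B_*-1)(\beta-1)^2$ and $D_= + (q-1)D_{\neq}=(B_*+q-1)(\beta+q-1)^2$. This is essentially the same technique the paper uses for its 2-spin Lemma~\ref{lem:2spinrecursion} (where the partial-derivative view is unavailable for general observables); applying it to the Potts case works equally well and has the virtue of generalizing to Potts observables that are not derivatives of $\log Z$, at the cost of more algebra than the paper's derivative shortcut. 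Your derivations of the range $B\in(1,\hat\gamma)$ and $-1<\omega(B)<0$ are also correct (with the minor remark that $\hat\beta B+\hat\gamma/B=1+\hat\beta\hat\gamma$ also at $B=1/\hat\beta$, but that root lies outside $(1,\hat\gamma)$); note the statement's ``$B\in(1,\gamma)$'' is a typo for $(1,\hat\gamma)$, and ``$\sum_i O_i$'' for ``$\sum_i S_i$''.
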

\begin{proof}
To simplify notation, we will write $\mu$ instead of $\mu_{\Ec;q,\beta}$, and similarly for $\Ec_1,\hdots,\Ec_k$. Let $Z_0$ be the weight of all configurations on $\Ec$ where $\rho,\rho'$ have the  spin 1, and $Z_1$ be the weight of all configurations on $\Ec$  where $\rho,\rho'$ have spins 1 and 2, respectively. Define analogously $Z_{0,i},Z_{1,i}$ for the edge-interaction gadget $\Ec_{i}$, $i\in [k]$. Then, we have the equalities
\begin{equation}\label{eq:Zin5t4Zout}
\begin{aligned}
Z_0 &=\big(\beta^2+q-1\big)\mbox{$\prod_{i\in [k]}$\,} Z_{0,i}+\big(2(q-1)\beta+(q-1)(q-2)\big)\mbox{$\prod_{i\in [k]}$\,}Z_{1,i},\\ 
Z_1 &=\big(2\beta+q-2\big)\mbox{$\prod_{i\in [k]}$\,} Z_{0,i}+\big(\beta^2+2(q-2)\beta+(q^2-3q+3)\big)\mbox{$\prod_{i\in [k]}$\,}Z_{1,i},\\
\end{aligned}
\end{equation}
from which the expression for $B=\frac{Z_0}{Z_1}=\tfrac{1+\hat\gamma\hat\lambda\prod_{i\in [k]} B_i}{\hat\beta+\hat\lambda \prod_{i\in [k]}B_i}$ given in the statement follows, using that $B_{i}=\frac{Z_{0,i}}{Z_{1,i}}$ for $i\in [k]$.

Note that $S=\beta\tfrac{\partial\log B}{\partial \beta}$ and  $S_i=\beta\tfrac{\partial\log B}{\partial \beta}=\tfrac{\beta}{B_i}\tfrac{\partial B_i}{\partial \beta}$. Using the expression for $B$, it follows that
\begin{equation}\label{eq:tggggg5g6g6sfww}
S/\beta=\frac{\hat\gamma\hat\lambda\prod_{i\in [k]} B_i\big(\sum_{i\in [k]}S_i/\beta\big)}{1+\hat\gamma\hat\lambda\prod_{i\in [k]} B_i}+\frac{\frac{\partial \hat\gamma}{\partial\beta}\hat\lambda\prod_{i\in [k]} B_i}{1+\hat\gamma\hat\lambda\prod_{i\in [k]} B_i}-\frac{\frac{\partial \hat\beta}{\partial\beta}+\hat\lambda\prod_{i\in [k]} B_i\big(\sum_{i\in [k]}S_i/\beta\big)}{\hat\beta+\hat\lambda \prod_{i\in [k]}B_i}
\end{equation}
From $B=\tfrac{1+\hat\gamma\hat\lambda\prod_{i\in [k]} B_i}{\hat\beta+\hat\lambda \prod_{i\in [k]}B_i}$ and  $\hat\lambda=\tfrac{1-\hat\beta}{1-\hat\gamma}$, we have that $\prod_{i\in [k]} B_i=\frac{\hat\beta B-1}{(\hat\gamma-B) \hat\lambda}=\frac{(B \hat\beta-1) (\hat\gamma-1)}{( \hat\beta-1) (B - \hat\gamma)}$. We also have that $\frac{\partial \hat\beta}{\partial\beta}=\hat\lambda\frac{\partial \hat\gamma}{\partial\beta}$ and therefore we can rewrite \eqref{eq:tggggg5g6g6sfww} as
\[S/\beta=-\omega(B)\Big(\sum_{i\in [k]}S_i/\beta\Big)+\theta(B)/\beta,\]
which can be verifed by just plugging everything in. This yields the expression for $S$, thus finishing the proof of the lemma.
\end{proof}

We are now in position to give the proof of the first of the interaction-gadget lemmas for the Potts model.
\begin{lemintergadget}
\statelemintergadget
\end{lemintergadget}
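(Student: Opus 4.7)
The plan is to take $\Pc$ to be simply a path of length $k$ (that is, $k$ edges and $k-1$ internal vertices) joining the two ports $\rho,\rho'$, and to choose $k = O(\log(1/r))$ appropriately. The effective interaction $B_\Pc$ is computed via the standard transfer-matrix formalism: if $T$ is the $q\times q$ symmetric matrix with $T_{ii}=\beta$ and $T_{ij}=1$ for $i\neq j$, then the total weight of configurations on $\Pc$ with $\sigma(\rho)=i$ and $\sigma(\rho')=j$ equals $(T^k)_{ij}$, so
\[
B_\Pc \;=\; \frac{(T^k)_{11}}{(T^k)_{12}}.
\]

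Next, I spectrally decompose $T$: the all-ones vector is an eigenvector with eigenvalue $\beta+q-1$, and its orthogonal complement is the eigenspace of eigenvalue $\beta-1$, so
\[
T^k \;=\; (\beta+q-1)^k\,\tfrac{1}{q}\mathbf{1}\mathbf{1}^T \;+\; (\beta-1)^k\bigl(I-\tfrac{1}{q}\mathbf{1}\mathbf{1}^T\bigr).
\]
Reading off entries gives $(T^k)_{11}=\tfrac{1}{q}\bigl((\beta+q-1)^k+(q-1)(\beta-1)^k\bigr)$ and $(T^k)_{12}=\tfrac{1}{q}\bigl((\beta+q-1)^k-(\beta-1)^k\bigr)$. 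Setting $\alpha := (\beta-1)/(\beta+q-1)$, which lies in $(0,1)$ since $\beta>1$ and $q\geq 2$, I obtain the closed form
\[
B_\Pc \;=\; \frac{1+(q-1)\alpha^k}{1-\alpha^k}, \qquad B_\Pc - 1 \;=\; \frac{q\,\alpha^k}{1-\alpha^k}.
\]
The positivity $B_\Pc > 1$ is immediate. For the upper bound $B_\Pc-1<r$, rearranging gives the equivalent condition $\alpha^k < r/(q+r)$, which holds whenever $k \geq \lceil \log((q+r)/r)/\log(1/\alpha)\rceil$. Since $\alpha$ depends only on the fixed constants $\beta,q$, this choice satisfies $k=O(|\log r|)$, and the path can clearly be written down in time $\mathrm{poly}(\bit(r))$.

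I do not foresee a substantive obstacle; the argument is purely a diagonalization of a constant-size matrix followed by an elementary estimate. The only point worth checking carefully is that a path is a valid edge-interaction gadget in the sense of Definition~\ref{def:edge} (it is a connected series–parallel graph with two distinct degree-one vertices designated as ports), and that the transfer-matrix identity correctly captures the unnormalised weights entering the definition of $B_\Pc$ in Definition~\ref{def:edgeinte33}; both are straightforward.
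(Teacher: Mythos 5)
Your proof is correct, and it takes a genuinely different route from the paper's. The paper's argument considers paths of odd length $2\ell+1$, treats the path as built by repeatedly applying the $k=1$ case of the recursive edge-interaction lemma (Lemma~\ref{lem:Pottsrecursion}), and derives a recursion of the form $B_\ell - 1 = \kappa\,(B_{\ell-1}-1)$ with $\kappa = \tfrac{(\hat\beta-1)(\hat\gamma-1)}{\hat\beta\hat\gamma-1}\in(0,1)$, from which the geometric decay of $B_\ell - 1$ follows. You instead diagonalize the $q\times q$ transfer matrix $T=(\beta-1)I+\mathbf{1}\mathbf{1}^{\T}$ directly, giving the exact closed form $B_\Pc - 1 = q\alpha^k/(1-\alpha^k)$ with $\alpha=(\beta-1)/(\beta+q-1)$. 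Both yield $O(|\log r|)$-length paths, but your derivation is self-contained and more elementary, and it avoids routing through the heavier recursive machinery that the paper keeps around for the susceptibility-gap gadget analysis. A small bonus: the paper's derivation asserts the recursion $B_\ell - 1 = \kappa(B_{\ell-1}-1)$ as an equality, but a careful rewrite of $B_\ell - 1 = \tfrac{(\hat\beta-1)(B_{\ell-1}-1)}{\hat\beta+\hat\lambda B_{\ell-1}}$ shows the denominator depends on $B_{\ell-1}$, so the equality only holds in the limit $B_{\ell-1}\to 1$; since $B_{\ell-1}>1$ the relation is in fact a strict inequality $B_\ell-1<\kappa(B_{\ell-1}-1)$ (which still suffices for the lemma). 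Your closed-form computation sidesteps this imprecision entirely.
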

\begin{proof}
For an integer $\ell\geq 0$, let $B_\ell$ be the edge-interaction of the path with $2\ell+1$ edges. We have $B_0=\beta>1$ and, by Lemma~\ref{lem:Pottsrecursion} (applied with $k=1$), we have that $B_{\ell}=\frac{1+\hat\gamma\hat\lambda B_{\ell-1}}{\hat\beta+\hat\lambda B_{\ell-1}}$ where $\hat\beta,\hat\gamma,\hat\lambda$ are as in Lemma~\ref{lem:Pottsrecursion}. Using that $\hat\lambda=\tfrac{1-\hat\beta}{1-\gamma}$, it follows that 
\[B_\ell-1=\frac{(\hat\beta-1)(\hat\gamma -1)}{\hat\beta\hat\gamma-1}(B_{\ell-1}-1).\]
Since $\hat\beta,\hat\gamma>1$, we have that $\kappa:=\tfrac{(\hat\beta-1)(\hat\gamma -1)}{\hat\beta\hat\gamma-1}$ is a constant satisfying $0<\kappa<1$, and therefore for $\ell=\lceil \frac{\log r}{\log \kappa}\rceil$, we have $0<B_{\ell}-1<r$, as wanted.
\end{proof}
\subsection{Gadgets with dense fields and edge activities around a fixpoint}\label{sec:densefixpoint}

The first step in the construction is to create field and edge-interaction gadgets whose  effective fields and edge activities, respectively, are sufficiently dense in a small interval. The following quantities $x^*=x^*$ and $\omega^*=\omega^*_{\twospin}$ will be relevant for our arguments.
\begin{equation}\label{ode2spin}
x^* = \frac{1+\gamma\lambda (x^*)^2}{\beta+\lambda (x^*) ^2},\quad \omega^*_\twospin= \omega(x^*) = \frac{1+\beta \gamma-\beta x^*-\gamma/x^*}{1-\beta \gamma},
\end{equation}
and note  that $\omega^*_\twospin\in(0,1)$ from Lemma~\ref{lem:2spinrecursion}. The following lemma is taken from \cite{averages} (which in turn is based on techniques from~\cite{complex}).
\begin{lemma}[\mbox{\cite[Lemma 18]{averages}}]\label{ledtwospin}
For any $\delta>0$ and any sufficiently small $\tau_1>0$, there is constant $\tau\in (0,\tau_1)$ and a set of field gadgets $\mathcal{L}_\twospin=\{{\cal T}_1,\dots,{\cal T}_k\}$ with effective fields in the interval $I''_{\twospin}:=[x^* - \tau,x^*+\tau]$ and which is such that,  for any
$x\in [x^* - \tau,x^*+\tau]$, there exists $\Tc\in \mathcal{L}_\twospin$ whose effective field is in the
interval $[x-\tau\delta,x+\tau\delta]$.
\end{lemma}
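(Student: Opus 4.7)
The plan is to exploit the recursion of Lemma~\ref{lem:2spinrecursion} in a neighborhood of the fixpoint $x^*$ to produce a family of field gadgets whose effective fields cover the interval $[x^*-\tau,x^*+\tau]$ on the scale $\tau\delta$. The starting observation is that, taking $k=2$ children both of effective field $y$ and applying Lemma~\ref{lem:2spinrecursion}, the recursion becomes the self-map $h(y)=\frac{1+\gamma\lambda y^2}{\beta+\lambda y^2}$, whose unique positive fixpoint is precisely $x^*$ by \eqref{ode2spin}. Iterating $h$ starting from any gadget with a positive effective field yields a sequence converging towards $x^*$, so after enough iterations one obtains a ``seed'' gadget $\Tc_0$ whose effective field $x_0$ is arbitrarily close to $x^*$; an analogous construction starting from a slightly different base produces a second gadget $\Tc_0'$ with effective field $x_0'$ close to but distinct from $x_0$.

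The next step is to generate controlled perturbations around $x^*$ from the pair $(\Tc_0,\Tc_0')$. Using the recursive construction with $k$ children, $m$ of them copies of $\Tc_0$ and $k-m$ copies of $\Tc_0'$, and linearizing the formula of Lemma~\ref{lem:2spinrecursion} about the diagonal point $(x^*,\hdots,x^*)$, the resulting effective field is
\[x^* \,+\, c\, m\,(x_0'-x_0) \,+\, O\big(|x_0-x_0'|^2+|x_0-x^*|^2\big),\]
for an explicit constant $c$ depending only on $(\beta,\gamma,\lambda)$ and $x^*$. Varying $m\in\{0,1,\hdots,k\}$ therefore realizes an arithmetic grid of $k+1$ effective fields spaced by approximately $c(x_0'-x_0)$; nesting such constructions (using the gadgets already produced as children of a further application of the recursion) gives grids at multiple geometric scales, so that the set of realizable effective fields becomes very rich near $x^*$.

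The main obstacle, and the technically delicate step, is to establish density rather than mere richness: one must rule out the possibility that every realizable effective field lies on some proper closed additive subgroup of $\mathbb{R}$ near $x^*$, which would prevent reaching a prescribed resolution $\tau\delta$. The strategy is to exploit two recursions corresponding to different values of $k$ (or, equivalently, compositions with auxiliary gadgets giving a second fixpoint behavior) so that the associated linearization slopes are generically incommensurable, and then argue by contradiction: assuming lattice confinement forces a nontrivial $\mathbb{Q}$-linear relation among these slopes, which upon expansion via a Cauchy functional equation (in the spirit of the overview of Section~\ref{sec:overview}) imposes an algebraic identity on $(\beta,\gamma,\lambda)$ that fails throughout $\Uc_\Delta$. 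Once density in $[x^*-\tau,x^*+\tau]$ is established on the scale $\tau\delta$, the finite family $\mathcal{L}_\twospin=\{\Tc_1,\hdots,\Tc_k\}$ is extracted by a standard compactness/covering argument, choosing $\tau<\tau_1$ small enough so that the second-order error terms from the linearization do not spoil the approximation.
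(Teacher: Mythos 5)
The paper does not actually re-prove this lemma: it is imported verbatim from \cite[Lemma~18]{averages} (which the authors note in turn builds on techniques from \cite{complex}), so there is no internal proof in this paper to compare against. That said, your proposal as written has several concrete gaps.

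First, the seed construction is not justified. You claim that iterating $h(y)=\frac{1+\gamma\lambda y^2}{\beta+\lambda y^2}$ from an arbitrary positive effective field ``yields a sequence converging towards $x^*$''. The multiplier of $h$ at $x^*$ in log-coordinates is $-2\omega^*$ (indeed, for the $k$-child recursion one has $\tfrac{\mathrm{d}\log R_{\mathrm{new}}}{\mathrm{d}\log(\prod_i R_i)}=-\omega(R_{\mathrm{new}})$, so the diagonal $k=2$ map has derivative $-2\omega^*$ at the fixpoint). The paper only guarantees $\omega^*\in(0,1)$; when $\omega^*>1/2$ the fixpoint $x^*$ is \emph{repelling} for $h$ and the iteration moves away from $x^*$. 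Conversely, the single-child map $R\mapsto\frac{1+\gamma\lambda R}{\beta+\lambda R}$ \emph{is} a contraction (its multiplier is $-\omega(R_{\mathrm{new}})$), but its fixpoint is \emph{not} $x^*$: it solves $\lambda R^2+(\beta-\gamma\lambda)R-1=0$, whereas $x^*$ solves the distinct equation in~\eqref{ode2spin}. So some genuine bootstrapping device is needed to manufacture seed gadgets near $x^*$, and your proposal does not supply one.

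Second, and more seriously, the density step is not a proof. Appealing to ``generically incommensurable'' linearization slopes is not adequate: the lemma must hold for \emph{every} antiferromagnetic $(\beta,\gamma,\lambda)$ in the relevant region, and for specific parameter triples the slopes could well be commensurable. You assert that lattice confinement would force an algebraic identity that ``fails throughout $\Uc_\Delta$'', but this is exactly what must be proved and cannot be assumed. Moreover, the Cauchy functional-equation machinery that you invoke in ``the spirit of the overview'' is, in this paper, applied to the \emph{observable gap} $F$ in Lemmas~\ref{lem:3deecer} and~\ref{final2spin}, not to effective-field density; Lemma~\ref{ledtwospin} is an independent input that is established by a different, contraction-and-covering style argument (see the roles played by Lemmas~\ref{contr} and~\ref{lctg4}, which encode an iterated-function-system overlap condition rather than an incommensurability statement). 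Finally, your arithmetic-grid construction uses $k$ children with $k$ large, which gives vertices of degree $k+1$; while Lemma~\ref{ledtwospin} itself does not impose a degree bound, this is a departure from the bounded-degree setting that the downstream constructions eventually require, and it would need to be reconciled.
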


We have the following analogue for the Potts model, using exactly the same argument as in \cite{averages}, and utilising that 1 is a fixpoint of the recursions in Lemma~\ref{lem:Pottsrecursion}. 
\begin{lemma}\label{ledPotts}
For any $\delta>0$ and any sufficiently small constant $\tau_1>0$, there is constant $\tau\in (0,\tau_1)$ and a set of edge-interaction gadgets $\mathcal{L}_{\Potts}=\{\Ec_1,\dots,\Ec_k\}$ with edge interactions in the interval $I''_{\Potts}:=(1,1+\tau)$ and which is such that, for any
$x\in (1,1+\tau)$, there exists $\Ec\in \mathcal{L}_{\Potts}$ whose edge interaction is in the
interval $[x-\tau\delta,x+\tau\delta]$.
\end{lemma}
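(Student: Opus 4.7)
The plan is to mimic the argument of Lemma~\ref{ledtwospin} from~\cite{averages}, replacing the tree recursion used there with the $k=2$ case of the recursion in Lemma~\ref{lem:Pottsrecursion}, and exploiting the fact that $B=1$ is a fixpoint of the latter. This substitution of fixpoints is precisely the hint given in the statement preamble: the rest of the argument should go through without structural changes.

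The first step is to verify the fixpoint property. Setting $B_1=B_2=1$ in $f(B_1,B_2)=\frac{1+\hat\gamma\hat\lambda B_1B_2}{\hat\beta+\hat\lambda B_1B_2}$ gives $f(1,1)=\frac{1+\hat\gamma\hat\lambda}{\hat\beta+\hat\lambda}=1$, where the last equality follows from the identity $\hat\lambda(1-\hat\gamma)=1-\hat\beta$ (which is built into the definition of $\hat\lambda$). Next, I would compute the linearization of $f$ at $(1,1)$: writing $B_i=1+\epsilon_i$, a short calculation yields
\[
f(1+\epsilon_1,1+\epsilon_2)=1+\mu(\epsilon_1+\epsilon_2)+O(\epsilon_1^2+\epsilon_2^2),\qquad \mu=\frac{\hat\lambda(\hat\gamma\hat\beta-1)}{(1+\hat\gamma\hat\lambda)^2}>0,
\]
where the positivity of $\mu$ follows from $\hat\beta,\hat\gamma>1$. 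This is the analogue of the linearization of the 2-spin recursion around $x^*$ that drives the argument of~\cite{averages}.

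For the family of base gadgets, I would use the paths from Lemma~\ref{lem:intergadget}: for any rational $r\in(0,1/2)$, one obtains a path $\Pc$ of size $O(|\log r|)$ with $B_\Pc\in(1,1+r)$. These provide an arbitrarily fine family of base edge interactions approaching $1$ from above, playing the role of the base field gadgets near $x^*$ used in~\cite{averages}. By composing paths via the $k=2$ recursion of Lemma~\ref{lem:Pottsrecursion} in a binary-tree fashion, the linearization above shows that the effective interaction of the resulting gadget equals $1+\sum_i\mu^{d_i}(B_{\Pc_i}-1)$ up to second-order terms, where $d_i$ is the depth of leaf $i$ in the composition tree. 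Note that every such composition preserves $B>1$ (since the recursion maps $\{B_i>1\}$ into $\{B>1\}$), so all constructed gadgets automatically lie in the one-sided interval $(1,1+\tau)$ demanded by the lemma.

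The main step, and the main obstacle, is to argue that for any $\delta>0$ and any sufficiently small $\tau_1>0$, a suitable finite collection of these binary-tree compositions has effective interactions forming a $\tau\delta$-net of $(1,1+\tau)$ for some $\tau\in(0,\tau_1)$. The plan is to invoke the density argument of~\cite[proof of Lemma~18]{averages} essentially verbatim: its abstract inputs are the existence of a fixpoint with a nonzero one-sided linearization (here $\mu>0$ at $B=1$) and a sufficiently rich base family approaching that fixpoint (here the paths $\Pc$ of Lemma~\ref{lem:intergadget}), both of which are now in place. The adaptation is therefore essentially notational: substitute $B=1$ for $x^*$ throughout, use the paths of Lemma~\ref{lem:intergadget} in place of the 2-spin base gadgets, and check that the quantitative error bounds in~\cite{averages} (which track the higher-order remainder $O(\epsilon_1^2+\epsilon_2^2)$ against the linear approximation) go through unchanged when the parameters $\mu,\kappa$ in our setting are plugged in for their 2-spin counterparts.
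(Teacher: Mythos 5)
Your proposal takes essentially the same approach as the paper, which itself gives only a one-line proof: apply \cite[Lemma~18]{averages} with $1$ playing the role of the fixpoint $x^*$ and the Potts series--parallel recursion of Lemma~\ref{lem:Pottsrecursion} in place of the $2$-spin tree recursion. Your supporting computations are all correct---the identity $\hat\lambda(1-\hat\gamma)=1-\hat\beta$ gives $f(1,1)=1$; the linearization constant equals $\mu=\frac{\hat\lambda(\hat\beta\hat\gamma-1)}{(1+\hat\gamma\hat\lambda)^2}=\frac{(\hat\beta-1)(\hat\gamma-1)}{\hat\beta\hat\gamma-1}=|\omega^*_{\Potts}|\in(0,1)$; and the monotonicity $g(u)>1$ for $u>1$ shows compositions keep $B>1$, explaining the one-sided interval $(1,1+\tau)$---and the residual density step is, as you say, inherited from \cite{averages} exactly as the paper does.
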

Later, we will also need the following quantity $\omega^*_{\Potts}$ given by
\[\omega^*_\Potts= \omega(1) = \frac{1+\hat\beta \hat\gamma-\hat\beta-\hat\gamma}{1-\hat\beta \hat\gamma}=-\frac{(\hat\beta-1)(\hat\gamma-1)}{\hat\beta\hat\gamma-1}\]
and note  that $|\omega^*_\Potts|\in(0,1)$ from Lemma~\ref{lem:Pottsrecursion}.

\subsection{Obtaining non-zero susceptibility and observable gaps}\label{sec:lastproofs}
The second step is to use in an iterative way the field and edge-interaction gadgets in the sets ${\cal L}_\twospin,{\cal L}_\Potts$ constructed in Section~\ref{sec:densefixpoint}. We start by  giving the details for 2-spin models, the details for the Potts model are analogous and we sketch the main differences in the end of this section. Henceforth, for $s\in\{\twospin,\Potts\}$, we will use $[L_s]$ as an index set for the gadgets in $\{{\cal L}_s\}$, e.g., for $i\in [L_{\twospin}]$, $\Tc_i$ will denote the $i$-th field gadget in ${\cal L}_\twospin$.

We will start with some (arbitrary) field gadget $\Tc_0$. Suppose that at some stage we have constructed a field gadget $\Tc$ with effective field $R$ and observable gap $O$. To construct a new field gadget, we merge $\Tc$ with one field gadget from ${\cal L}$ using the operation from Lemma~\ref{lem:2spinrecursion}. We are going to analyze what pairs of
(effective field, observable gap) are achievable by field gadgets constructed using this procedure. For $i\in [L_\twospin]$, let $R_i$ and $O_i$ be the effective
field and observable gap for the $i$-th field gadget in our collection ${\cal L}_{\twospin}$. By Lemma~\ref{lem:2spinrecursion}, merging the field gadget $\Tc$ with
$\Tc_i$ yields a rooted field gadget with effective field $\phi_{i}(R)$ and observable gap $\psi_{i}(R,O)$ where the pair of maps $(\phi_i,\psi_i)$ are given by
\begin{equation}\label{eq:1qazw2spin}
\phi_{i}(R) = \tfrac{1+\gamma \lambda R R_i}{\beta +\lambda R R_i}, \qquad
\psi_{i}(R,O)= \theta\big(\phi_{i}(R)\big) - \omega \big(\phi_{i}(R)\big)( O + O_i),
\end{equation}
where $\omega(\cdot),\theta(\cdot)$ are the functions specified in Lemma~\ref{lem:2spinrecursion}. 
For a small constant $\tau>0$ to be specified later, let
\begin{equation}\label{idef2twospinPotts}
I' = \Big[x^* - \tau \tfrac{2|\omega^*|}{1-|\omega^*|}, x^* +  \tau \tfrac{2|\omega^*|}{1-|\omega^*|}\Big].
\end{equation}
The choice of the interval $I'$ is such that the maps $\phi_i$ are uniformly contracting and map the interval $I'$ to itself. Namely, we will use the following lemma from \cite{averages}.
\begin{lemma}[\mbox{\cite[Lemma 19]{averages}}]\label{contr}
There exist $0<C_{\mathrm{min}} < C_{\mathrm{max}}<1$ and $\tau_0>0$ such that for all $\tau\in(0,\tau_0)$
and all $i\in [L_{\twospin}]$ it holds that
\[|\phi'_{i}(I)|,|\omega(I')|\subseteq[C_{\mathrm{min}},  C_{\mathrm{max}}], \quad \phi_{i}(I')\subseteq I'.\]
\end{lemma}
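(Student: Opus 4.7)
\textbf{Proof proposal for Lemma~\ref{contr}.} My plan is to extract all three conclusions from a single algebraic identity at the fixpoint, namely that $\phi_i'(R) = -\omega(R) = -\omega^*_\twospin$ when $R=R_i=x^*$. Since $\omega^*_\twospin\in(0,1)$ by Lemma~\ref{lem:2spinrecursion}, this identity pins both the contraction rate of $\phi_i$ and the magnitude of $\omega$ at the center of $I'$ to the same constant, and the radius $\tau\cdot 2|\omega^*_\twospin|/(1-|\omega^*_\twospin|)$ of $I'$ in~\eqref{idef2twospinPotts} is engineered so that the invariance $\phi_i(I')\subseteq I'$ is tight up to lower order in $\tau$.

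To verify the identity, differentiate $\phi_i(R)=(1+\gamma\lambda R R_i)/(\beta+\lambda R R_i)$ directly to get $\phi_i'(R)=\lambda R_i(\gamma\beta-1)/(\beta+\lambda R R_i)^2$, then apply the fixpoint equation $x^*(\beta+\lambda(x^*)^2)=1+\gamma\lambda(x^*)^2$ in its two rearranged forms $\lambda(x^*)^2=(1-\beta x^*)/(x^*-\gamma)$ and $\beta+\lambda(x^*)^2=(1-\beta\gamma)/(x^*-\gamma)$. Substituting at $R=R_i=x^*$ collapses the derivative to $-(x^*-\gamma)(1-\beta x^*)/(x^*(1-\beta\gamma))$, while expanding $\omega(x^*)$ via the factorization $x^*+\beta\gamma x^*-\beta(x^*)^2-\gamma=(x^*-\gamma)(1-\beta x^*)$ gives the same expression up to sign. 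As a byproduct this confirms $\omega^*_\twospin\in(0,1)$: positivity from $x^*\in(\gamma,1/\beta)$, and the upper bound from the strict AM--GM inequality $\beta(x^*)^2+\gamma>2\beta\gamma x^*$ under $\beta\gamma<1$.

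Given the identity, conclusion (a) follows from joint continuity of $\phi_i'(R)$ in $(R,R_i)$ on the compact set $I'\times I''_\twospin$, where it equals $-\omega^*_\twospin$ at the center $(x^*,x^*)$; conclusion (b) follows analogously from continuity of $\omega(R)$ on $I'$. Hence for any fixed $0<C_{\mathrm{min}}<\omega^*_\twospin<C_{\mathrm{max}}<1$ there is $\tau_0>0$ so that both $|\phi_i'|$ and $|\omega|$ land in $[C_{\mathrm{min}},C_{\mathrm{max}}]$ on the respective sets whenever $\tau<\tau_0$. For the invariance (c), I decompose $|\phi_i(R)-x^*|\leq|\phi_i(R)-\phi_i(x^*)|+|\phi_i(x^*)-x^*|$; the first term is bounded by $C_{\mathrm{max}}\cdot\tau\cdot 2\omega^*_\twospin/(1-\omega^*_\twospin)$ by the mean value theorem and the radius of $I'$, while the second is at most $(\omega^*_\twospin+o(1))\tau$, because differentiating $R_i\mapsto\phi_i(x^*)$ at $R_i=x^*$ yields by the symmetric computation the same value $-\omega^*_\twospin$. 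Taking $C_{\mathrm{max}}$ close enough to $\omega^*_\twospin$ (concretely $C_{\mathrm{max}}<(1+\omega^*_\twospin)/2$ suffices) bounds the total by $\tau\cdot 2\omega^*_\twospin/(1-\omega^*_\twospin)$, matching exactly the radius of $I'$.

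The only obstacle is the bookkeeping of $C_{\mathrm{min}},C_{\mathrm{max}},\tau_0$ so that all three conclusions hold simultaneously, but since every limiting quantity is controlled by the single value $\omega^*_\twospin\in(0,1)$, the constants can be chosen consistently and no further analytic input is required beyond the fixpoint identity. The same scheme adapts verbatim to the Potts case using $\omega^*_\Potts$ in place of $\omega^*_\twospin$ and the analogue derived from Lemma~\ref{lem:Pottsrecursion}.
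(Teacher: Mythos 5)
The paper does not give its own proof of Lemma~\ref{contr}: it is imported verbatim from \cite[Lemma 19]{averages}, so there is no argument in this document to compare yours against. Your proof is nonetheless correct and self-contained: the central identity $\phi_i'(x^*)=-\omega(x^*)=-\omega^*_\twospin$ when $R_i=x^*$ checks out (one can even verify the cleaner form $\phi_i'(R)=-\omega(\phi_i(R))\cdot\phi_i(R)/R$, which specializes to your identity at $R=R_i=x^*$), the continuity-on-a-shrinking-compact argument delivers the two inclusion bounds, and your triangle-inequality split for invariance together with the requirement $C_{\mathrm{max}}<(1+\omega^*_\twospin)/2$ recovers precisely the algebra that makes the radius $2\omega^*_\twospin/(1-\omega^*_\twospin)$ in~\eqref{idef2twospinPotts} the right one. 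This is very plausibly the same line of reasoning as in \cite{averages}, given that the constants in the definitions of $I$ and $I'$ are tailored exactly to this bookkeeping.
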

The iterative construction actually takes place in the following smaller sub-interval of $I'$:
\begin{equation}\label{idef}
I= \Big[x^* - \tau\tfrac{|\omega^*|}{2}, x^* + \tau \tfrac{|\omega^*|}{2}\Big] \subseteq I'.
\end{equation}
The choice of $I$ is to ensure the following ``well-covered'' property for the corresponding maps $\phi_i$, which is ultimately due to the  density of the family ${\cal L}_\twospin$ from Lemma~\ref{ledtwospin}.
\begin{lemma}[\mbox{\cite[Lemma 20]{averages}}]\label{lctg4}
Suppose $\delta<|\omega^*|/100$. For every two points $x_1,x_2\in I$ such that $|x_1-x_2|\leq |\omega^*|\tau \delta/2$
there exists $i$ such that $x_1,x_2\in\phi_{i}(I)$.
\end{lemma}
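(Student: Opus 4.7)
My plan is to find, for the given nearby pair $x_1, x_2 \in I$, an index $i$ for which $\phi_i(x^*)$ falls extremely close to the midpoint $m := (x_1+x_2)/2$, and then observe that $\phi_i(I)$ is a sufficiently long interval around $\phi_i(x^*)$ to cover both points. The positioning of $\phi_i(x^*)$ will draw on the density of $\{R_i\}$ in $I''_{\twospin}$ from Lemma~\ref{ledtwospin}, while the size of $\phi_i(I)$ will come from the uniform lower bound $|\phi_i'| \geq C_{\min}$ in Lemma~\ref{contr}.

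For the positioning step, I would exploit the fact that by \eqref{eq:1qazw2spin}, $\phi_i(x^*) = h(R_i)$ where $h(r) := \frac{1+\gamma\lambda x^* r}{\beta + \lambda x^* r}$ is smooth, strictly monotone near $x^*$, and satisfies $h(x^*) = x^*$ by the defining equation \eqref{ode2spin}. Thus $h$ is bi-Lipschitz in a neighborhood of $x^*$ with some constant $L$, and for $\tau_0$ sufficiently small $h^{-1}(I) \subseteq I''_{\twospin}$. Setting $r_m := h^{-1}(m)$ and applying Lemma~\ref{ledtwospin} then furnishes an index $i$ with $|R_i - r_m| \leq \tau\delta$, yielding $|\phi_i(x^*) - m| \leq L\tau\delta$. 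For the size step, the monotonicity of $\phi_i$ together with $|\phi_i'| \geq C_{\min}$ on $I'$ shows that $\phi_i(I)$ is an interval containing $\phi_i(x^*)$ whose endpoints $\phi_i(x^* \pm \tau|\omega^*|/2)$ lie at distance at least $C_{\min}\tau|\omega^*|/2$ from $\phi_i(x^*)$. Combining the two ingredients by the triangle inequality, for $j \in \{1,2\}$,
\[|\phi_i(x^*) - x_j| \;\leq\; L\tau\delta + |m - x_j| \;\leq\; \bigl(L + |\omega^*|/4\bigr)\tau\delta,\]
and the hypothesis $\delta < |\omega^*|/100$ forces the right-hand side below $C_{\min}\tau|\omega^*|/2$, placing both $x_j$ inside $\phi_i(I)$.

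The main obstacle I anticipate is the arithmetic of constants: to make $(L + |\omega^*|/4)\tau\delta$ fit below $C_{\min}\tau|\omega^*|/2$ under the hypothesis $\delta < |\omega^*|/100$, one needs $L + |\omega^*|/4 \leq 50\, C_{\min}$, i.e., $L$ must be comparable to $C_{\min}$. Since $L$ matches $|h'(x^*)| = |\partial_R \phi_i(R)|_{R=R_i=x^*}$ up to an error $o_{\tau_0}(1)$, and this derivative is sandwiched in $[C_{\min}, C_{\max}]$ by Lemma~\ref{contr}, the inequality reduces to bounding $C_{\max}/C_{\min}$ by an absolute constant. By continuity of $\phi_i'$ around the fixpoint this ratio tends to $1$ as $\tau_0 \to 0$, so further shrinking $\tau_0$ in Lemma~\ref{contr} (and correspondingly $\tau$ from Lemma~\ref{ledtwospin}) closes the gap. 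A secondary technical point, the inclusion $h^{-1}(I) \subseteq I''_{\twospin}$, is handled by the same tightening since $|h'(x^*)| > 0$.
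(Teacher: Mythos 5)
Your proof is correct and follows the natural strategy that the setup in Section~\ref{sec:densefixpoint} is designed for: use the density of effective fields (Lemma~\ref{ledtwospin}) to land $\phi_i(x^*)$ within $O(\tau\delta)$ of the midpoint of $\{x_1,x_2\}$, then use the uniform derivative bound $|\phi_i'|\ge C_{\min}$ on $I$ (Lemma~\ref{contr}) to guarantee $\phi_i(I)$ extends at least $C_{\min}\tau|\omega^*|/2$ on each side of $\phi_i(x^*)$. The only delicate point, ensuring the Lipschitz constant $L$ of $r\mapsto\phi_{(\cdot)}(x^*)$ is comparable to $C_{\min}$, is handled correctly by invoking the symmetry of $\phi_i(R)$ in $(R,R_i)$ and noting that $|h'(x^*)|=|\omega^*|$ (which one can verify directly from \eqref{ode2spin}), so that shrinking $\tau_0$ makes $L$, $C_{\min}$, $C_{\max}$ all concentrate near $|\omega^*|$ and the required inequality $L+|\omega^*|/4<50\,C_{\min}$ holds comfortably.
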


Lemmas~\ref{contr} and~\ref{lctg4} ensure that, for any $x\in I$, a sequence of gadgets can be constructed whose effective field converges (quickly) to $x$. Namely, consider the following process \verb!Build-gadget!$(x,t)$ where $x\in I$ and $t\geq 0$ is an integer. If $t=0$
we return the degenerate gadget. If $t\geq 1$ then we let $\phi_{i}$ be any map such that
$x\in\phi_{i,s}(I)_s$. Let $y = \phi_{i}^{-1}(x)$ and  $\Tc'=$\verb!Build-gadget!$(y,t-1)$. Return the gadget obtained by merging $\Tc'$ and $\Tc_i$ using the operation of Lemma~\ref{lem:2spinrecursion}. 

The point behind the process \verb!Build-gadget!$(x,t)$ is that it yields, for arbitrary $x\in I$, a field gadget whose effective field is arbitrary close to $x$, with error that decays exponentially fast with $t$ (using the contraction properties of the $\phi_i$'s). 
\begin{lemma}\label{yyyqw3r4e}
There exist $C,C'>0$ such that for any $x\in I$ and any integer $t\geq 0$
the effective field $R$ of the field gadget returned by \verb!Build-gadget!$(x,t)$
(for any choice of the $\phi_i$'s) satisfies  $|R-x|\leq C C_{\mathrm{max}}^t$.
\end{lemma}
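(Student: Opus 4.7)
The plan is to unroll the recursive definition of \verb!Build-gadget! into an iterated composition of the maps $\phi_i$, and then apply the uniform contraction property of these maps on $I'$ from Lemma~\ref{contr}. Tracking the recursion, a call to \verb!Build-gadget!$(x,t)$ implicitly chooses a sequence of indices $i_1,\ldots,i_t\in [L_{\twospin}]$ together with auxiliary target points $x=x^{(0)},x^{(1)},\ldots,x^{(t)}\in I$ satisfying $x^{(s-1)}=\phi_{i_s}(x^{(s)})$. By iteratively applying the merging formula of Lemma~\ref{lem:2spinrecursion} along the recursion, the effective field of the returned gadget equals
\[
R \;=\; \phi_{i_1}\circ\phi_{i_2}\circ\cdots\circ\phi_{i_t}(R^{\mathrm{base}}),
\]
where $R^{\mathrm{base}}$ is the effective field of the base (degenerate) gadget, while by construction
\[
x \;=\; \phi_{i_1}\circ\phi_{i_2}\circ\cdots\circ\phi_{i_t}(x^{(t)}).
\]

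With this representation, the next step is to verify by downward induction on $s$ that every partial iterate $R^{(s)}:=\phi_{i_{s+1}}\circ\cdots\circ\phi_{i_t}(R^{\mathrm{base}})$ lies in $I'$, using the invariance $\phi_i(I')\subseteq I'$ from Lemma~\ref{contr}. Granted the invariance, the Lipschitz bound $|\phi_i'|\leq C_{\mathrm{max}}$ on $I'$ applied at each of the $t$ levels telescopes into
\[
|R-x| \;\leq\; C_{\mathrm{max}}^{\,t}\cdot |R^{\mathrm{base}}-x^{(t)}|.
\]
Since $x^{(t)}\in I$ is bounded and $R^{\mathrm{base}}$ is a fixed scalar, the factor $|R^{\mathrm{base}}-x^{(t)}|$ is bounded by a constant $C$ depending only on the parameters $\beta,\gamma,\lambda$ and the (constant) size of $I$, yielding the claimed bound.

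The main subtlety is establishing the invariance step, i.e.\ that all partial iterates remain in $I'$. The inductive step $\phi_i(I')\subseteq I'$ is free from Lemma~\ref{contr}, so the only genuine concern is the base case of this inner induction, which requires $R^{\mathrm{base}}\in I'$. This can be arranged either by choosing the base (``degenerate'') gadget so that its effective field already lies in $I'$ (e.g.\ by a constant-size preprocessing whose cost is absorbed into the leading constant $C$), or more robustly by exploiting that every $\phi_i$ maps $(0,\infty)$ into the bounded interval $(\gamma,1/\beta)\supseteq I'$: any excursion outside $I'$ can only happen for a bounded number of initial applications, after which the orbit enters $I'$ and the clean contraction regime takes over for the remaining $t-O(1)$ levels, again costing only a constant factor in $C$.
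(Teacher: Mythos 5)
Your main decomposition is correct and matches the intended argument: unrolling the recursion gives
\[
R=\phi_{i_1}\circ\cdots\circ\phi_{i_t}(R^{\mathrm{base}}),\qquad x=\phi_{i_1}\circ\cdots\circ\phi_{i_t}(x^{(t)})\ \ \text{with}\ x^{(t)}\in I\subseteq I',
\]
and, once both starting points lie in $I'$, the invariance $\phi_i(I')\subseteq I'$ together with $|\phi_i'|\leq C_{\mathrm{max}}$ on $I'$ from Lemma~\ref{contr} telescopes to $|R-x|\leq C_{\mathrm{max}}^t\,|R^{\mathrm{base}}-x^{(t)}|\leq C\,C_{\mathrm{max}}^t$ with $C$ the length of $I'$. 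You also correctly identify that the only genuine issue is the base case $R^{\mathrm{base}}\in I'$.

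The gap is in how you treat that base case. Your ``more robust'' fallback argues that, because each $\phi_i$ maps $(0,\infty)$ into $(\gamma,1/\beta)\supseteq I'$, the orbit must enter $I'$ after boundedly many steps. This inference does not follow: $(\gamma,1/\beta)$ is a macroscopic interval, while $I'$ is an $O(\tau)$-neighbourhood of $x^*$, so a bounded image says nothing about entry into $I'$. What actually forces the orbit toward $x^*$ is the attracting-fixed-point structure of the (M\"obius) maps $\phi_i$: all $R_i$ lie in $I''=[x^*-\tau,x^*+\tau]$, so each $\phi_i$ is a small perturbation of $\phi^*$, which has attracting fixed point $x^*$ with multiplier $\omega^*$, $|\omega^*|<1$, and a repelling fixed point on the negative axis; for $\tau$ small this forces geometric convergence of any positive orbit into $I'$ in a bounded number of steps. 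That perturbation argument is not present in your writeup. Your first option---choosing the degenerate gadget so that its effective field is already in $I'$---is the clean route and is in fact what the surrounding framework requires (the families $\mathcal{F}_{x,t}$ in Lemmas~\ref{case1} and~\ref{case2} are evaluated on $I'\times J$, so the base pair $(R^{\mathrm{base}},O^{\mathrm{base}})$ is implicitly assumed to lie there, and Lemma~\ref{ledtwospin} guarantees gadgets with effective field in $I\subseteq I'$ are available); but phrasing this as a ``constant-size preprocessing whose cost is absorbed into $C$'' conflates gadget size with the error constant. More precisely: fix $\Tc_0\in\mathcal{L}_\twospin$ with $R_{\Tc_0}\in I'$, then $C=|I'|$ works, and the choice of $\Tc_0$ affects the gadget size by $O(1)$ but not the exponent or the constant in the bound.
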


For the field gadgets constructed through Lemma~\ref{yyyqw3r4e}, the effective field will always be the interval $I'$, from Lemma~\ref{contr}. It is not hard to see that the observable gaps of the field gadgets constructed using our process also stay restricted to an interval $J$. Let
\begin{equation}\label{eq:obsbounds}
\hat{T}:= \max_{R\in I'} |\theta(R)|, \quad T:= \frac{\hat{T}+\max|O_i|}{1-C_{\mathrm{max}}}, \mbox{ and $J$ be the interval $[-T,T]$.}
\end{equation}
\begin{lemma}\label{lem:obsbounds}
Suppose $R\in I'$ and $O\in J$ then $\psi_i(R,O)\in J$.
\end{lemma}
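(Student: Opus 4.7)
The plan is to unfold the definition of $\psi_i$ and apply the three key bounds already established: that $\phi_i$ maps $I'$ into $I'$, that $|\omega|$ is bounded by $C_{\mathrm{max}}$ on $I'$, and that $|\theta|$ is bounded by $\hat{T}$ on $I'$.

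First I would fix $R \in I'$ and $O \in J$, and observe that by Lemma~\ref{contr} we have $\phi_i(R) \in I'$, so both $|\theta(\phi_i(R))| \leq \hat{T}$ (by the definition of $\hat{T}$) and $|\omega(\phi_i(R))| \leq C_{\mathrm{max}}$ (again by Lemma~\ref{contr}) hold. Combining these with $|O| \leq T$ and $|O_i| \leq \max_j |O_j|$ via the triangle inequality applied to the defining expression $\psi_i(R,O) = \theta(\phi_i(R)) - \omega(\phi_i(R))(O + O_i)$, I get
\[ |\psi_i(R,O)| \;\leq\; \hat{T} + C_{\mathrm{max}}\bigl(T + \max_j|O_j|\bigr). \]

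Finally I would verify that the right-hand side is at most $T$ by plugging in the definition $T = (\hat{T} + \max_j|O_j|)/(1-C_{\mathrm{max}})$, which rearranges to $\hat{T} + \max_j|O_j| = (1-C_{\mathrm{max}})T$. Substituting yields
\[ \hat{T} + C_{\mathrm{max}}\bigl(T + \max_j|O_j|\bigr) \;=\; T - (1-C_{\mathrm{max}})\max_j|O_j| \;\leq\; T, \]
so $\psi_i(R,O) \in [-T,T] = J$, completing the proof. There is no real obstacle here — the lemma is essentially a bookkeeping statement designed to match the definition of $T$, and the contraction estimate $C_{\mathrm{max}} < 1$ from Lemma~\ref{contr} is exactly what makes the geometric-series-style bound closed under the iteration.
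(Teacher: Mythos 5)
Your proof is correct and takes essentially the same approach as the paper's (a direct triangle-inequality bound followed by the algebraic verification that $\hat{T} + C_{\mathrm{max}}(T+\max_j|O_j|) \leq T$). If anything you are slightly more careful than the paper's one-line proof: you explicitly pass through $\phi_i(R)\in I'$ via Lemma~\ref{contr} before bounding $\theta$ and $\omega$, matching the definition of $\psi_i$ in~\eqref{eq:1qazw2spin}, whereas the paper's proof writes $\theta(R)-\omega(R)(O_i+O)$ and leaves that step implicit.
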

\begin{proof}
We have $\psi_i(R,O)  = \theta(R) - \omega(R)( O_i +O)$.  From~\eqref{eq:obsbounds}, we have that $|\psi_i(R,O)|\leq \hat{T} + C_{\mathrm{max}} ( T + \max |O_i|) \leq T$.
\end{proof}

For any $x\in I$ we are going to construct a sequence of families of pairs of maps
${\cal F}_{x,0},{\cal F}_{x,1},\dots$ as follows. In each pair, the first map is from $\mathbb{R}$ to $\mathbb{R}$,
and the second map is from $\mathbb{R}^2$ to $\mathbb{R}$ (similarly to \eqref{eq:1qazw2spin}). Let ${\cal F}_{x,0}$ contain the pair of
maps $(x \mapsto x,(x,y)\mapsto y)$. To construct ${\cal F}_{x,t+1}$ we take every $\phi_i$
such that that $x\in \phi_i(I)$ and every $(f,g)\in {\cal F}_{\phi_i^{-1}(x),t}$ and place into ${\cal F}_{x,t+1}$
the map  
\begin{equation}\label{eq:remark}
(R,O)\mapsto \left( \phi_i(f(R)), \psi_i(f(R),g(R,O)) \right).
\end{equation}
There are two possible ``cases'' for the families ${\cal F}_{x,t}$. In the first case
the gadgets we need are obtained using the \verb!Build-gadget! procedure. 
If that fails (the second case), we will conclude the existence of a family of gadgets with observable gaps that correspond to a continuous
function. Then we will argue that the function must satisfy a certain functional
equation which will lead to a contradiction, and therefore will yield the gadgets we require, using the \verb!Build-gadget! procedure.

\begin{lemma}[Case I, \mbox{\cite[Lemma 23]{averages}}]\label{case1}
Suppose there exists $x\in I$ such that for some $t_1,t_2$ there exist
$(f_1,g_1)\in {\cal F}_{x,t_1}$ and $(f_2,g_2)\in {\cal F}_{x,t_2}$
such that $g_1(I'\times J)$ and $g_2(I'\times J)$ are disjoint. Let $\hat{O}$ be the distance of
$g_1(I'\times J)$ and $g_2(I'\times J)$. 

Then, there is an algorithm which, on input a rational $r\in (0,1/2)$, outputs in time  $poly(\bit(r))$
a pair of field gadgets ${\cal T}_1$ and ${\cal T}_2$, each of maximum degree 3 and size $O(|\log r|)$, such that 
\[|R_{\Tc_1} - x|, |R_{\Tc_2} - x|\leq r, \mbox{ but } |O_{\Tc_1}-O_{\Tc_2}|\geq\hat{O}/2.\]
\end{lemma}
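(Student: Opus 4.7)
The idea is to interpret each pair $(f_i,g_i)\in\mathcal{F}_{x,t_i}$ as a recipe for iteratively merging a base gadget with members of $\mathcal{L}_{\twospin}$, and to execute this recipe on a carefully chosen base gadget produced by the \texttt{Build-gadget} procedure. Unfolding the recursive definition of $\mathcal{F}_{x,t_i}$ extracts, for each $i\in\{1,2\}$, a sequence of indices $j^{(i)}_1,\ldots,j^{(i)}_{t_i}\in[L_\twospin]$ (with $j^{(i)}_1$ the outermost $\phi$ applied at the last merge), from which I set
\[
x_0^{(i)}:=\big(\phi_{j^{(i)}_{t_i}}^{-1}\circ\cdots\circ\phi_{j^{(i)}_1}^{-1}\big)(x).
\]
A straightforward induction on $t_i$ shows that $f_i(x_0^{(i)})=x$ and, using that the recursion requires $x\in\phi_{j^{(i)}_1}(I)$ at each level, that $x_0^{(i)}\in I$.

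Next, I invoke \texttt{Build-gadget}$(x_0^{(i)},T)$ with $T=O(|\log r|)$ large enough to obtain a base gadget $\Tc_0^{(i)}$ whose effective field $R_0^{(i)}$ satisfies $|R_0^{(i)}-x_0^{(i)}|\leq r\,C_{\max}^{-t_i}$ via Lemma~\ref{yyyqw3r4e}; in particular $R_0^{(i)}\in I'$. The same contraction argument applied to the $\psi$-iteration on the observable variable (combined with the absorbing property in Lemma~\ref{lem:obsbounds} and the bound $|\omega(R)|\leq C_{\max}<1$ on $I'$ from Lemma~\ref{contr}) shows that, after possibly enlarging $T$ by a constant, the observable gap $O_0^{(i)}$ of $\Tc_0^{(i)}$ lies in $J$. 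Starting from $\Tc_0^{(i)}$, I then sequentially merge with $\Tc_{j^{(i)}_{t_i}},\Tc_{j^{(i)}_{t_i-1}},\ldots,\Tc_{j^{(i)}_1}$ via Lemma~\ref{lem:2spinrecursion}, obtaining the final gadget $\Tc_i$. Each such merge uses $k=2$ gadgets and adds one new edge at the identified vertex, so the maximum degree stays at $3$ throughout.

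By construction, $R_{\Tc_i}=f_i(R_0^{(i)})$ and $O_{\Tc_i}=g_i(R_0^{(i)},O_0^{(i)})$. Using $f_i(x_0^{(i)})=x$ and the uniform contraction $|\phi_j'(R)|\leq C_{\max}$ on $I'$ from Lemma~\ref{contr}, the chain rule yields
\[
|R_{\Tc_i}-x|=|f_i(R_0^{(i)})-f_i(x_0^{(i)})|\leq C_{\max}^{t_i}|R_0^{(i)}-x_0^{(i)}|\leq r.
\]
For the observable gap the decisive point is that $(R_0^{(i)},O_0^{(i)})\in I'\times J$, so $O_{\Tc_i}\in g_i(I'\times J)$; the disjointness hypothesis then gives $|O_{\Tc_1}-O_{\Tc_2}|\geq\hat O\geq\hat O/2$. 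The size bound $O(|\log r|)$ and polynomial running time follow from $T=O(|\log r|)$ together with the fact that $t_1,t_2$ and the sizes of the gadgets in $\mathcal{L}_\twospin$ are all constants independent of $r$.

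The main technical obstacle in this plan is confirming that after $T=O(|\log r|)$ iterations the base gadget's observable gap truly lies inside $J$ (not merely near it), since the disjointness hypothesis is a statement about $g_i(I'\times J)$, not a slight thickening. This reduces to the fact that the affine iteration $O_{k+1}=\theta(R_k)-\omega(R_k)(O_{j_k}+O_k)$ contracts in $O$ with factor at most $C_{\max}<1$ and converges exponentially to a bounded fixed region, which by the definition of $T$ in~\eqref{eq:obsbounds} is precisely $J$. Once this absorption is settled, the rest of the argument is straightforward chain-rule bookkeeping for the effective-field error and a direct application of the disjointness hypothesis for the observable-gap separation.
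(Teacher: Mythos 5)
The paper does not prove this lemma itself --- it imports it wholesale as \cite[Lemma 23]{averages} --- so there is no paper proof to compare against. Judged on its own merits, your decomposition is the natural one and the skeleton is right: unfold the recursion defining $(f_i,g_i)$ into a sequence of indices $j^{(i)}_1,\ldots,j^{(i)}_{t_i}$, seed with \texttt{Build-gadget} near $x_0^{(i)}=\phi^{-1}_{j^{(i)}_{t_i}}\circ\cdots\circ\phi^{-1}_{j^{(i)}_1}(x)\in I$, and then merge up through $\Tc_{j^{(i)}_{t_i}},\ldots,\Tc_{j^{(i)}_1}$; by construction $(R_{\Tc_i},O_{\Tc_i})=(f_i(R^{(i)}_0),g_i(R^{(i)}_0,O^{(i)}_0))$, and since the forward orbit of $x^{(i)}_0$ under $\phi_{j^{(i)}_{t_i}},\ldots,\phi_{j^{(i)}_1}$ stays in $I$ (by the recursion constraint $\phi_{j_1}^{-1}\circ\cdots\circ\phi_{j_{k-1}}^{-1}(x)\in\phi_{j_k}(I)$) and $\phi_i(I')\subseteq I'$, the chain rule with $|\phi'_j|\leq C_{\max}$ indeed yields $|R_{\Tc_i}-x|\leq C_{\max}^{t_i}|R_0^{(i)}-x_0^{(i)}|\leq r$, and, when $(R_0^{(i)},O_0^{(i)})\in I'\times J$, disjointness gives $|O_{\Tc_1}-O_{\Tc_2}|\geq\hat O\geq\hat O/2$.

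The one genuine gap, which you correctly identify but do not fully close, is showing $O_0^{(i)}\in J$. The absorption argument (Lemma~\ref{lem:obsbounds}) needs $R\in I'$ to invoke $|\omega(R)|\leq C_{\max}$, but \texttt{Build-gadget} starts from the degenerate gadget with effective field $1$, which is generally \emph{not} in $I'$ (since $I'$ is a thin neighborhood of $x^*\neq 1$), and for those first few steps you only have $\omega(R)\in(0,1)$ without a uniform bound away from $1$. You therefore need to argue (a) that $R$ enters $I'$ after $O(1)$ iterations (exponential convergence to $x_0^{(i)}\in I$ by Lemma~\ref{yyyqw3r4e}), (b) that in those first $O(1)$ iterations the observable gap stays bounded by a parameter-dependent constant (which holds because the number of such steps is bounded and $\theta,\omega$ are bounded on the relevant range), and only then (c) invoke the contraction to absorb into $J$. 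The slack factor of $\hat O/2$ in the statement is a strong hint that the original authors preferred the more robust alternative: allow $(R_0^{(i)},O_0^{(i)})$ to be only within $\epsilon$ of $I'\times J$, propagate the Lipschitz error through $g_i$, and choose the error budget so the separation degrades by at most $\hat O/2$. Either route closes the gap, but one of them must be filled in to make the proof complete.
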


\begin{lemma}[Case II, \mbox{\cite[Lemma 24]{averages}}]\label{case2}
Suppose that for every $x$, every $t_1,t_2$ and every two functions
$(f_1,g_1)\in {\cal F}_{x,t_1}$ and $(f_2,g_2)\in {\cal F}_{x,t_2}$
we have that $f_2(I'\times J)$ and $g_2(I'\times J)$ intersect. Then there exists a continuous
function $F:I\rightarrow J$ such that for every $x\in I$, every $\eps>0$ there exists $t_0$ such that for every $t\geq t_0$ and
every $(f,g)\in {\cal F}_{x,t}$ and every $R\in I'$ and every $O\in J$ we have
\begin{equation}\label{close}
|g(R,O) - F(x)|\leq\eps.
\end{equation}
\end{lemma}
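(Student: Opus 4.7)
The plan is to first exploit the uniform contraction estimates of Lemma~\ref{contr} to show that the diameter of $g(I'\times J)$ shrinks uniformly to zero as $t\to\infty$. Writing $d_R^{(t)}$ and $d_O^{(t)}$ for the maximum (over $(f,g)\in\mathcal{F}_{x,t}$ and over $x\in I$) of $\operatorname{diam}(f(I'))$ and $\operatorname{diam}(g(I'\times J))$ respectively, the recursion~\eqref{eq:remark} together with the bound $|\phi_i'|\leq C_{\max}$ and the smoothness of $\theta,\omega$ on the compact interval $I'$ yields
\[
d_R^{(t+1)}\leq C_{\max}\,d_R^{(t)},\qquad d_O^{(t+1)}\leq K\,d_R^{(t)}+C_{\max}\,d_O^{(t)}
\]
for a constant $K$ depending on the Lipschitz constants of $\theta,\omega$ on $I'$ and the a priori bound on $|O_i|+|J|$. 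By induction $d_R^{(t)}\leq C_{\max}^t|I'|$, and telescoping the second inequality gives $d_O^{(t)}=O(t\,C_{\max}^t)$, so both tend to zero exponentially fast.

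With the shrinking of $d_O^{(t)}$ in place, the Case~II hypothesis furnishes the well-definedness of the candidate limit. For any $(f_1,g_1)\in\mathcal{F}_{x,t_1}$ and $(f_2,g_2)\in\mathcal{F}_{x,t_2}$, the intervals $g_1(I'\times J)$ and $g_2(I'\times J)$ meet by hypothesis and have diameter at most $d_O^{(t_1)}$ and $d_O^{(t_2)}$, so any two values taken by $g_1,g_2$ on $I'\times J$ differ by at most $d_O^{(t_1)}+d_O^{(t_2)}$. It follows that for any choice of $(f_t,g_t)\in\mathcal{F}_{x,t}$ and any $(R_t,O_t)\in I'\times J$, the scalar sequence $g_t(R_t,O_t)$ is Cauchy and converges to a limit $F(x)\in J$ independent of the choices; the same inequality gives $|g(R,O)-F(x)|\leq d_O^{(t)}$ for every $(f,g)\in\mathcal{F}_{x,t}$ and every $(R,O)\in I'\times J$, which is precisely~\eqref{close} once $t$ is taken large enough that $d_O^{(t)}\leq\varepsilon$.

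The main obstacle is to prove that $F$ is continuous, because the collection $\mathcal{F}_{x,t}$ varies discontinuously with $x$ (it is determined by which sequences of contractions $\phi_i$ are admissible). The key observation that unlocks continuity is that, once a sequence $(i_1,\ldots,i_t)$ is fixed, the resulting pair $(f,g)$ is the \emph{same} function of $(R,O)$ independently of $x$; only the admissibility requirement that all preimages $y_k:=\phi_{i_k}^{-1}\circ\cdots\circ\phi_{i_1}^{-1}(x)$ lie in $I$ depends on $x$. Hence, to compare $F(x)$ and $F(x')$ for close $x,x'$, I will produce a single sequence of length $t$ that is admissible for both. Using Lemma~\ref{lctg4} iteratively, a common $\phi_{i_k}$ with $y_{k-1},y_{k-1}'\in\phi_{i_k}(I)$ can be picked at each step provided $|y_{k-1}-y_{k-1}'|\leq|\omega^*|\tau\delta/2$; since the lower bound $|\phi_i'|\geq C_{\min}$ from Lemma~\ref{contr} gives $|y_k-y_k'|\leq|x-x'|/C_{\min}^k$, it suffices to take $|x-x'|\leq(|\omega^*|\tau\delta/2)\,C_{\min}^{t-1}$ in order to carry out all $t$ steps and keep $y_t,y_t'\in I$. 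Given $\varepsilon>0$, first choose $t$ with $2d_O^{(t)}<\varepsilon$, then pick any $x'$ within the threshold above; evaluating the common $(f,g)$ at an arbitrary $(R_0,O_0)$ and applying the bound from the second paragraph at both $x$ and $x'$ yields $|F(x)-F(x')|\leq 2d_O^{(t)}<\varepsilon$, proving continuity and completing the proof.
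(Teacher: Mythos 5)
Your proposal is correct. The paper itself does not reproduce a proof of this lemma (it is cited as Lemma~24 of~\cite{averages}), so there is no in-paper argument to compare against verbatim, but your contraction-plus-Cauchy argument is exactly the natural route: the uniform bound $|\phi_i'|\leq C_{\max}$ and Lipschitzness of $\theta,\omega$ on the compact $I'$ give $d_R^{(t)}\leq C_{\max}^t|I'|$ and $d_O^{(t)}=O(tC_{\max}^t)$; the Case~II intersection hypothesis then forces all values $g(R,O)$ with $(f,g)\in\mathcal{F}_{x,t}$ to collapse to a common limit $F(x)\in J$ with error $\leq d_O^{(t)}$, which is precisely~\eqref{close}. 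Your key observation for continuity --- that a composed pair $(f,g)$ depends only on the index sequence $(i_1,\dots,i_t)$ while $x$ only governs admissibility, so that Lemma~\ref{lctg4} combined with $|(\phi_{i_k}^{-1})'|\leq 1/C_{\min}$ lets you choose one sequence admissible for both $x$ and a nearby $x'$ --- is the right device, and the resulting bound $|F(x)-F(x')|\leq 2d_O^{(t)}$ whenever $|x-x'|\leq(|\omega^*|\tau\delta/2)C_{\min}^{t-1}$ in fact yields \emph{uniform} continuity on $I$. Two minor points worth flagging: you should state explicitly that $(R_0,O_0)$ is taken in $I'\times J$ when you evaluate the common map, and that the Lipschitz constants for $\theta,\omega$ are finite because $I'\subseteq(\gamma,1/\beta)$ is a compact interval bounded away from $0$; neither affects the correctness of the argument.
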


The following lemma can be derived from \cite{averages} with minor adaptations.
\begin{lemma}[\mbox{\cite[Lemma 25]{averages}}]\label{lem:auxconstr}
Suppose \emph{Case II} happens, that is, the assumption of Lemma~\ref{case2} is satisfied, and let  $F$ be the continuous function 
guaranteed by Lemma~\ref{case2}. Suppose that there exist $x_1,x_2\in I$ such that the following equation is violated.
\begin{equation}\label{eq:functionFF}
F\left(\tfrac{1+\gamma\lambda x_1 x_2}{\beta+\lambda x_1x_2}\right) = \theta\big(\tfrac{1+\gamma\lambda x_1 x_2}{\beta+\lambda x_1x_2}\big)  - \omega\big(\tfrac{1+\gamma\lambda x_1 x_2}{\beta+\lambda x_1x_2}\big)\big(F(x_1) + F(x_2)\big).
\end{equation}
Then, there is an algorithm that computes rational numbers $x_1,x_2,\hat{R}, \hat{O}\in I$  in constant time, where the constant depends on $\beta,\gamma,\lambda,a,b,c, k$,   such that \eqref{eq:functionFF} is violated. Moreover, the algorithm, on input a rational $r\in (0,1/2)$, outputs in time $poly(\bit(r))$
a pair of field gadgets $\Tc_1$ and $\Tc_2$, each of maximum degree 3 and size $O(|\log r|)$, such that $|R_{\Tc_1} - \hat{R}|\leq r$,
$|R_{\Tc_2} - \hat{R}|\leq r$, but $|O_{\Tc_1}-O_{\Tc_2}|\geq\hat{O}$. 
\end{lemma}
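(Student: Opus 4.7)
The plan is to exploit the fact that the functional equation \eqref{eq:functionFF} is exactly the identity that the $k=2$ observable-gap recursion of Lemma~\ref{lem:2spinrecursion} would enforce if observable gaps were governed by the continuous function $F$. I would therefore construct two gadgets reaching (approximately) the same effective field $\hat{R}=\tfrac{1+\gamma\lambda x_1 x_2}{\beta+\lambda x_1 x_2}$ in two distinct ways, so that the violation of \eqref{eq:functionFF} surfaces as a macroscopic difference between their observable gaps. Concretely, $\Tc_1$ will be the output of \textsf{Build-gadget}$(\hat{R},t)$, which by Lemma~\ref{case2} has observable gap arbitrarily close to $F(\hat{R})$. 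For $\Tc_2$, I first call \textsf{Build-gadget}$(x_1,t)$ and \textsf{Build-gadget}$(x_2,t)$ to obtain sub-gadgets $\Tc'_1,\Tc'_2$ with observable gaps close to $F(x_1),F(x_2)$, and then merge them via the $k=2$ operation of Lemma~\ref{lem:2spinrecursion}; the resulting $\Tc_2$ has effective field close to $\hat{R}$ and observable gap close to $\theta(\hat{R})-\omega(\hat{R})(F(x_1)+F(x_2))$.

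To extract the rationals $x_1,x_2,\hat{R},\hat{O}$ demanded by the statement, I would invoke the continuity of $F$ and the rational structure of $\phi,\theta,\omega$. Define the defect
\[D(y_1,y_2):=F(\phi(y_1,y_2))-\theta(\phi(y_1,y_2))+\omega(\phi(y_1,y_2))\bigl(F(y_1)+F(y_2)\bigr).\]
By hypothesis $D$ is nonzero at some $(y_1^{\circ},y_2^{\circ})\in I\times I$; by continuity the set $\{|D|\geq |D(y_1^{\circ},y_2^{\circ})|/2\}$ is open and meets $\QQ^2\cap(I\times I)$, so I can choose rationals $x_1,x_2\in I$ on which $|D|$ exceeds a positive constant which we take as $\hat{O}$. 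Then $\hat{R}:=\phi(x_1,x_2)\in I'$ is also rational. Since $F,\theta,\omega,\phi$ depend only on $\beta,\gamma,\lambda,a,b,c,k$, these rationals are fixed constants that the algorithm hard-codes and outputs in $O(1)$ time.

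For the quantitative analysis, choosing $t=\Theta(\log(1/r))$ and applying Lemma~\ref{yyyqw3r4e} puts the effective fields of $\Tc_1,\Tc'_1,\Tc'_2$ within $O(C_{\mathrm{max}}^t)=O(r)$ of $\hat{R},x_1,x_2$ respectively; together with the Lipschitz continuity of $\phi$ on $I'$ this yields $|R_{\Tc_i}-\hat{R}|\leq r$. For the observable gaps, Lemma~\ref{case2} combined with the $k=2$ recursion implies $O_{\Tc_1}\to F(\hat{R})$ and $O_{\Tc_2}\to\theta(\hat{R})-\omega(\hat{R})(F(x_1)+F(x_2))$ as $t\to\infty$, so $|O_{\Tc_1}-O_{\Tc_2}|\to |D(x_1,x_2)|\geq\hat{O}$; the residual error is absorbed by choosing $t$ a constant larger. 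The main obstacle I anticipate is making the Case~II convergence uniformly quantitative across the three simultaneous \textsf{Build-gadget} calls, since Lemma~\ref{case2} only guarantees convergence with error parameters depending on $(x,\epsilon)$; one needs to extract a uniform rate from the contraction $C_{\mathrm{max}}$ of Lemma~\ref{contr} and from the uniform boundedness of observable gaps on $J$ given by Lemma~\ref{lem:obsbounds}. Because $x_1,x_2,\hat{R}$ are fixed constants (independent of $r$), a finite additive slack in $t$ suffices, preserving the size bound $O(\log(1/r))$ and the maximum-degree bound of $3$ inherited from the sub-gadgets and the $k\leq 2$ merging steps.
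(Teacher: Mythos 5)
Your construction captures the intended mechanism: compare \texttt{Build-gadget}$(\hat R, t)$, whose observable gap Case~II locks near $F(\hat R)$, against a $k=2$ merge of \texttt{Build-gadget}$(x_1,t)$ and \texttt{Build-gadget}$(x_2,t)$, whose observable gap sits near $\theta(\hat R)-\omega(\hat R)(F(x_1)+F(x_2))$; the violation of~\eqref{eq:functionFF} makes these two values differ by a constant, and the geometric contraction (affine in $O$ with factor $|\omega|\leq C_{\mathrm{max}}<1$ on $I'$, Lemma~\ref{contr}) supplies the quantitative $t=\Theta(\log(1/r))$ and size bounds as you note. The choice of rational $x_1,x_2,\hat R,\hat O$ by continuity, hard-coded non-uniformly into the algorithm, is also the right reading of the statement.

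One point needs fixing. You write $\hat R := \phi(x_1,x_2) \in I'$, but then you invoke \texttt{Build-gadget}$(\hat R,t)$, and \texttt{Build-gadget} is only defined for its first argument in $I$: that is where the well-covered property of Lemma~\ref{lctg4} is available so that at each step some $\phi_i$ satisfies $\hat R\in\phi_i(I)$; on $I'\setminus I$ the inverse step $\phi_i^{-1}(\cdot)$ may not exist within the process. In fact $\hat R\in I$ is implicit in the hypothesis: $F$ is defined only on $I$ (Lemma~\ref{case2}), so the defect $D(y_1,y_2)=F(\phi(y_1,y_2))-\theta(\phi(y_1,y_2))+\omega(\phi(y_1,y_2))\big(F(y_1)+F(y_2)\big)$ is well-defined only when $\phi(y_1,y_2)\in I$, which forces $\phi(y_1^{\circ},y_2^{\circ})\in I$ for your initial witness. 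When you then perturb to a rational pair you must also preserve $\phi(x_1,x_2)\in I$; this needs a small extra argument if $\phi(y_1^{\circ},y_2^{\circ})$ lies on the boundary of $I$. Since $\beta\gamma<1$ the partial derivatives of $\phi$ are nonzero on $I\times I$, so $\phi$ is an open map there, and by continuity of $D$ on $\{(y_1,y_2):\phi(y_1,y_2)\in I\}$ one can first relocate the violation to a pair whose image under $\phi$ lies in the interior of $I$, after which the rational perturbation (and the resulting rational $\hat R\in I$) is safe.
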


Now assume that equation~\eqref{eq:functionFF} is satisfied for all $x_1,x_2\in I$. We are
going to derive a contradiction, thereby showing that ~\eqref{eq:functionFF} must be violated. Suppose $x_1,x_2,x_3\in I$ are such that $x_1 x_2 = x_3 x^*$. Plugging into~\eqref{eq:functionFF}
we obtain that $F$ has to satisfy the following functional equation
\[F(x_1) + F(x_2) = F(x_3) + F(x^*).\]
Using Cauchy's functional equation, the following is shown in \cite{averages}.

\begin{lemma}[\mbox{\cite[Lemma 25]{averages}}]\label{lem:3deecer}
Suppose $F$ is a continuous function on $I$. Suppose that for $x_1,x_2,x_3\in I$ such that
$x_1 x_2 = x_3 x^*$ we have $F(x_1) + F(x_2) = F(x_3) + F(x^*)$. Then there exists $K$ such that for all $x\in I$
we have $F(x) = K\log(x/x^*) + F(x^*)$.
\end{lemma}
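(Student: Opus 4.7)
The plan is to reduce the hypothesis to the additive Cauchy functional equation via a logarithmic change of variables. Specifically, I would set $J := \{y\in\mathbb{R} : x^* e^y \in I\}$, which is an open interval containing $0$, and define $G : J\to\mathbb{R}$ by $G(y) = F(x^* e^y) - F(x^*)$. Continuity of $F$ gives continuity of $G$, and $G(0)=0$. Once we show $G(y)=Ky$ on $J$, undoing the substitution yields $F(x) = K\log(x/x^*) + F(x^*)$ for $x\in I$, which is exactly the claim.

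To translate the hypothesis into an equation for $G$: given $y_1,y_2\in J$ with $y_1+y_2\in J$, let $x_i = x^* e^{y_i}$ for $i=1,2$ and $x_3 = x^* e^{y_1+y_2}$. All three lie in $I$ and satisfy $x_1 x_2 = x_3 x^*$, so by hypothesis $F(x_1)+F(x_2) = F(x_3)+F(x^*)$. Subtracting $2F(x^*)$ from both sides and unwinding the definition of $G$ gives the \emph{local} Cauchy equation
\[G(y_1) + G(y_2) = G(y_1 + y_2)\]
whenever all three arguments lie in $J$.

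The main step is solving this local Cauchy equation under the continuity assumption. I would follow the standard textbook argument: induction gives $G(ny) = nG(y)$ for all positive integers $n$ with $y,2y,\dots,ny \in J$; applying this with $y/n$ in place of $y$ gives $G(y/n) = G(y)/n$; combining yields $G((p/q)y) = (p/q)G(y)$ for rationals $p/q$ such that the argument stays in $J$. Fixing any $y_0\in J\setminus\{0\}$ and setting $K := G(y_0)/y_0$, density of $\mathbb{Q}$ together with continuity of $G$ forces $G(y) = Ky$ throughout $J$.

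The main (minor) obstacle is the bounded domain: unlike the classical Cauchy equation on $\mathbb{R}$, all intermediate points $ny$, $y/n$, $(p/q)y$ need to stay inside $J$. Because $J$ is an open neighborhood of $0$, this is easily handled by choosing small enough base points $y_0$ so that the induction and rational-scaling steps take place entirely within $J$, and then propagating linearity outward by the relation $G(y_1+y_2) = G(y_1) + G(y_2)$ (for small $y_i$'s) to cover all of $J$. This bookkeeping is routine; the conceptual heart of the lemma is the substitution that turns the multiplicative relation $x_1 x_2 = x_3 x^*$ into the additive one $y_1+y_2 = y_3$.
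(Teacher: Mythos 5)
Your proof is correct and follows exactly the approach the paper indicates (the paper itself defers the proof to \cite{averages} with the remark ``Using Cauchy's functional equation''): the logarithmic change of variables $y=\log(x/x^*)$ converts the multiplicative constraint $x_1x_2=x_3x^*$ into $y_1+y_2=y_3$ and the hypothesis into the local Cauchy equation for $G(y)=F(x^*e^y)-F(x^*)$, which under continuity has only linear solutions. One tiny inaccuracy that does not affect the argument: since $I$ is a closed interval centered at $x^*$, the set $J$ is a closed (not open) interval with $0$ in its interior, which is all the domain bookkeeping actually requires.
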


Finally we show that~\eqref{eq:functionFF} has to be violated.

\begin{lemma}\label{final2spin}
A solution of the form $F(x) = K\log(x/x^*) + L$ for some constants $K,L$ cannot satisfy~\eqref{eq:functionFF}.
\end{lemma}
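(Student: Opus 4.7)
The plan is to substitute $F(x) = K\log(x/x^*) + L$ into \eqref{eq:functionFF} and exploit the crucial structural fact that $\phi(x_1,x_2) = (1+\gamma\lambda x_1 x_2)/(\beta+\lambda x_1 x_2)$ depends on $(x_1,x_2)$ only through the product $t := x_1 x_2$. Writing $\psi(t) := (1+\gamma\lambda t)/(\beta+\lambda t)$, the equation reduces to the single-variable identity
\[
K\log\psi(t) + K\omega(\psi(t))\log t \;=\; \theta(\psi(t)) + (K\log x^* - L)\bigl(1+2\omega(\psi(t))\bigr),
\]
which must hold on an open interval of $t$.

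My first step would be to establish $K = 0$ by a transcendence argument. Assume $K \neq 0$; the right-hand side is a rational function of $t$, while differentiating the identity once leaves a residual term $K\omega'(\psi(t))\psi'(t)\log t$ on the left, with every other term being rational in $t$. Since $\psi'(t) = \lambda(\beta\gamma - 1)/(\beta+\lambda t)^2 \neq 0$ (using $\beta\gamma<1$) and $\omega'(R) = (\gamma/R^2 - \beta)/(1-\beta\gamma)$ vanishes only at the isolated point $R = \sqrt{\gamma/\beta}$, the coefficient $\omega'(\psi(t))\psi'(t)$ is non-zero on an open subinterval. Transcendence of $\log t$ over the field of rational functions on that subinterval then forces $K = 0$.

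With $K = 0$, $F \equiv L$, and the equation collapses to $\theta(R) = L\bigl(1+2\omega(R)\bigr)$ as an identity in $R$ on the interval $\psi(I\cdot I)$. Multiplying by $1-\beta\gamma$ and using the explicit Laurent expansions $\omega(R)(1-\beta\gamma) = 1+\beta\gamma - \beta R - \gamma/R$ and $\theta(R)(1-\beta\gamma) = (-a(1+\beta\gamma) + (b-c)\beta\gamma) + \beta(a-b)R + \gamma(a+c)/R$, matching the coefficients of $R$, $R^{-1}$, and the constant term yields $L = (b-a)/2$, $L = -(a+c)/2$, and a constant-term relation. The first two imply $c = -b$; substituting into the third and simplifying gives $(a-3b)(1-\beta\gamma) = 0$, and since $\beta\gamma < 1$ we conclude $a = 3b$.

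The main obstacle is the exceptional parametrisation $(a,b,c) = b(3,1,-1)$ with $L = -b$, for which $\theta(R) = L(1+2\omega(R))$ holds identically and the $k=2$ analysis alone does not contradict the functional equation. To close the argument I would invoke the more general $k$-ary recursion of Lemma~\ref{lem:2spinrecursion}: a constant limit $F \equiv L$ must also be compatible with the $k = 3$ combine, which gives the analogous identity $\theta(R) = L(1+3\omega(R))$ on an interval of $R$-values. Subtracting from the $k=2$ identity produces $L\,\omega(R) \equiv 0$ on an interval, and since $\omega$ is a non-constant rational function, this forces $L = 0$. Then $\theta(R) \equiv 0$ on an interval, and matching Laurent coefficients as before yields $a = b = c = 0$, contradicting the non-triviality of the observable and completing the proof.
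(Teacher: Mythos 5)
Your proposal diverges from the paper's argument in two ways, one cosmetic and one substantive, and it contains a gap at the final step — though, interestingly, your finer analysis also exposes an issue with the paper's own case analysis.

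The paper's route to $K=0$ is to set $x_1=x^*$, $x_2=y$ in \eqref{eq:functionFF} and differentiate twice in $y$ (equations \eqref{eq:mainfun}--\eqref{eq:qazdwd345}), which produces the constraint $2\beta\gamma\lambda x^*(b+c)\cdot(\text{nonzero rational in }y) = K/y$, forcing $\beta\gamma=K=0$ or $b+c=K=0$. Your substitution $t=x_1x_2$ and the transcendence-of-$\log t$ argument after a single differentiation is a clean alternative; both are correct, and yours is arguably more conceptual. More importantly, after $K=0$ your Laurent matching is done carefully against all three coefficients of $R$, $R^{-1}$ and the constant, and this correctly exposes the exceptional solution $(a,b,c)=b(3,1,-1)$ with $L=-b$. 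One can verify directly that for \emph{every} antiferromagnetic $(\beta,\gamma)$ and every $R$, using $\beta R+\gamma/R = 1+\beta\gamma - (1-\beta\gamma)\omega(R)$, one has $\theta(R) = -b(1+2\omega(R))$, so $F\equiv -b$ satisfies \eqref{eq:functionFF} identically. This is a genuine solution; the paper's statement that $F\equiv(b-a)/2$ ``can only satisfy \eqref{eq:functionFF} for all $x_1,x_2$ when $a=b=c=0$'' (in both case~(i) and case~(ii)) is therefore incorrect as written, since $(a,b,c)=b(3,1,-1)$ also works and is a non-trivial observable whenever $\beta\ne\gamma$ or $\lambda\ne 1$ (it falls into case~(iv) of Definition~\ref{def:observable} only when $\beta=\gamma,\lambda=1$). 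Flagging this to the authors would be worthwhile.

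However, the mechanism you propose to dispose of the exceptional case — invoking the $k=3$ combine to obtain $\theta(R)=L(1+3\omega(R))$ and subtracting — does not go through in the paper's framework. First, there is a structural problem: the merge of Lemma~\ref{lem:2spinrecursion} identifies the $k$ roots into a single vertex $u$ and adds the edge $\{\rho,u\}$, so $u$ has degree $k+1$; with $k=3$ the resulting gadget has a degree-$4$ vertex, violating the maximum-degree-$3$ guarantee that Theorem~\ref{thm:observablegadget} (and its downstream use in the reduction) requires. Second, the functional equation \eqref{eq:functionFF} that the limiting $F$ satisfies is derived — via the families $\mathcal{F}_{x,t}$ in Lemma~\ref{case2} and the \verb!Build-gadget! procedure — \emph{only} for $k=2$ binary merges; composing two binary merges does not produce the $k=3$ relation $\theta(R)-3\omega(R)L$, and there is no lemma in the framework that would give you the $k=3$ analogue of \eqref{eq:functionFF} for free. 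So the last paragraph of your proposal is not a proof; it is an appeal to an identity whose availability you have not established (and which, because of the degree bound, appears not to be available at all in the required form). Closing the exceptional case will need a different idea — for instance, exhibiting a concrete pair of depth-bounded rooted trees whose effective fields agree to any precision but whose observable gaps for $(3,1,-1)$ differ, or arguing structurally from the identity $3|\sigma|+m_0(\sigma)-m_1(\sigma)=|E|+\sum_v(3-d(v))\sigma(v)$, which shows the observable is not degree-blind on trees.
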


\begin{proof}
We will plug-in the solution $F(x) = K\log(x/x^*) + L$ into~\eqref{eq:functionFF} with $x_1=x^*$ and $x_2=y$. Let 
\[r(y)=\frac{1+\gamma\lambda x^* y}{\beta+\lambda x^* y},\mbox{ and }\hat{\theta}(R)=\theta(R)+a \cdot \omega(R)=- b \frac{\beta(R-\gamma)}{1-\beta \gamma}+c\frac{\gamma/R-\beta \gamma}{1-\beta\gamma}.\] 
Recall also that $\omega(R)= \frac{1+\beta \gamma-\beta R-\gamma/R}{1-\beta \gamma}$.

 We obtain
\begin{equation}\label{eq:mainfun}
K\log r(y) - K\log x^* + L = \hat{\theta}(r(y))-\omega(r(y))(K\log(y) - K\log(x^*)+a+  2 L).
\end{equation}
We will henceforth consider derivatives with respect to $y$. Observe that 
\[(K \log r(y))'=- K \frac{\lambda(1-\beta \gamma) x^*}{(\beta +\lambda x^* y)(1+\gamma\lambda x^* y)}=-\omega(r(y))(K\log(y))',\]
and therefore differentiating \eqref{eq:mainfun}, after cancellations, yields that
\begin{align*}
(\hat{\theta}(r(y)))'=(\omega(r(y)))'
( K\log(y) - K\log(x^*)- 1 +  2 F(x^*))
\end{align*}
We have 
\[\frac{\hat{\theta}'(R)}{\omega'(R)}= \frac{(- b \beta(R-\gamma)+c\gamma(1/R-\beta))'}{(1+\beta \gamma-\beta R-\gamma/R)'}=\frac{-b \beta-c\gamma/R^2}{-\beta+\gamma/R^2}\]
and therefore we obtain that
\begin{equation}\label{eq:qazdwd345}
\frac{-b \beta (r(y))^2-c\gamma}{-\beta (r(y))^2+\gamma}= K\log(y) - K\log(x^*)+  a+2 L.
\end{equation}
Differentiating once more with respect to $y$, we obtain that
\[\frac{2 \beta  \gamma \lambda x^* (b + c)(\beta + \lambda x^* y) (1 + 
   \gamma \lambda x^* y)}{(1 - \beta \gamma) (\beta - 
   \gamma \lambda^2 (x^*)^2 y^2)^2}=\frac{K}{y}\]
which can only be satisfied if (i) $\beta \gamma=K=0$, or (ii) $b+c=K=0$. For case (i), assume w.l.o.g. that $\gamma=0$. From \eqref{eq:qazdwd345}, we obtain $L=(b-a)/2$, and therefore $F(x)=(b-a)/2$. This can only satisfy \eqref{eq:functionFF} for all $x_1,x_2$ when $a=b=0$, in which case the observable is trivial (which is excluded). In case (ii), we may further assume that $\beta \gamma \neq 0$, otherwise we can conclude as before that the observable is trivial. From \eqref{eq:qazdwd345}, we obtain analogously that  $F(x)=(b-a)/2$ which can only satisfy \eqref{eq:functionFF} for all $x_1,x_2$ when $a=b=c=0$, which is again a trivial observable.
\end{proof}

Finally, we combine the various pieces to prove Lemma~\ref{lem:tcplus} and Theorem~\ref{thm:observablegadget}. We start with the proof of Theorem~\ref{thm:observablegadget}.
 
\begin{proof}[Proof of Theorem~\ref{thm:observablegadget}]
There are two cases to consider: either that of Lemma~\ref{case1} or that of Lemma~\ref{case2}. In Lemma~\ref{case1} applies, we are done. If on the other hand Lemma~\ref{case2} applies, by Lemmas~\ref{lem:3deecer} and~\ref{final2spin}, we obtain that \eqref{eq:functionFF} is violated, and hence Lemma~\ref{case2} yields the desired algorithm. Note that the observable gaps returned by these algorithms all lie in the interval $J$, cf. Lemma~\ref{lem:obsbounds}, which is bounded by absolute constants, finishing the proof. 
\end{proof}
\begin{proof}[Proof of Lemma~\ref{lem:tcplus}]
The lemma largely follows from the same proof as Theorem~\ref{thm:observablegadget}, the only extra ingredient that is needed is to lower-bound the difference between $R_{\Tc_\pl}$ and $R_{\Tc_\mi}$, which can be done by invoking the $\verb!Build-gadget!(x,t)$ procedure in Lemmas~\ref{case1} and~\ref{case2} for $x=\hat R+r/10$ and $x=\hat R+r/2$. The fact that we can ensure that $\hat R\neq 1$ follows from the assumption that at least one of $\beta\neq \gamma$ and $\lambda\neq 1$ holds.
\end{proof}

We can also give the proof of Lemma~\ref{lem:suscgadget} for the Potts model.
\begin{proof}[Proof of Lemma~\ref{lem:suscgadget}]
This is analogous to the proof of Theorem~\ref{thm:observablegadget}. The functional equation has the same form and the analogue of \eqref{eq:functionFF} is given by
\begin{equation}\label{eq:functionFF2}
F\left(\tfrac{1+\hat\gamma\hat\lambda x_1 x_2}{\hat\beta+\hat\lambda x_1x_2}\right) = \theta\big(\tfrac{1+\hat\gamma\hat\lambda x_1 x_2}{\hat\beta+\hat\lambda x_1x_2}\big)  - \omega\big(\tfrac{1+\hat\gamma\hat\lambda x_1 x_2}{\hat\beta+\hat\lambda x_1x_2}\big)\big(F(x_1) + F(x_2)\big),
\end{equation}
where $\omega(\cdot),\theta(\cdot)$ are the functions in Lemma~\ref{lem:Pottsrecursion}. It remains to establish that \eqref{eq:functionFF2} cannot have a solution of the form $F(x) = K\log(x) + L$ (note that for Potts, the value of $x^*$ is equal to 1). This follows by the same argument as in Lemma~\ref{final2spin}. Namely fix $x_1$ and plug in $y=x_2$. With $r(y)=\frac{1+\hat\gamma\hat\lambda x_1y}{\hat\beta+\hat\lambda x_1 y}$, we obtain
\begin{equation}\label{eq:mainfunpotts}
K\log r(y) + L = \theta(r(y))-\omega(r(y))(K\log(y) +K\log x_1+ 2 L).
\end{equation}
We will consider again derivatives with respect to $y$, and reuse the observation that
\[(K \log r(y))'=- K \frac{\lambda(1-\hat\beta \hat\gamma) x_1}{(\hat\beta +\hat\lambda x_1 y)(1+\hat\gamma\hat\lambda x_1 y)}=-\omega(r(y))(K\log(y)+K\log x_1+2L)'.\]
Differentiating \eqref{eq:mainfunpotts} therefore yields after cancellations that
\begin{align*}
(\theta(r(y)))'=(\omega(r(y)))'
( K\log(y)+K\log x_1+2L)
\end{align*}
We have 
\[\frac{\theta'(B)}{\omega'(B)}= \frac{2 \beta (\hat \beta \hat \gamma-1) (B^2 + q-1)}{(\beta-1) (B^2 \hat \beta - 
  \hat \gamma) (\beta + q-1)}\]
and therefore 
\[\frac{2 \beta (\hat \beta \hat \gamma-1) ((r(y))^2 + q-1)}{(\beta-1) ((r(y))^2 \hat \beta - 
  \hat \gamma) (\beta + q-1)}= K\log(y) + K\log x_1+ 2 L.\]
Differentiating once more with respect to $y$ yields that $K$ must depend on $y$ which is a contradiction.
\end{proof}

\end{document}